\setlist[itemize]{noitemsep, topsep=0pt}
\setlist[enumerate]{itemsep=5pt, topsep=5pt, leftmargin=25pt}
\newtheorem{theorem}{Theorem}
\definecolor{verylightblue}{rgb}{0.7,0.8,1}
  {\begin{mdframed}[backgroundcolor=verylightblue]\begin{theorem}}%
  {\end{theorem}\end{mdframed}}
\definecolor{verylightgray}{gray}{0.95}
  {\begin{mdframed}[backgroundcolor=verylightgray]\begin{proof}}%
  {\end{proof}\end{mdframed}}
\newtheorem{lemma}{Lemma}
\definecolor{verylightred}{rgb}{1,0.8,0.8}
  {\begin{mdframed}[backgroundcolor=verylightred]\begin{lemma}}%
  {\end{lemma}\end{mdframed}}
\newtheorem{proposition}{Proposition}
  {\begin{mdframed}[backgroundcolor=verylightblue]\begin{proposition}}%
  {\end{proposition}\end{mdframed}}
\theoremstyle{definition}
\theoremstyle{remark}
\newtheorem*{rep@theorem}{\rep@title}
\newcommand{\newreptheorem}[2]
{\newenvironment{rep#1}[1]
{\def\rep@title{#2 \ref{##1}} \begin{rep@theorem}}%
 {\end{rep@theorem}}}
\newcommand{\figref}[1]{Figure~\ref{fig:#1}}
\newcommand{\secref}[1]{Section~\ref{sec:#1}}
\newcommand{\lemref}[1]{Lemma~\ref{lem:#1}}
\newcommand{\propref}[1]{Proposition~\ref{prop:#1}}
\newcommand{\thmref}[1]{Theorem~\ref{thm:#1}}
\newcommand{\eqnref}[1]{\eqref{eqn:#1}}
\newcommand{\PP}[1]{\textnormal{Pr}\!\left\{{#1}\right\}} 
\newcommand{\EE}[1]{\mathbb{E}\left[{#1}\right]} 
\newcommand{\EEst}[2]{\mathbb{E}\left[{#1}\ \middle| \ {#2}\right]} 
\newcommand{\PPst}[2]{\text{Pr}\!\left\{{#1}\ \middle| \ {#2}\right\}} 
\def\R{\mathbb{R}}
\newcommand{\ignore}[1]{}
\newcommand{\thedate}{\today}
\newcommand{\theauthor}{}
\newcommand{\thetitle}{SAFFRON: an adaptive algorithm for online control\\ of the false discovery rate}
\date{\thedate}
\author{\theauthor}
\title{\thetitle}
\newcommand{\nulls}{\mathcal{H}^0}
\newcommand{\fdp}{\textnormal{FDP}}
\newcommand{\fdr}{\textnormal{FDR}}
\newcommand{\mfdr}{\textnormal{mFDR}}
\newcommand{\fdphat}{\widehat{\fdp}}
\newcommand{\One}[1]{{\bf{1}}\left\{{#1}\right\}}
\def\N{\mathbb N}
\def\F{\mathcal{F}}
\def\cR{\mathcal{R}}
\def\LORD{\mathrm{LORD}}
\def\SAFFRON{\mathrm{SAFFRON}}
\def\BH{\mathrm{BH}}
\def\StBH{\mathrm{StBH}}
\def\pihat{\widehat{\pi_0}}
\newcommand{\dotfrac}[2]{
\mathchoice
{\ooalign{$\genfrac{}{}{0pt}{0}{#1}{#2}$\cr\leavevmode\cleaders\hb@xt@ .22em{\hss $\displaystyle\cdot$\hss}\hfill\kern\z@\cr}}
{\ooalign{$\genfrac{}{}{0pt}{1}{#1}{#2}$\cr\leavevmode\cleaders\hb@xt@ .22em{\hss $\textstyle\cdot$\hss}\hfill\kern\z@\cr}}
{\ooalign{$\genfrac{}{}{0pt}{2}{#1}{#2}$\cr\leavevmode\cleaders\hb@xt@ .22em{\hss $\scriptstyle\cdot$\hss}\hfill\kern\z@\cr}}
{\ooalign{$\genfrac{}{}{0pt}{3}{#1}{#2}$\cr\leavevmode\cleaders\hb@xt@ .22em{\hss $\scriptscriptstyle\cdot$\hss}\hfill\kern\z@\cr}}
}
\newcommand{\defn}{\ensuremath{:\, =}}
\long\def\@makecaption#1#2{
        \vskip 0.8ex
        \setbox\@tempboxa\hbox{\small {\bf #1:} #2}
        \parindent 1.5em  
        \dimen0=\hsize
        \advance\dimen0 by -3em
        \ifdim \wd\@tempboxa >\dimen0
                \hbox to \hsize{
                        \parindent 0em
                        \hfil 
                        \parbox{\dimen0}{\def\baselinestretch{0.96}\small
                                {\bf #1.} #2
                                } 
                        \hfil}
        \else \hbox to \hsize{\hfil \box\@tempboxa \hfil}
        \fi
        }
\begin{document}

\author{ Aaditya Ramdas, ~ ~ Tijana Zrnic, ~ ~ Martin J. Wainwright, ~
  ~ Michael I. Jordan\\ Departments of Statistics and EECS, University
  of California, Berkeley\\ {\small \texttt{$\{$aramdas, tijana,
      wainwrig, jordan$\}$ @eecs.berkeley.edu}} } \maketitle

\begin{abstract}   
In the online false discovery rate (FDR) problem, one observes a
possibly infinite sequence of $p$-values $P_1,P_2,\dots$, each testing
a different null hypothesis, and an algorithm must pick a sequence of
rejection thresholds $\alpha_1,\alpha_2,\dots$ in an online fashion,
effectively rejecting the $k$-th null hypothesis whenever $P_k \leq
\alpha_k$. Importantly, $\alpha_k$ must be a function of the past, and
cannot depend on $P_k$ or any of the later unseen $p$-values, and must
be chosen to guarantee that for any time $t$, the FDR up to time $t$
is less than some pre-determined quantity $\alpha \in (0,1)$. In this
work, we present a powerful new framework for online FDR control that
we refer to as ``SAFFRON''. Like older alpha-investing (AI) algorithms,
SAFFRON starts off with an error budget, called alpha-wealth, that it
intelligently allocates to different tests over time, earning back
some wealth on making a new discovery. However, unlike
older methods, SAFFRON's threshold sequence is based on a novel
estimate of the alpha fraction that it allocates to true null
hypotheses. In the offline setting, algorithms that employ an estimate
of the proportion of true nulls are called adaptive methods, and
SAFFRON can be seen as an online analogue of the famous offline
Storey-BH adaptive procedure. Just as Storey-BH is typically more
powerful than the Benjamini-Hochberg (BH) procedure under
independence, we demonstrate that SAFFRON is also more powerful than
its non-adaptive counterparts, such as LORD and other generalized
alpha-investing algorithms. Further, a monotone version of the
 original AI algorithm is recovered as a special case of SAFFRON,
 that is often more stable and powerful than the original. Lastly, the derivation
 of SAFFRON provides a novel template for deriving new online FDR rules.
\end{abstract}



\section{Introduction}

It is now commonplace in science and technology to make thousands or
even millions of related decisions based on data analysis. As a
simplified example, to discover which genes may be related to
diabetes, we can formulate the decision-making problem in terms of
hypotheses that take the form ``gene X is not associated with
diabetes,'' for many different genes X, and test for which of these
null hypotheses can be confidently rejected by the data.  As first
identified by Tukey in a seminal 1953 manuscript
\cite{tukey1953problem}, the central difficulty when testing a large
number of null hypotheses is that several of them may appear to be
false, purely by chance. Arguably, we would like the set of rejected
null hypotheses $\cR$ to have high \emph{precision}, so that most
discovered genes are indeed truly correlated with diabetes and further
investigations are not fruitless. Unfortunately, separately
controlling the false positive rate for each individual test actually
does not provide any guarantee on the precision. This motivated the
development of procedures that can provide guarantees on an error
metric called the false discovery rate (FDR) \cite{BH95}, defined as:
\begin{align*}
\fdr \equiv \EE{\fdp(\cR)} = \EE{\frac{|\nulls \cap \cR|}{|\cR|}},
\end{align*}
where $\nulls$ is the unknown set of truly null hypotheses, and
$0/0\equiv0$. Here the FDP represents the ratio of falsely rejected
nulls to the total number of rejected nulls, and since the set of
discoveries $\cR$ is data-dependent, the FDR takes an expectation over
the underlying randomness. The evidence from a hypothesis test can
typically be summarized in terms of a $p$-value, and so offline
multiple testing algorithms take a set of $p$-values $\{P_i\}$ as
their input, and a target FDR level $\alpha \in (0,1)$, and produce a
rejected set $\cR$ that is guaranteed to have $\fdr \leq \alpha$.  Of
course, one also desires a high recall, or equivalently a low false
negative rate, but without assumptions on many uncontrollable factors
like the frequency and strength of signals, additional guarantees on
the recall are impossible.

While the offline paradigm previously described is the classical
setting for multiple decision-making, the corresponding online problem
is emerging as a major area of its own.  For example, large
information technology companies run thousands of A/B tests every week
of the year, and decisions about whether or not to reject the
corresponding null hypothesis must be made without knowing the
outcomes of future tests; indeed, future null hypotheses may depend on
the outcome of the current test. The current standard of setting all
thresholds $\alpha_k$ to a fixed quantity such as $0.05$ does not
provide any control of the FDR. Hence, the following hypothetical
scenario is entirely plausible: a company conducts $1000$ tests in one
week, each with a target false positive rate of $0.05$; it happens to
make $80$ discoveries in total of which $50$ are accidental false
discoveries, ending up with an FDP of $5/8$.  Such uncontrolled error
rates can have severe financial and social consequences.

The first method for online control of the FDR was the alpha-investing algorithm of \citet{foster2008alpha}, later
extended to generalized alpha-investing (GAI) algorithms by
\citet{aharoni2014generalized}. Recently, \citet{javanmard2016online}
proposed variants of GAI algorithms that control the FDR (as opposed
to the modified FDR controlled in the original
paper~\cite{foster2008alpha}), including a new algorithm called
LORD. The GAI++ algorithms by \citet{RYWJ17} improved the earlier GAI
algorithms (uniformly), and the improved LORD++ (henceforth LORD)
method arguably represents the current state-of-the-art in online
multiple hypothesis testing.

The current paper's central contribution is the derivation and
analysis of a powerful new class of online FDR algorithms called
``SAFFRON'' (\underline{S}erial estimate of the \underline{A}lpha
\underline{F}raction that is \underline{F}utilely \underline{R}ationed
\underline{O}n true \underline{N}ull hypotheses). As an instance of
the GAI framework, the SAFFRON method starts off with an error budget,
referred to as \emph{alpha-wealth}, that it allocates to different
tests over time, earning back some alpha-wealth whenever it makes a
new discovery. However, unlike earlier work in the online setting,
SAFFRON is an adaptive method, meaning that it is based on an estimate
of the proportion of true nulls.  In the offline setting, adaptive
methods were proposed by Storey~\cite{Storey02,Storey04}, who showed
that they are more powerful than the Benjamini-Hochberg (BH)
procedure~\cite{BH95} under independence assumptions; this advantage
usually increases with the proportion of non-nulls and the signal
strength. Thus, the SAFFRON method can be viewed as an online analogue
of Storey's adaptive version of the BH procedure. As shown in
\figref{mean3}, our simulations show that SAFFRON demonstrates the
same types of advantages over its non-adaptive counterparts, such as
LORD and alpha-investing. Furthermore, the ideas behind SAFFRON's
derivation can provide a natural template for the design and analysis
of a suite of other adaptive online methods.

\begin{figure}[H]
\centerline{\includegraphics[width=0.35\textwidth]{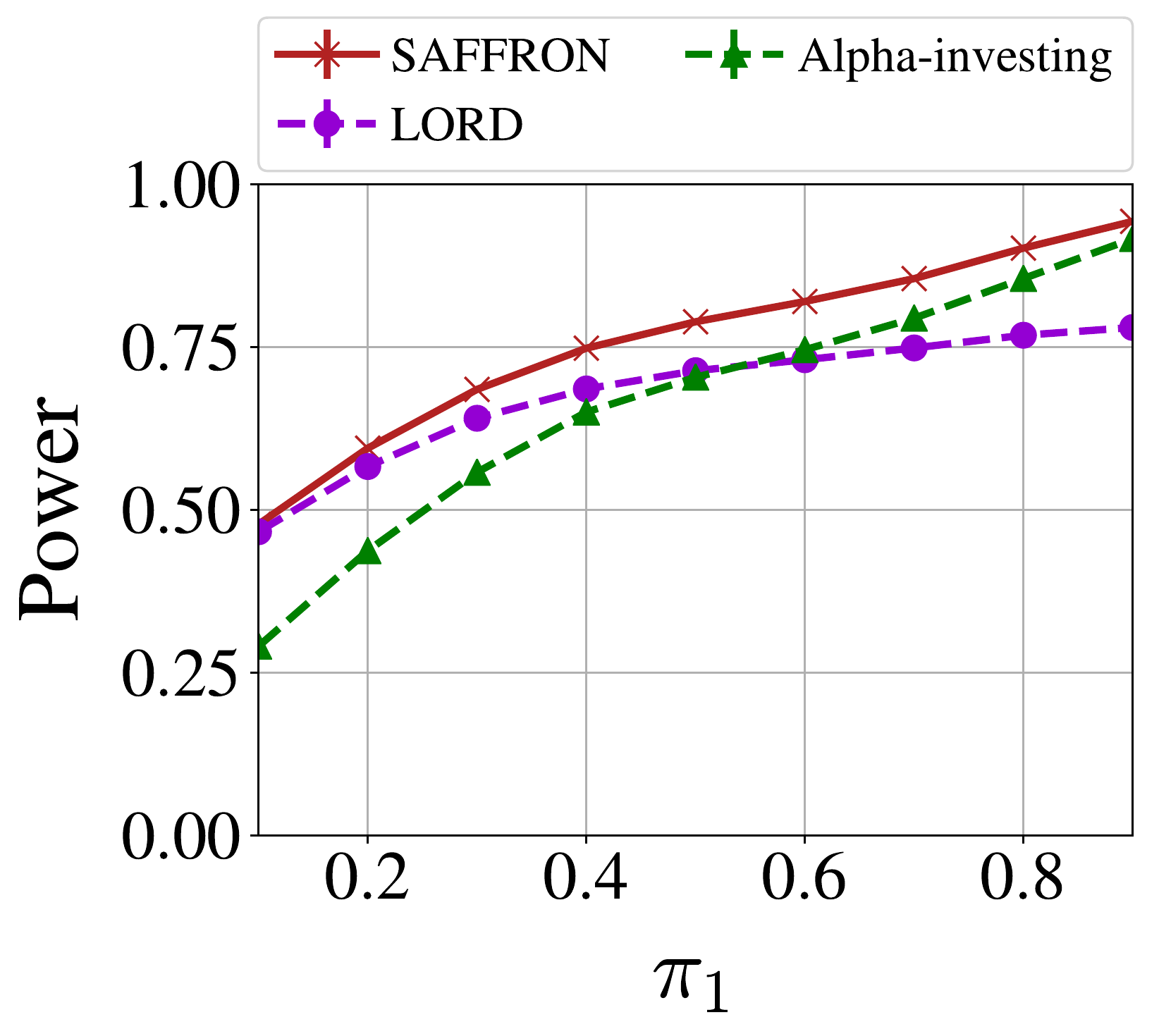}
\includegraphics[width=0.35\textwidth]{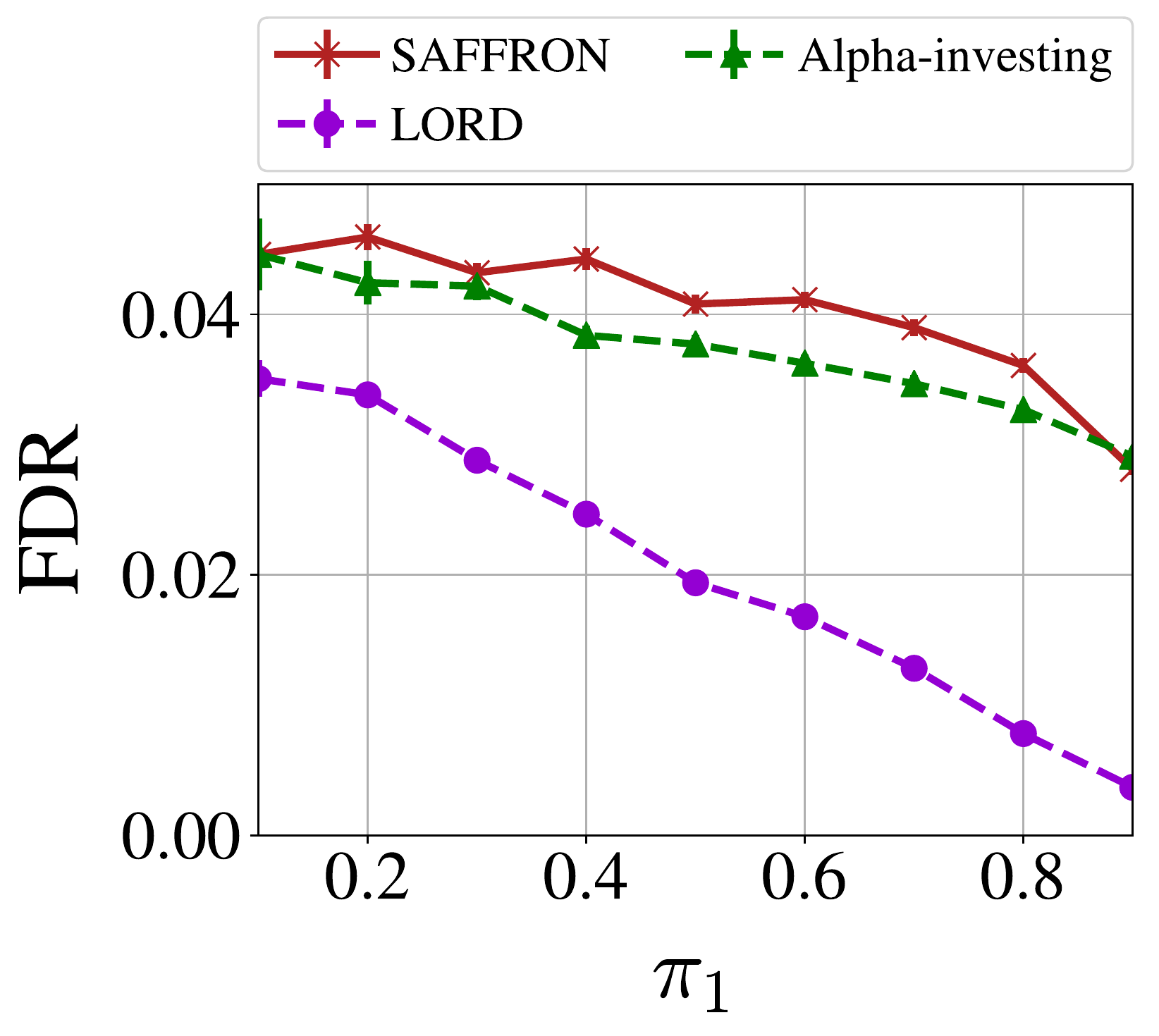}}
\caption{Statistical power and FDR versus fraction of non-null hypotheses $\pi_1$ for SAFFRON, LORD and alpha-investing at target FDR level $\alpha = 0.05$.
 The $p$-values are drawn as $P_i = \Phi(-Z_i)$, where $\Phi$ is the standard Gaussian CDF, and $Z_i \sim N(\mu_i,1)$, where nulls have $\mu_i=0$ and non-nulls have $\mu_i\sim N(3,1)$.} 
\label{fig:mean3}
\end{figure}

The rest of this paper is organized as follows. In \secref{derivation}, we derive the SAFFRON algorithm from first principles, leaving the proof of a central technical lemma for \secref{lemma}. In \secref{sims}, we investigate the practical choice of tuning parameters, and  demonstrate the effectiveness of our recommended choice using simulations. We provide proofs of the results of this paper in \secref{proofs}, and at the end present a short summary in \secref{disc}.


\section{Deriving the SAFFRON algorithm}
\label{sec:derivation}

Before deriving the SAFFRON algorithm, it is useful to recap a few
concepts. By definition of a $p$-value, if the hypothesis $H_i$ is truly
null, then the corresponding $p$-value is stochastically larger than the
uniform distribution (``super-uniformly distributed,'' for short),
meaning that:
\begin{equation}
\label{eqn:pvalue}
\text{If the null hypothesis $H_i$ is true, then } \PP{P_i \leq u}
\leq u \text{ for all } u\in[0,1].
\end{equation}
For any online FDR procedure, let the rejected set after $t$ steps be
denoted by $\cR(t)$. More precisely, this set consists of all $p$-values
among the first $t$ ones for which the indicator for rejection is
equal to 1; i.e., $R_j \defn \One{P_j \leq \alpha_j} = 1$, for all
$j\leq t$.  While we have already defined the classical FDP and FDR in
the introduction, several authors, including \citet{foster2008alpha},
have considered a modified FDR, defined as:
\begin{align}
\mfdr(t) & \defn \frac{\EE{|\nulls \cap \cR(t)|}}{\EE{|\cR(t)|}}.
\end{align}
In the sequel, we provide guarantees for both $\mfdr$ and $\fdr$. Our
guarantees on $\mfdr$ hold under the following weakening of
\eqref{eqn:pvalue}.  Define the filtration formed by the sequence of
sigma-fields $\F^t \defn \sigma(R_1, \ldots, R_t)$, and let $\alpha_t
\defn f_t(R_1, \ldots, R_{t-1})$, where $f_t$ is an arbitrary function of the
first $t-1$ indicators for rejection. Then, we say that the null
$p$-values are \emph{conditionally super-uniformly distributed} if the
following holds:
\begin{align}
\label{eqn:superuniformity-cond1}
\text{If the null hypothesis $H_i$ is true, then } \PPst{P_t \leq
  \alpha_t}{\F^{t-1}} \leq \alpha_t .
\end{align}


\subsection{An oracle estimate of the FDP  and a naive overestimate}

To understand the motivation behind the new procedure, it is necessary
to expand on an perspective on existing online FDR procedures,
recently suggested by \citet{RYWJ17}.  We begin by defining an
\emph{oracle estimate} of the FDP as:
\begin{align*}
\fdp^*(t) \defn \frac{\sum\limits_{j \leq t, j \in \nulls}
  \alpha_j}{|\cR(t)|}.
\end{align*}
The word oracle indicates that $\fdp^*$ cannot be calculated by the
scientist, since $\nulls$ is unknown. Intuitively, the numerator
$\sum_{j \leq t, j \in \nulls} \alpha_j$ overestimates the number of
false discoveries, and $\fdp^*(t)$ overestimates the FDP, as
formalized in the claim below:
\begin{proposition}
  \label{prop:fdp-oracle}
If the null $p$-values are conditionally super-uniformly
distributed~\eqnref{superuniformity-cond1}, then we have
\begin{enumerate}
\item[(a)] $\EE{\sum\limits_{j \leq t, j \in \nulls} \alpha_j} \geq
  \EE{|\nulls \cap \cR(t)|}$;
\item[(b)] If $\fdp^*(t) \leq \alpha$ for all $t \in \N$, then
  $\mfdr(t) \leq \alpha$ for all $t \in \N$.
\end{enumerate}
Further, if the null $p$-values are independent of each other and of the
non-nulls, and $\{\alpha_t\}$ is a monotone function of past rejections,
then: 
\begin{enumerate}
\item[(c)] $\EE{\fdp^*(t)} \geq \EE{\fdp(t)} \equiv \fdr(t)$ for all
  $t \in \N$;
\item[(d)] The condition $\fdp^*(t) \leq \alpha$ for all $t \in \N$
  implies that $\fdr(t) \leq \alpha$ for all $t \in \N$.
\end{enumerate}
\end{proposition}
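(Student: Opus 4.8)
The plan is to dispatch (a), (b), (d) quickly and concentrate on (c), which is where the independence and monotonicity hypotheses actually do their work. For (a), I would expand $|\nulls \cap \cR(t)| = \sum_{j\le t,\,j\in\nulls} R_j$ with $R_j \defn \One{P_j \le \alpha_j}$ and condition on $\F^{j-1}$. Since $\alpha_j = f_j(R_1,\dots,R_{j-1})$ is $\F^{j-1}$-measurable, the conditional super-uniformity assumption \eqnref{superuniformity-cond1} gives $\EEst{R_j}{\F^{j-1}} = \PPst{P_j \le \alpha_j}{\F^{j-1}} \le \alpha_j$; taking expectations and summing over the null indices yields $\EE{|\nulls \cap \cR(t)|} \le \EE{\sum_{j\le t,\,j\in\nulls}\alpha_j}$. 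This step uses only the filtration-based super-uniformity, not independence.

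For (b), clearing the (nonzero) denominator in the hypothesis $\fdp^*(t) \le \alpha$ gives the pointwise bound $\sum_{j\le t,\,j\in\nulls}\alpha_j \le \alpha\,|\cR(t)|$ on the event that there is at least one rejection, which is all that matters since $|\nulls\cap\cR(t)|$ vanishes when $\cR(t)=\emptyset$ (the empty-set convention being handled in the standard way). Taking expectations, combining with (a), and dividing by $\EE{|\cR(t)|}$ gives $\mfdr(t)\le\alpha$. Part (d) is then immediate once (c) is available: the hypothesis $\fdp^*(t)\le\alpha$ holds pointwise, so $\EE{\fdp^*(t)}\le\alpha$, and (c) supplies $\fdr(t)=\EE{\fdp(t)}\le\EE{\fdp^*(t)}\le\alpha$.

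The real content is (c), and the obstacle is that the shared denominator $|\cR(t)|$ depends on each null's own $p$-value $P_j$, so $\alpha_j$ cannot simply be pulled out of $\EE{\alpha_j/|\cR(t)|}$. My approach is a leave-one-out decoupling. Fixing a null index $j\le t$, I would condition on all of the other $p$-values $\{P_i:i\ne j\}$; by independence the null $P_j$ retains its super-uniform marginal under this conditioning. The key structural observation is that, with the other coordinates frozen, $|\cR(t)|$ depends on $P_j$ \emph{only} through the single indicator $R_j$: each threshold $\alpha_k=f_k(R_1,\dots,R_{k-1})$ sees $P_j$ solely via earlier indicators, so a short induction shows every downstream $R_k$ is a function of $P_j$ through $R_j$ alone. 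Hence conditionally $|\cR(t)|$ takes just two values---$N_1$ when $R_j=1$ and $N_0$ when $R_j=0$---and monotonicity of each $f_k$ in the past rejections forces $N_1\ge N_0$, while $N_1\ge1$ because $j$ is itself rejected in that branch.

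With this reduction the per-index comparison becomes elementary. Writing $q\defn\PP{P_j\le\alpha_j}\le\alpha_j$, the conditional $\fdp(t)$-contribution of index $j$ is $q/N_1$, while its conditional $\fdp^*(t)$-contribution is $\alpha_j\bigl(q/N_1+(1-q)/(N_0\vee1)\bigr)$. Their difference equals $\frac{q(\alpha_j-1)}{N_1}+\frac{\alpha_j(1-q)}{N_0\vee1}$, and using $N_0\vee1\le N_1$ followed by $q\le\alpha_j$ this is at least $(\alpha_j-q)/N_1\ge0$; summing over null $j$ and taking the outer expectation gives $\EE{\fdp^*(t)}\ge\fdr(t)$. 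I expect this decoupling to be the main difficulty, since it is exactly where both hypotheses are consumed: independence to condition $P_j$ out of the remaining coordinates, and monotonicity to order the two candidate denominators. Without monotonicity one could have $N_1<N_0$, and the sign of the comparison could reverse, so the assumption is genuinely needed.
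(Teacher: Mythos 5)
Your treatment of (a), (b), and (d) coincides with the paper's proof essentially line for line: condition on $\F^{j-1}$, apply~\eqnref{superuniformity-cond1}, clear the denominator, and deduce (d) from (c). The real divergence is in part (c). The paper does not re-derive the key per-index inequality $\EE{\One{P_j \leq \alpha_j}/|\cR(t)|} \leq \EE{\alpha_j/|\cR(t)|}$; it conditions on $\F^{j-1}$ and invokes the ``super-uniformity lemma'' of Ramdas et al.\ (2017) as a black box (the analogue, for this direction, of the paper's own \lemref{power}). You instead prove that inequality from scratch by a leave-one-out coupling: freeze $\{P_i : i \neq j\}$, note that every downstream rejection then depends on $P_j$ only through $R_j$, so the denominator takes exactly two values $N_1$ and $N_0$ with $N_1 \geq N_0 \vee 1$ forced by monotonicity, and finish with the two-point computation whose algebra is correct (the difference of conditional contributions is at least $(\alpha_j - q)/N_1 \geq 0$ using first $N_0 \vee 1 \leq N_1$ and then $q \leq \alpha_j$). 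This is sound, and it consumes independence and monotonicity in exactly the places the cited lemma does; your branch-enumeration is really the same coupling idea as the lemma's ``hallucinated $p$-value'' proof (set $P_j$ to a constant and compare rejection vectors), expressed in a more explicit, elementary form. What the paper's route buys is brevity and generality---the cited lemma allows an arbitrary coordinate-wise non-decreasing denominator $g(R_{1:T})$, which is the template reused for SAFFRON's reverse super-uniformity lemma---whereas your route buys a fully self-contained proof of the proposition that also makes transparent why monotonicity is genuinely needed (without it $N_1 < N_0$ is possible and the comparison can reverse), and it handles the empty-rejection-set convention ($N_0 \vee 1$) more carefully than the paper's informal treatment.
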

To clarify, the word \emph{monotone} means that $\alpha_t$ is a
coordinatewise non-decreasing function of the vector $R_1,\dots,R_{t-1}$.
\propref{fdp-oracle} follows from the results of \citet{RYWJ17}, and
we prove it in \secref{proofs} for completeness.  Even though
$\fdp^*(t)$ cannot be directly calculated and used,
\propref{fdp-oracle} is a useful way to identify what would be ideally
possible. One natural way to convert $\fdp^*(t)$ to a truly empirical
overestimate of $\fdp(t)$ is to define:
\begin{align*}
\fdphat_{\LORD}(t) \defn \frac{\sum_{j \leq t} \alpha_j}{|\cR(t)|}.
\end{align*}
Since it is trivially true that $\fdphat_{\LORD}(t) \geq \fdp^*(t)$,
we immediately obtain that Proposition~\ref{prop:fdp-oracle} also
holds with $\fdp^*(t)$ replaced by $\fdphat_{\LORD}(t)$. The subscript
LORD is used because \citet{RYWJ17} point out that their variant of
the LORD algorithm of \citet{javanmard2016online} can be derived by
simply assigning $\alpha_j$ in an online fashion to ensure that the
condition $\fdphat_{\LORD}(t) \leq \alpha$ is met for all times $t$.


\subsection{A better estimate of the alpha-wealth spent on testing nulls}

The main drawback of $\fdphat_{\LORD}$ is that if the underlying
(unknown) truth is such that the proportion of non-nulls (true
signals) is non-negligible, then $\fdphat_{\LORD}(t)$ is a very crude
and overly conservative overestimate of $\fdp^*(t)$, and hence also of
the true unknown FDP.  With this drawback in mind, and knowing that we
would expect non-nulls to typically have smaller $p$-values, we propose
the following novel estimator:
\begin{align*}
\fdphat_{\SAFFRON(\lambda)}(t) \equiv \fdphat_{\lambda}(t) \defn
\frac{\sum_{j \leq t} \alpha_j \frac{\One{P_j >
      \lambda_j}}{1-\lambda_j}}{|\cR(t)|},
\end{align*}
where $\{\lambda_j\}_{j=1}^\infty$ is a predictable sequence of
user-chosen parameters in the interval $(0,1)$. Here the term
\emph{predictable} means that $\lambda_j$ is a deterministic function
of the information available from time 1 to $j-1$, which will be formalized later, together with additional requirements. 
For simplicity, when $\lambda_j$ is chosen to
be a constant for all $j$, we will drop the subscript and just write
$\lambda$, and we will consider $\lambda=1/2$ as our default
choice. SAFFRON is based on the idea that the numerator of
$\fdphat_\lambda$ is a much better estimator of the quantity $\sum_{j
  \leq t, j \in \nulls} \alpha_j$ than LORD's naive estimate $\sum_{j
  \leq t} \alpha_j$.

So as to provide some intuition for why we expect $\fdphat_\lambda$ to
be a fairly tight estimate of $\fdp^*$, note that $\frac{\One{P_j >
    \lambda_j}}{1-\lambda_j}$ has a unit expectation whenever $P_j$ is
uniformly distributed (null), but would typically have a much smaller
expectation whenever $P_j$ is stochastically much smaller than uniform
(non-null). The following theorem shows that, even though
$\fdphat_{\lambda}(t)$ is not necessarily always larger than
$\fdp^*(t)$, a direct analog of \propref{fdp-oracle} is nonetheless
valid.  In order to state this claim formally, we need to slightly
modify the assumption \eqnref{superuniformity-cond1}. As before,
denote by $R_j \defn \One{P_j \leq \alpha_j}$ the indicator for
rejection, and let $C_j := \One{P_j \leq \lambda_j}$ be the indicator
for candidacy. Accordingly, we refer to the $p$-values for which $C_j
= 1$ as candidates.  Moreover, we let $\alpha_t \defn
f_t(R_1,\dots,R_{t-1},C_1,\dots,C_{t-1})$, where $f_t$ denotes an arbitrary function of the first $t-1$ indicators for rejection and candidacy,
and define the filtration generated from sigma-fields $\F^t \defn
\sigma(R_1,\dots,R_t,C_1,\dots,C_t)$. With respect to this filtration,
we introduce a conditional super-uniformity condition on the null
$p$-values similar to \eqref{eqn:superuniformity-cond1}:
\begin{equation}
  \label{eqn:superuniformity-cond2}
\text{If the null hypothesis $H_i$ is true, then } \PPst{P_t \leq
  \alpha_t}{\F^{t-1}} \leq \alpha_t ,
 \end{equation}
which can be rephrased as:
\begin{align}
\label{eqn:superuniformity-cond3}
\EEst{\dotfrac{\One{P_t > \alpha_t}}{1-\alpha_t}}{\F^{t-1}} \geq 1 \geq
\EEst{\dotfrac{\One{P_t \leq \alpha_t}}{\alpha_t}}{\F^{t-1}} .
\end{align}
Note that again marginal super-uniformity \eqnref{pvalue} implies this
condition, provided that the $p$-values are independent.
\begin{theorem}\label{thm:fdp-saff}
If the null $p$-values are conditionally super-uniformly
distributed~\eqnref{superuniformity-cond2}, then we have:
\begin{enumerate}
\item[(a)] $\EE{\sum\limits_{j \leq t, j \in \nulls} \alpha_j
  \frac{\One{P_j > \lambda_j}}{1-\lambda_j} } \geq \EE{|\nulls \cap
  \cR(t)|}$;
\item[(b)] The condition $\fdphat_\lambda(t) \leq \alpha$ for all $t
  \in \N$ implies that $\mfdr(t) \leq \alpha$ for all $t \in \N$.
\end{enumerate}
Further, if the null $p$-values are independent of each other and of
the non-nulls, $\alpha_t$ and $\lambda_t$ are monotone functions of the vector $R_1,...,R_{t-1},C_1,...,C_{t-1}$, then we additionally have:
\begin{enumerate}
\item[(c)] $\EE{\fdphat_\lambda(t)} \geq \EE{\fdp(t)} \equiv \fdr(t)$
  for all $t \in \N$;
\item[(d)] The condition $\fdphat_\lambda(t) \leq \alpha$ for all $t
  \in \N$ implies that $\fdr(t) \leq \alpha$ for all $t \in \N$.
\end{enumerate}
\end{theorem}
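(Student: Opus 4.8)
The plan is to follow the same logical skeleton as \propref{fdp-oracle}: prove part (a) directly and deduce (b) from it, then prove the harder part (c) and deduce (d) from it. Throughout, I treat the (unknown but deterministic) null set $\nulls$ as fixed, so that all sums over $\{j \le t,\, j \in \nulls\}$ may be manipulated term by term.

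For part (a), I would argue term by term over null indices $j$, showing $\EE{\alpha_j \frac{\One{P_j > \lambda_j}}{1-\lambda_j}} \geq \EE{\One{P_j \le \alpha_j}}$ and summing. Since $\alpha_j$ and $\lambda_j$ are predictable, i.e.\ $\F^{j-1}$-measurable, the tower property gives $\EEst{\alpha_j \frac{\One{P_j > \lambda_j}}{1-\lambda_j}}{\F^{j-1}} = \frac{\alpha_j}{1-\lambda_j}\,\PPst{P_j > \lambda_j}{\F^{j-1}}$. The conditional super-uniformity \eqnref{superuniformity-cond2}, applied at the candidacy threshold $\lambda_j$ (its equivalent form \eqnref{superuniformity-cond3} is precisely the statement that $\EEst{\One{P_j > u}/(1-u)}{\F^{j-1}} \ge 1$ for a predictable level $u$), yields $\PPst{P_j > \lambda_j}{\F^{j-1}} \ge 1 - \lambda_j$, so the displayed conditional expectation is at least $\alpha_j$. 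Applying \eqnref{superuniformity-cond2} at the rejection threshold $\alpha_j$ gives $\EEst{\One{P_j \le \alpha_j}}{\F^{j-1}} = \PPst{P_j \le \alpha_j}{\F^{j-1}} \le \alpha_j$. Chaining the two bounds and taking expectations proves the per-term inequality, and summing over null $j \le t$ gives (a). Part (b) follows quickly: discarding the nonnegative non-null terms gives $\sum_{j \le t} \alpha_j \frac{\One{P_j > \lambda_j}}{1-\lambda_j} \ge \sum_{j \le t,\, j\in\nulls}\alpha_j\frac{\One{P_j>\lambda_j}}{1-\lambda_j}$; if $\fdphat_\lambda(t) \le \alpha$ holds surely then the left side is at most $\alpha\,|\cR(t)|$, so taking expectations and invoking (a) yields $\EE{|\nulls \cap \cR(t)|} \le \alpha\,\EE{|\cR(t)|}$, i.e.\ $\mfdr(t) \le \alpha$.

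Part (c) is the crux, because the denominator $|\cR(t)|$ of $\fdphat_\lambda(t)$ depends on $P_j$ itself, so the clean conditioning used above no longer lets us pull the denominator out. It again suffices to prove, for each null $j \le t$, the term-wise bound $\EE{\frac{\One{P_j \le \alpha_j}}{|\cR(t)|}} \le \frac{1}{1-\lambda_j}\EE{\frac{\alpha_j \One{P_j > \lambda_j}}{|\cR(t)|}}$, since summing over null $j$ (again discarding the nonnegative non-null numerator contributions) bounds $\fdr(t)=\EE{\fdp(t)}$ by $\EE{\fdphat_\lambda(t)}$. To handle the coupling I would use a leave-one-out argument: condition on the $\sigma$-field $\mathcal{G}_{-j}$ generated by all $p$-values except $P_j$. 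Under independence, $\alpha_j$ and $\lambda_j$ are $\mathcal{G}_{-j}$-measurable and $P_j$ is an independent super-uniform draw. Assuming, as in the construction, that $\alpha_j \le \lambda_j$, on the event $\{P_j \le \alpha_j\}$ we have $R_j = C_j = 1$ and $|\cR(t)|$ equals a quantity $r^{+}_j$ determined by $\mathcal{G}_{-j}$ (the count obtained by forcing $R_j = C_j = 1$), while on $\{P_j > \lambda_j\}$ we have $R_j = C_j = 0$ and $|\cR(t)|$ equals the analogous $\mathcal{G}_{-j}$-measurable count $r^{-}_j$. The monotonicity of $\{\alpha_t\}$ and $\{\lambda_t\}$ as coordinatewise non-decreasing functions of the past rejection and candidacy indicators then propagates forward to give $r^{+}_j \ge r^{-}_j$: forcing the extra rejection and candidacy at time $j$ can only raise later thresholds and hence later rejections.

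With this in hand the per-term bound is immediate. Conditioning on $\mathcal{G}_{-j}$, the left side becomes $\PPst{P_j \le \alpha_j}{\mathcal{G}_{-j}}/r^{+}_j \le \alpha_j/r^{+}_j$ by super-uniformity, while the right side becomes $\frac{\alpha_j}{1-\lambda_j}\PPst{P_j > \lambda_j}{\mathcal{G}_{-j}}/r^{-}_j \ge \alpha_j/r^{-}_j$; the inequality $r^{+}_j \ge r^{-}_j$ then gives $\alpha_j/r^{+}_j \le \alpha_j/r^{-}_j$, closing the chain. Taking expectations over $\mathcal{G}_{-j}$ and summing over null $j$ proves (c), with the boundary case $|\cR(t)|=0$ absorbed by the convention $0/0 = 0$ (note $r^{+}_j \ge 1$ whenever $\{P_j \le \alpha_j\}$ contributes). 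Finally, (d) is immediate from (c): if $\fdphat_\lambda(t) \le \alpha$ surely for every $t$, then $\EE{\fdphat_\lambda(t)} \le \alpha$, so $\fdr(t) \le \EE{\fdphat_\lambda(t)} \le \alpha$. I expect the main obstacle to be part (c)'s leave-one-out coupling---specifically, verifying rigorously that monotonicity propagates through the recursive definition of the thresholds to yield $r^{+}_j \ge r^{-}_j$---together with the bookkeeping needed to ensure the super-uniformity condition is invoked at the candidacy level $\lambda_j$ and not merely at $\alpha_j$.
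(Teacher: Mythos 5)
Your proposal is correct and takes essentially the same approach as the paper: parts (a), (b), and (d) coincide with the paper's proof, and your leave-one-out argument for (c) --- conditioning on all $p$-values except $P_j$ and comparing the forced counts $r_j^{+} \geq r_j^{-}$ via monotonicity propagation --- is exactly the mechanism of the paper's ``reverse super-uniformity lemma'' (\lemref{power}), which is proved there by hallucinating the $t$-th $p$-value to $1$ and arguing that the hallucinated rejection vector is dominated coordinatewise by the true one. The only difference is organizational: the paper chains through the intermediate quantity $\EE{\alpha_j/|\cR(t)|}$, using the hallucination to $1$ for one inequality and citing prior work's hallucination to $0$ for the other, whereas you merge both hallucinations into a single direct comparison inside one conditioning step.
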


The proof of this theorem is given in \secref{proofs}, and is based
on a ``reverse super-uniformity lemma'' that is discussed in the next
section.  This lemma, though of a technical nature, may be of
independent interest in deriving new algorithms. The statements on
mFDR control allow SAFFRON to be used in place of LORD in applications
in which $p$-values are not independent, but are conditionally
super-uniformly distributed, such as the MAB-FDR framework (based on
multi-armed bandits) proposed by~\citet{yang2017multi}.


\subsection{The SAFFRON algorithm for constant $\lambda$}
\label{sec:const-lbd}

We now present the SAFFRON algorithm at a high level.  For simplicity,
we consider the constant $\lambda$ setting, which performs well in experiments, though it may be a useful direction for future work to construct good heuristics for time-varying sequences $\{\lambda_j\}_{j=1}^\infty$.
\begin{enumerate}
  \item Given a target FDR level $\alpha$, the user first picks a
    constant $\lambda \in (0,1)$, an initial wealth $W_0 <
    \alpha$, and a positive non-increasing sequence
    $\{\gamma_j\}_{j=1}^\infty$ of summing to one. For example, given
    a parameter $s > 1$, we might pick $\gamma_j \propto j^{-s}$ for
    some $s > 1$.
  \item We use the term ``candidates'' to refer to $p$-values smaller
    than $\lambda$, since SAFFRON will never reject a $p$-value larger
    than $\lambda$. Recalling the indicator for candidacy $C_t \defn
    \One{P_t \leq \lambda}$, and denoting by $\tau_j$ be the time of
    the $j$-th rejection (and setting $\tau_0=0$), define the
    candidates after the $j$-th rejection as $C_{j+} = C_{j+}(t) =
    \sum_{i = \tau_j + 1}^{t-1} C_i$.
  \item SAFFRON begins by allocation $\alpha_1 = \min\{(1-\lambda)\gamma_1
    W_0,\lambda\}$, and then at time $t = 2, 3, \ldots$, it allocates:
\begin{align*}
\alpha_t \defn \min\{ \lambda, \widetilde \alpha_t\}, \text{~ where ~}
\widetilde \alpha_t \defn (1-\lambda)\Big(W_0\gamma_{t - C_{0+}} + (\alpha -
W_0) \gamma_{t-\tau_1 - C_{1+}} + \sum_{j \geq 2} \alpha
\gamma_{t-\tau_j - C_{j+}}\Big).
\end{align*}
\end{enumerate} 
In words, SAFFRON starts off with an alpha-wealth $(1-\lambda)W_0 <
(1-\lambda)\alpha$, never loses wealth when testing candidate
$p$-values, gains wealth of $(1-\lambda)\alpha$ on every rejection
except the first. If there is a significant fraction of non-nulls, and
the signals are fairly strong, then SAFFRON may make more rejections
than LORD.

To clarify, SAFFRON guarantees FDR control for any $\lambda \in (0,1)$
and any chosen sequence $\{\gamma_j\}_{j=1}^\infty$, but the
algorithm's power, or ability to detect signals, varies as a function
of these parameters.  Given the minimal nature of our assumptions,
there is no universally optimal constant or sequence: specifically, we
do not make assumptions on the frequency of true signals, or on how
strong they are, or on their order, all of which are factors that
affect the power. We discuss reasonable default choices in the
experimental section.

\subsection{A monotone update rule for SAFFRON with predictable $\lambda_t$}\label{sec:monotone-lambda}

Here, we present a lemma that is central to the proof of
FDR control for SAFFRON.  We later use this lemma to prove
\propref{fdp-oracle} and \thmref{fdp-saff} in \secref{proofs}.  Let us first recall and
set up some preliminary notation. In what follows, $\alpha_t,
\lambda_t$ are random variables in $(0,1)$ that always satisfy
$\alpha_t \leq \lambda_t$. We denote the indicator for rejection at
the $t$-th step by $R_t \defn \One{P_t \leq \alpha_t}$, and recall
that since only $p$-values smaller than $\lambda_t$ are candidates for
rejection, we had earlier defined the indicator for candidacy as $C_t
\defn \One{P_t \leq \lambda_t}$. If we denote $\bar C_t = 1 - C_t$,
then it is clear that $R_t \bar C_t = 0$, since $R_t$ and $\bar C_t$ cannot both equal one simultaneously. Also let $R_{1:t}
\defn \{R_1,\dots,R_{t}\}$ and $C_{1:t} \defn \{C_1,\dots,C_t\}$.  As
before, we consider the filtration $\F^t \defn
\sigma(R_{1:t},C_{1:t})$. In what follows, we insist that the
sequences $\{\alpha_t\}_{t=1}^\infty$ and $ \{\lambda_t
\}_{t=1}^\infty$ are \emph{predictable}, meaning that they are
functions of the information available from time 1 to $t-1$ only;
specifically, we insist that $\alpha_t,\lambda_t$ are measurable with
respect to the sigma-field $\F^{t-1}$. We will also require that the
$\{\alpha_t\}$ and $\{\lambda_t\}$ are \emph{monotone}, meaning that $\alpha_t =
f_t(R_{1:t-1}, C_{1:t-1})$ and $\lambda_t =
g_t(R_{1:t-1}, C_{1:t-1})$ for some coordinatewise non-decreasing
functions $f_t:~ \{0,1\}^{2(t-1)} \to (0,1), g_t:~ \{0,1\}^{2(t-1)} \to (0,1)$. 
A generalization of our default update rule described in \secref{const-lbd} for time-varying $\lambda_t$ is:
\begin{align}
    \alpha_1 &= (1-\lambda_1)\gamma_1
    W_0, ~ \text{for some } W_0\leq \alpha\\
    t\geq 2: \alpha_t &= (1-\lambda_t)\Big(W_0\gamma_{t - C_{0+}} + (\alpha -
W_0) \gamma_{t-\tau_1 - C_{1+}} + \sum_{j \geq 2} \alpha
\gamma_{t-\tau_j - C_{j+}}\Big).
\end{align}
One simple way to ensure the monotonicity of the above rule for $\alpha_t$, is to insist that $\lambda_t$ is non-decreasing in $\alpha_t$. This is satisfied, for example, if $\{\lambda_t\}$ is a deterministic sequence of constants, or when $\lambda_t = \alpha_t$, as in the case of alpha-investing. We state this formally below, and prove it in \secref{proofs}.

\begin{lemma}
\label{lem:monotone}
If we choose $\lambda_t = h_t(\alpha_t)$ for some non-decreasing function $h_t:(0,1)\rightarrow (0,1)$, such that $h_t(x)\geq x$, the update rule (6-7) guarantees that $\alpha_t$ and $\lambda_t$ are monotone.
\end{lemma}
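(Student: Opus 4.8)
The plan is to decouple the self-reference created by $\lambda_t = h_t(\alpha_t)$ from the combinatorial dependence on the past. I would abbreviate the bracketed quantity in the update rule as
\[
S_t \defn W_0\gamma_{t - C_{0+}} + (\alpha - W_0)\gamma_{t-\tau_1 - C_{1+}} + \sum_{j \geq 2} \alpha \gamma_{t-\tau_j - C_{j+}},
\]
and observe that, crucially, $S_t$ depends \emph{only} on $R_{1:t-1}$ and $C_{1:t-1}$: the rejection times $\tau_j$ are read off from $R_{1:t-1}$, and each count $C_{j+}$ from $R_{1:t-1}$ and $C_{1:t-1}$, while $W_0,\alpha,\{\gamma_j\}$ are fixed and neither $\alpha_t$ nor $\lambda_t$ appears. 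The update rule is then $\alpha_t = (1 - h_t(\alpha_t))\,S_t$. First I would dispose of the self-reference: since $h_t$ is non-decreasing into $(0,1)$, the map $g(x) \defn x/(1 - h_t(x))$ is strictly increasing on $(0,1)$ (strictly increasing numerator times a positive non-decreasing factor), and the defining relation is exactly $g(\alpha_t) = S_t$. Hence $\alpha_t = g^{-1}(S_t)$ is a non-decreasing function of $S_t$, and $\lambda_t = h_t(\alpha_t)$ is non-decreasing in $\alpha_t$. It therefore suffices to prove that $S_t$ is a coordinatewise non-decreasing function of $(R_{1:t-1}, C_{1:t-1}) \in \{0,1\}^{2(t-1)}$.

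To analyze $S_t$ I would rewrite each discount index transparently as $t - \tau_j - C_{j+} = n_j + 1$, where $n_j$ counts the \emph{non-candidates} in the window $(\tau_j, t)$. Since $\gamma$ is positive and non-increasing and every coefficient $W_0,\ \alpha - W_0,\ \alpha$ is non-negative (here $W_0 \leq \alpha$ is used), each summand is a non-negative, non-increasing function of its own $n_j$. The claim then reduces to checking that toggling any single coordinate from $0$ to $1$ does not decrease $S_t$.

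Flipping a candidacy bit $C_i$ with the $R$'s held fixed leaves every $\tau_j$ untouched and decreases $n_j$ by one precisely for the epochs with $\tau_j < i$; each affected summand therefore weakly increases while the others are unchanged, so $S_t$ does not decrease. Flipping a rejection bit $R_i$ from $0$ to $1$ with the $C$'s held fixed is the delicate case, since it inserts a new rejection time and relabels all later epochs, permuting which of $W_0,\ \alpha-W_0,\ \alpha$ multiplies each discount. I would compare the new sum $S_t'$ to $S_t$ epoch by epoch. If the inserted time falls after at least one existing rejection, all pre-existing terms are preserved verbatim under the relabeling and exactly one extra term $\alpha\gamma_{\cdot} > 0$ is created, so $S_t' > S_t$. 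If instead it becomes the new first rejection, the old epoch-$1$ coefficient $\alpha - W_0$ is redistributed between the new epochs $1$ and $2$; the telescoping comparison leaves $S_t' - S_t = (\alpha - W_0)\gamma_{n_1' + 1} + W_0 \gamma_{n_1 + 1} \geq 0$, again using only $W_0 \leq \alpha$ and positivity of $\gamma$.

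Combining the two toggling cases shows $S_t$ is coordinatewise non-decreasing, and by the first paragraph this transfers to $\alpha_t$ and to $\lambda_t = h_t(\alpha_t)$, which establishes monotonicity. I expect the main obstacle to be exactly the rejection-insertion case: one must verify that the re-indexing of epochs never shifts a larger coefficient onto a more heavily discounted term in a way that lowers the total, and the only reason this cannot happen is the ordering $W_0 \leq \alpha$ together with the monotonicity of $\gamma$. Carefully tracking how the $\tau_j$ and $C_{j+}$ shift under the insertion — and confirming that all discount indices $n_j + 1$ remain valid (i.e.\ at least $1$) on the full hypercube — is where the bookkeeping must be done with care.
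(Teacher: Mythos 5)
Your proposal is correct and takes essentially the same route as the paper's proof: both decouple the self-reference by inverting the strictly increasing map $q_t(x) = x/(1-h_t(x))$ (so that $\alpha_t = q_t^{-1}(S_t)$), and then reduce the lemma to coordinatewise monotonicity of the discount sum $S_t$, established using only $W_0 \leq \alpha$ and the positivity and non-increasing property of $\{\gamma_j\}$. The only difference is organizational: you verify monotonicity of $S_t$ by single-bit toggling with a case analysis on where a new rejection is inserted (including the explicit computation $S_t' - S_t = (\alpha - W_0)\gamma_{n_1'+1} + W_0\gamma_{n_1+1} \geq 0$ for the new-first-rejection case), whereas the paper compares two arbitrary ordered configurations directly by matching each term of the smaller sum to a dominating term of the larger one.
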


We note that the assumption of $\lambda_t$ being non-decreasing in $\alpha_t$ is \emph{not} necessary but simply sufficient; indeed, any other choice of monotone $\alpha_t$ and $\lambda_t$ that controls $\fdphat_{\lambda}(t)$ also results in a valid FDR controlling procedure.

\section{Relationship to other procedures}

Here, we compare SAFFRON to existing procedures in the
literature, emphasizing commonalities that allow us to give a unified view of seemingly
disparate algorithms.

\subsection{Alpha-investing (AI)} 

Even though the motivation that we have
presented for SAFFRON relates it to the LORD algorithm, we find it
interesting that the original AI algorithm of
\citet{foster2008alpha} is recovered by choosing $\lambda_j=\alpha_j$
in $\fdphat_\lambda$, and attempting to ensure that
$\fdphat_\lambda(t) \leq \alpha$ for all times $t \in \N$. In order to
see this fact, first note that with this choice of $\lambda_j$, the
indicator $\One{P_j > \lambda_j}$ simply indicates when the $j$-th
hypothesis is not rejected.  Consequently, the numerator of
$\fdphat_\lambda$ reads as $\sum_{j \leq t}
\frac{\alpha_j}{1-\alpha_j} \One{j \notin \cR(t)} $. Hence, ensuring
that $\fdphat_\lambda(t) \leq \alpha$ at all times $t \in \N$, is
equivalent to ensuring that $\sum_{j \leq t}
\frac{\alpha_j}{1-\alpha_j} \One{j \notin \cR(t)} $ never exceeds
$\alpha (|\cR(t)| \vee 1)$, which, in the language of alpha-investing,
is equivalent to ensuring that the algorithm's wealth never becomes
negative.\footnote{Recall that the AI algorithm starts
  off with an alpha-wealth of $\alpha$, reduces its alpha-wealth by
  $\frac{\alpha_j}{1-\alpha_j}$ after tests that fail to reject, and
  increase the wealth by $\alpha$ on rejections.} Just as
\citet{RYWJ17} were able to reinterpret and rederive LORD in terms of
a particular estimate of the FDP, the current work allows us to
reinterpret and rederive AI in terms of SAFFRON's
FDP. 

Nevertheless, despite these  similarities, SAFFRON's update rule for $\alpha_j$ as stated in \secref{monotone-lambda} is different from the update used in AI. Originally \cite{foster2008alpha}, $\alpha_j$ was set to a fraction of the available wealth $W_j$; however, this simple update prevents alpha-investing from being a monotone procedure, meaning that there is no guarantee that $f_j:(R_1,...,R_{j-1})\mapsto\alpha_j$ is a coordinatewise nondecreasing function. For this reason, the original alpha-investing provably controls only $\mfdr$, and not the FDR. However, we may derive a novel monotone version of AI by using $\lambda_j=\alpha_j$ in SAFFRON's update rule from \secref{monotone-lambda}, immediately yielding $\fdr$ control under independence. Simulations indicate that this new monotone SAFFRON-AI algorithm performs comparably to the original non-monotone AI, or sometimes even outperforms it, as demonstrated in the first subplot of \figref{SAFFvsAI}. Further, as a consequence of monotonicity, SAFFRON-AI allocates $\alpha_j$ in a more stable manner compared to the non-monotone AI, as shown in the third subplot of \figref{SAFFvsAI}.

\begin{figure}[h!]
\centerline{\includegraphics[width=0.3\textwidth]{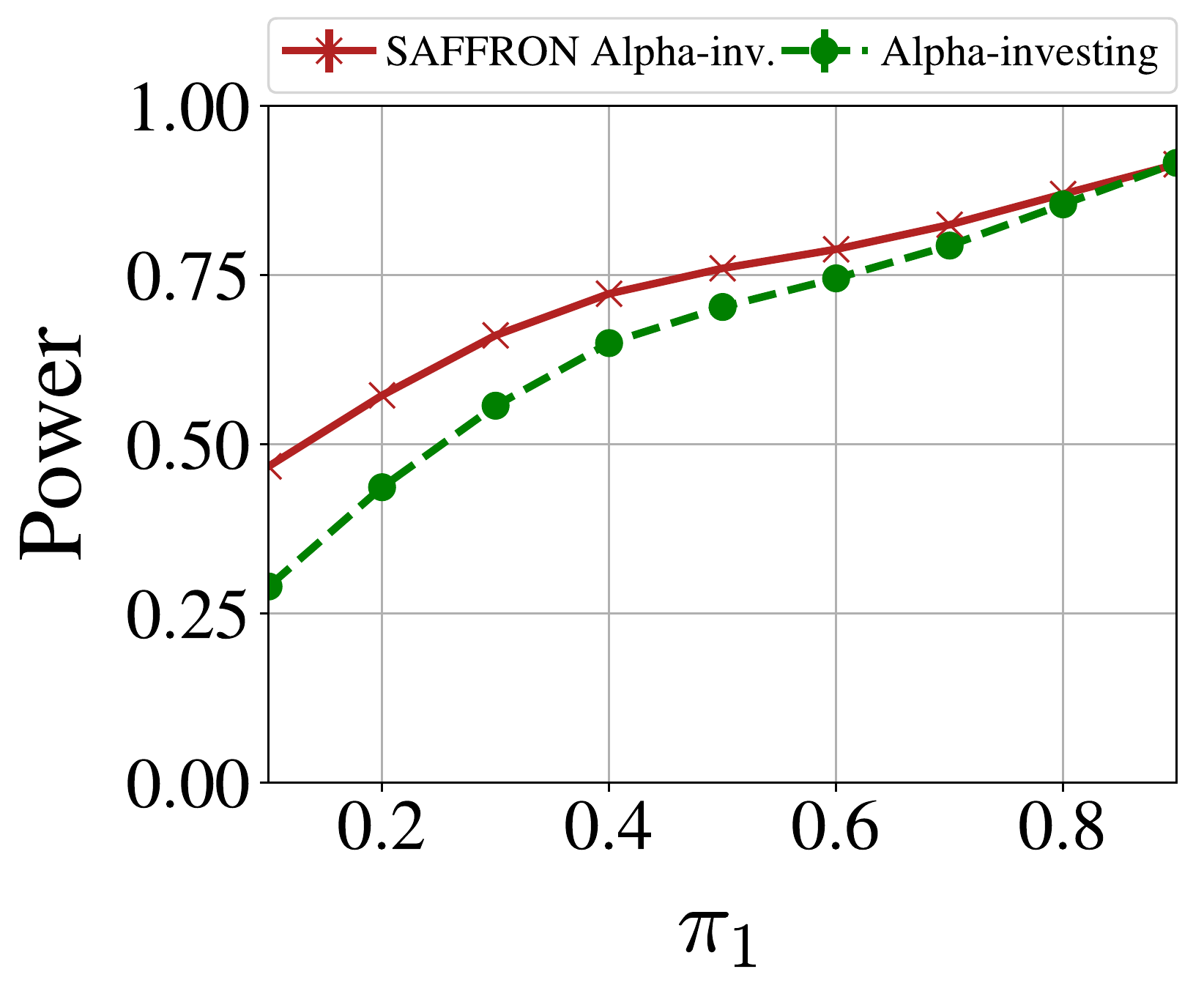}
\includegraphics[width=0.3\textwidth]{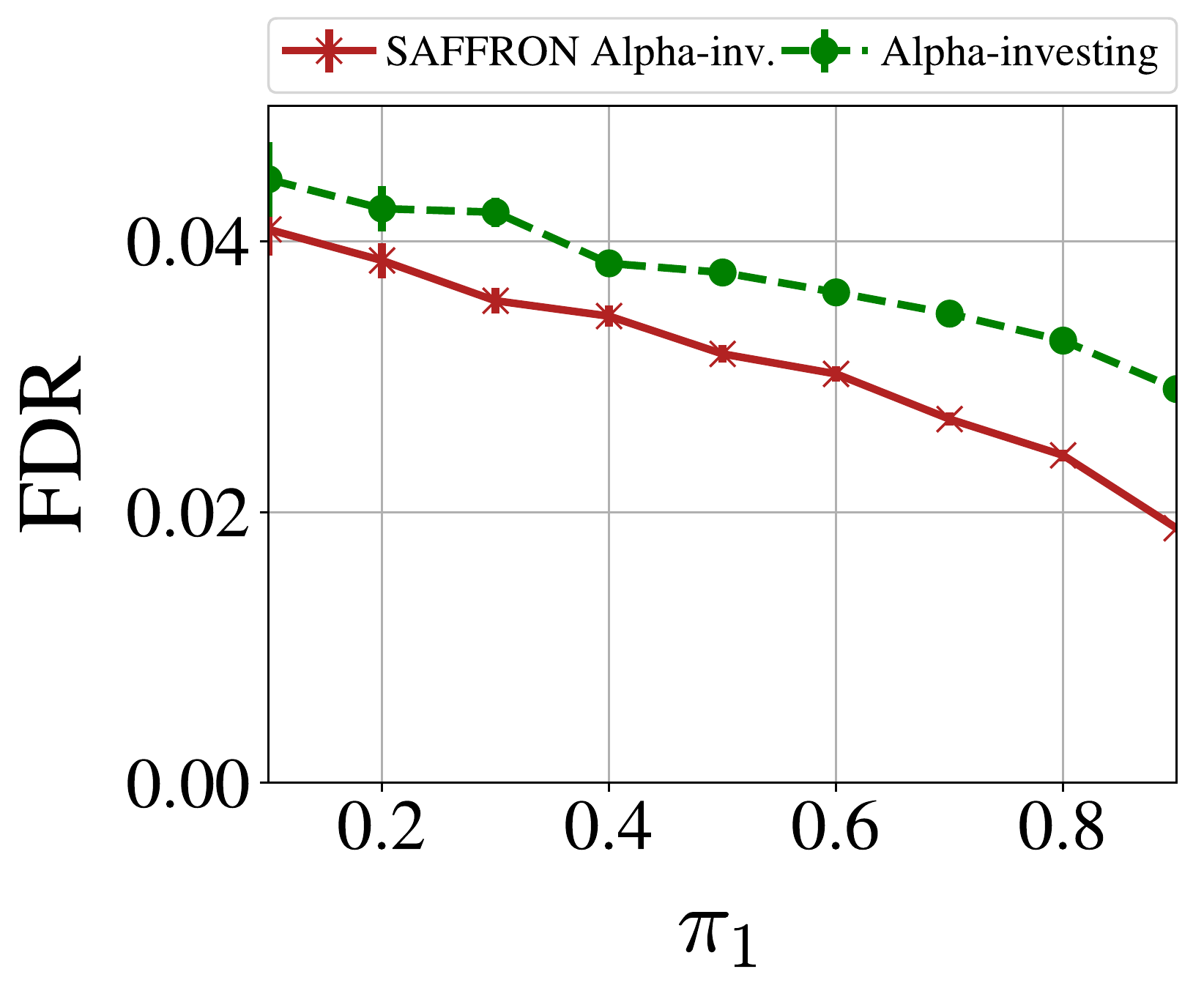}\includegraphics[width=0.3\textwidth]{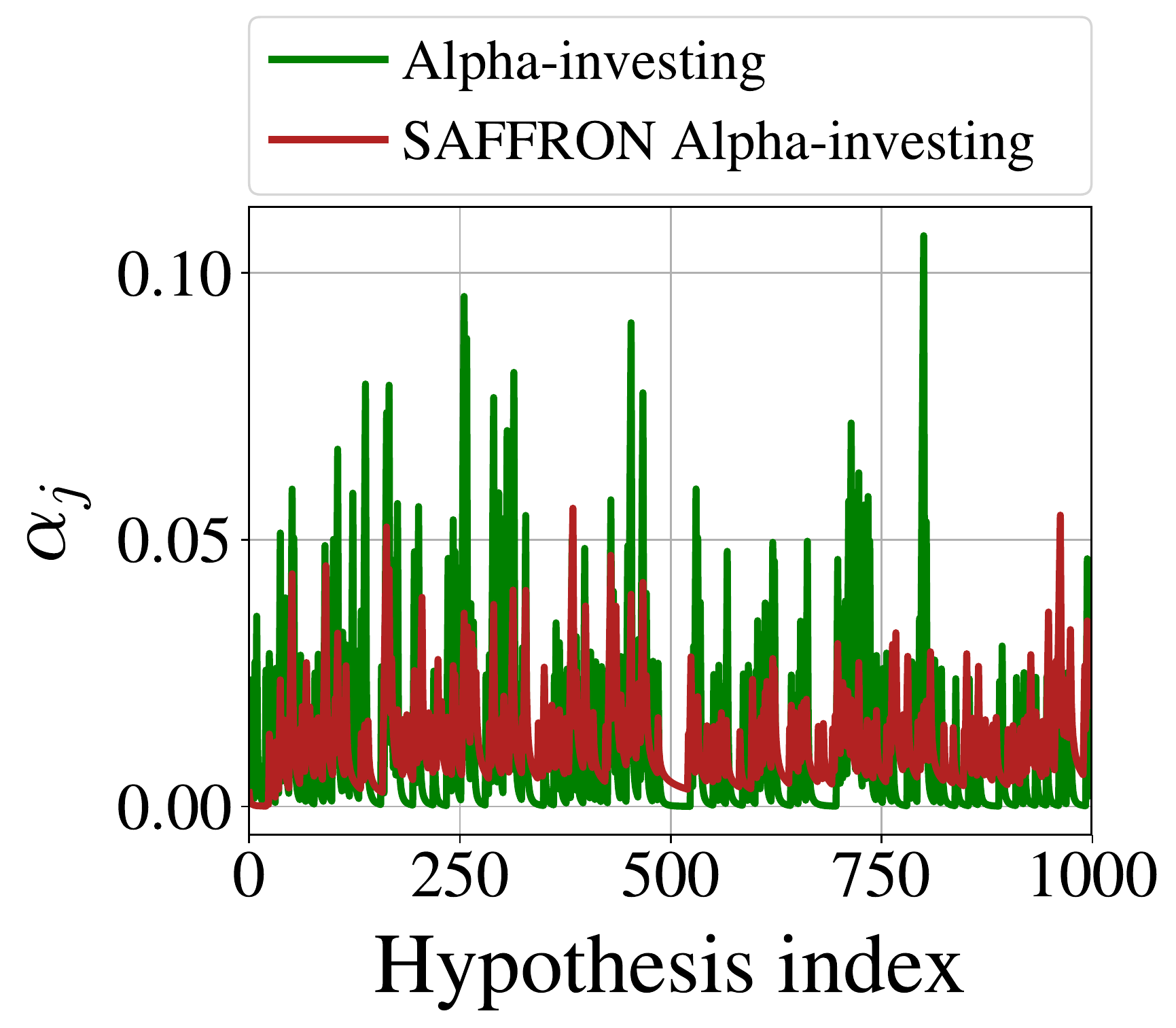}}
\caption{Statistical power and FDR versus fraction of non-null
  hypotheses $\pi_1$ (left), and allocated $\alpha_j$ versus hypothesis index (right), for SAFFRON with $\lambda_j=\alpha_j$ and the original alpha-investing (at target level $\alpha = 0.05$). The observations under the alternative are Gaussian
  with $\mu_i\sim N(3,1)$ and standard deviation 1, and are converted
  into one-sided $p$-values as $P_i=\Phi(-Z_i)$. SAFFRON-AI is sometimes more powerful than AI (first subplot), and also more stable (third subplot), across a variety of choices of tuning parameters for both algorithms.}
\label{fig:SAFFvsAI}
\end{figure}

\subsection{Storey-BH} 

In offline multiple testing, where all $p$-values are
immediately available to the scientist, the Benjamini-Hochberg (BH)
procedure~\cite{BH95} is a classical method for guaranteeing FDR
control. Although the initial motivation for the BH method was
different, it was reinterpreted by Storey et
al. \cite{Storey02,Storey04} in the following manner. Since the small
$p$-values are more likely to be non-null, suppose that one rejects
all $p$-values below some fixed threshold $s \in (0,1)$, meaning that
$\cR(s) = \{i : P_i \leq s \}$. Then, an oracle estimate for the FDP
is given by
\begin{align*}
\fdp^*_{\BH}(s) := \frac{|\nulls| \cdot s}{|\cR(s)|}.
\end{align*}
The numerator is a sensible estimate because the nulls are uniformly
distributed, and hence we would expect about $|\nulls|\cdot s$ many
nulls to be below $s$. This is an ``oracle'' estimate because the
scientist does not know $|\nulls|$. Ideally, one would like to choose
a data-dependent $s$ using the rule
\begin{align*}
s^* \defn \max\{s : \fdp^*_{\BH}(s) \leq \alpha \},
\end{align*}
and then reject the set $\cR(s^*)$. Given $n$ $p$-values, the BH procedure overestimates the
oracle FDP by the empirically computable quantity
\begin{align*}
\fdphat_{\BH}(s) \defn \frac{n \cdot s}{|\cR(s)|},
\end{align*}
and then rejecting the set $\cR(\widehat s_{\BH})$, where $\widehat
s_{\BH} \defn \max\{s : \fdphat_{\BH}(s) \leq \alpha \}$.  On
interpreting the BH procedure in terms of an estimated $\fdp$, Storey
et al.~\cite{Storey02,Storey04} noted that when the $p$-values are
independent, the estimate $\fdphat_{\BH}$ is unnecessarily
conservative. Indeed, when the $p$-values are exactly uniform, it is
known \cite{BY01,ramdas2017unified} to satisfy the stronger bound
$\fdr = \alpha |\nulls| / n$, which demonstrates that BH underutilizes
the FDR budget of $\alpha$ provided to it. Instead, Storey et
al.\ pick a constant $\lambda \in (0,1)$, and calculate
\begin{align*}
\fdphat_{\StBH}(s) \defn \frac{n \cdot s \cdot \pihat}{|\cR(s)|},
\end{align*}
where the unknown proportion of nulls $\pi_0 = |\nulls|/n$ is estimated as
\begin{align*}
\pihat \defn \frac{1 + \sum_{i=1}^n \One{P_i > \lambda} }{n(1-\lambda)}.
\end{align*}
Then, this procedure, which we refer to as ``Storey-BH,'' calculates
$\widehat s_{\StBH} \defn \max\{s : \fdphat_{\StBH}(s) \leq \alpha \}$
and rejects the set $\cR(\widehat s_{\StBH})$ which satisfies the
bound $\fdr \leq \alpha$.  Storey et al.\ demonstrated via simulations
that the Storey-BH procedure is typically more powerful than the BH
procedure, the improvement increasing with the fraction of non-nulls,
and the strength of underlying signal. Procedures such as Storey-BH
are known in the multiple testing literature as \emph{adaptive} procedures,
since they adapt to the unknown proportion of nulls.

Returning to the setting of online FDR, what matters is not the the
proportion of nulls $\pi_0$, but instead a running estimate of the
amount of alpha-wealth that was spent testing nulls thus far; this
difference arises because, unlike the offline setting where all
$p$-values are compared to the same level $\widehat s$, different
$p$-values have to pass different thresholds $\alpha_i$.  In light of
the above discussion, and comparing to the derivation of SAFFRON, it
should be apparent that Storey-BH is to BH as SAFFRON is to
LORD. Indeed, both LORD and BH result from a trivial upper bound on an
oracle estimate of the FDP, and both Storey-BH and SAFFRON
respectively try to better estimate the proportion of nulls or the
amount of alpha-wealth spent on testing nulls.

It is in the above sense that SAFFRON is an adaptive online FDR
method. As mentioned earlier in this section, Foster and Stine's
alpha-investing procedure is a special case of SAFFRON; hence,
strictly speaking, alpha-investing would count as the first adaptive
online FDR procedure (even though the motivation for alpha-investing
in the original paper was entirely different, and did not mention
estimating the FDP, or adaptivity). However, as noted in simulations
by \citet{javanmard2016online}, and re-confirmed in our simulations,
alpha-investing seems \emph{less} powerful than the non-adaptive
algorithm LORD (and LORD++). As shown by simulations in the sequel,
SAFFRON with constant $\lambda=1/2$ is more powerful than LORD across
a variety of signal proportions and strengths, and hence is arguably
the first adaptive algorithm in the online FDR setting that can
compete with the non-adaptive algorithms.

\subsection{Accumulation tests, like SeqStep} 

Note that $\EE{2I(P>1/2)} \geq 1$  for null p-values (with equality when they are exactly uniformly distributed, simply because $\int_0^1 2I(p>1/2)dp = 1$). One may actually use any non-decreasing function $h$ such that $\int_0^1 h(p) dp$ in the formula for $\fdphat_\lambda$. Such \emph{accumulation functions} were studied in the (offline) context of ordered testing~\cite{li2017accumulation}, and may seamlessly be transferred to the online setting considered here, yielding mFDR control using the same proof. In initial experiments, the use of other functions is not advantageous, and under some additional assumptions in the offline ordered testing setting, the aforementioned authors argued that the step function $(1-\lambda)^{-1}I(I > \lambda)$ is asymptotically optimal for power. In this light, SAFFRON can also be seen as an online analog of adaptive SeqStep \cite{lei2016power}, which is a variant of Selective SeqStep \cite{barber2015controlling} and SeqStep \cite{li2017accumulation}.



\section{Numerical simulations}
\label{sec:sims}

In this section, we provide the results of some
numerical experiments that compare the performance of SAFFRON with
current state-of-the-art algorithms for online FDR control, namely the
aforementioned LORD and alpha-investing procedures.\footnote{The code for all simulations described in this section is available at: https://github.com/tijana-zrnic/SAFFRONcode} In particular, for
each method, we provide empirical evaluations of its power while
ensuring that the FDR remains below a chosen value. We consider two
settings, one in which the $p$-values are computed from Gaussian
observations, and another in which the $p$-values under the
alternative are drawn from a beta
distribution~\cite{javanmard2016online}. The following two subsections
separately analyze these experimental settings; in both cases, SAFFRON
outperforms the competing algorithms, with mild dependence on the exact
 choice of sequence $\{\gamma_j\}$. In all our experiments we control the FDR under $\alpha = 0.05$ and
estimate the FDR and power by averaging over 200 independent
trials.

As was previously mentioned, the constant sequence $\lambda_j
= 1/2$ for all $j$ was found to be particularly successful, so this is
our default choice in comparison with prior work and we drop the index for
simplicity. In a separate subsection, however, we also compare the performance of SAFFRON with $\lambda_j
= 1/2$ to the alpha-investing version of SAFFRON, obtained by setting $\lambda_j = \alpha_j$. We do so over different choices of sequence $\{\gamma_j\}$.


\subsection{Testing with Gaussian observations}

We use the simple experimental setup of testing the mean of a Gaussian
distribution with $T = 1000$ components. More precisely, for each
index $i \in \{1, \ldots, T \}$, the null hypothesis takes the form
$H_i: \mu_i=0$. The observations consist of independent Gaussian
variates $Z_i \sim N(\mu_i,1)$, which are converted into one-sided
$p$-values using the transform $P_i = \Phi(-Z_i)$, where $\Phi$ is the
standard Gaussian CDF. The motivation for one-sided conversion lies in A/B testing, where one wishes to detect \emph{larger} effects, not smaller. The parameter $\mu_i$ is chosen according to
the following mixture model:
\begin{align*}
\mu_i = \begin{cases} 0 & \mbox{with probability $1-\pi_1$} \\
F_1 & \mbox{with probability $\pi_1$,}
  \end{cases}
\end{align*}
where the random variable $F_1$ is of the form $N(\mu_c,1)$ for some
constant $\mu_c$. We ran simulations for $\mu_c \in \{2,3\}$, thus
seeing how changing signal strength
 affects the performance of SAFFRON.

In what follows, we compare SAFFRON's achieved power and FDR
to those of LORD and alpha-investing. The
constant infinite sequence $\gamma_j\propto\frac{\log(j\vee2)}{j
  e^{\sqrt{\log j}}}$, where the proportionality constant is
determined so that the sequence sums to one, was shown to be
asymptotically optimal for testing Gaussian means via the LORD method in the paper~\cite{javanmard2016online}. Since SAFFRON loses
wealth only when testing non-candidates whereas LORD loses wealth at
every step, it is expected to behave more conservatively and not use
up its wealth at the same rate, conditioned on both using the same
sequence $\{\gamma_j\}$. For this reason, informally speaking, it can
reuse this leftover wealth, hence the sequence $\{\gamma_j\}$ chosen
for SAFFRON is more aggressive, in the sense that more wealth is
concentrated around the beginning of the sequence. In particular, we choose sequences of the form $\gamma_j\propto j^{-s}$, where the
parameter $s >1$ controls the aggressiveness of the procedure; the
greater the constant $s$, the more wealth is concentrated around small
values of $j$. We also consider these sequences for LORD, thus observing the difference in performance resulting from using a more aggressive sequence in the regime of a finite sequence of $p$-values.

In \figref{SAFFRONmean2} and \figref{LORDmean2} we consider $F_1 =
N(2,1)$, and show how the level of aggressiveness of the sequence
$\{\gamma_j\}$ affects the power and FDR of SAFFRON and LORD
respectively. \figref{comparisonmean2} compares alpha-investing,
SAFFRON and LORD, the latter two using the highest performing sequence
chosen among six possible sequences, in the same testing
scenario. \figref{SAFFRONmean3}, \figref{LORDmean3} and
\figref{mean3} demonstrate these results in the same order
for a similar but somewhat easier testing problem, with $F_1=N(3,1)$. Experiments indicate that increasing the fraction of non-null hypotheses allows SAFFRON to achieve a faster increase of power than LORD, thus performing considerably better than both LORD and the alpha-investing procedure in settings with a great number of non-null observations.

\begin{figure}[H]
\centerline{\includegraphics[width=0.35\textwidth]{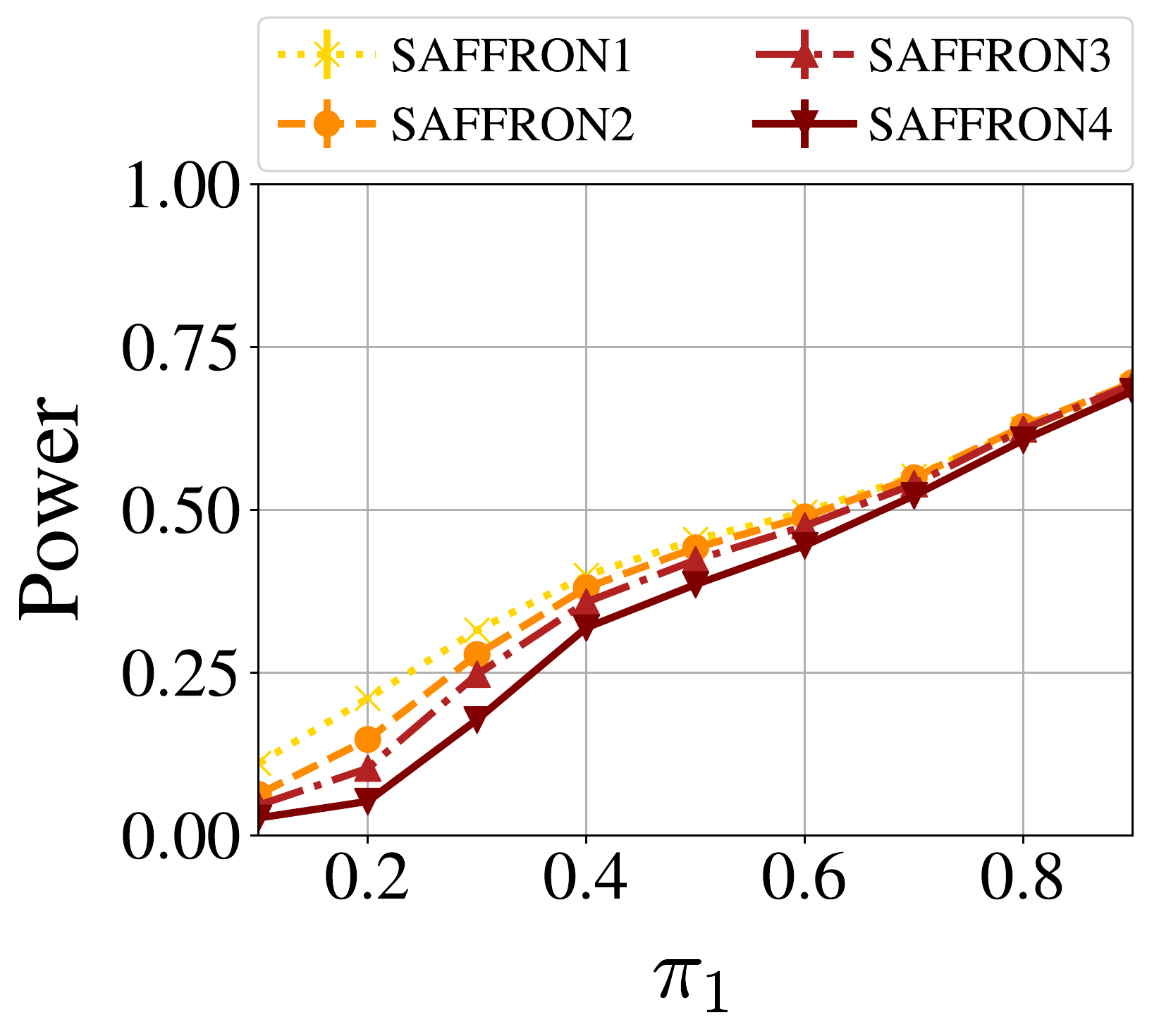}
\includegraphics[width=0.35\textwidth]{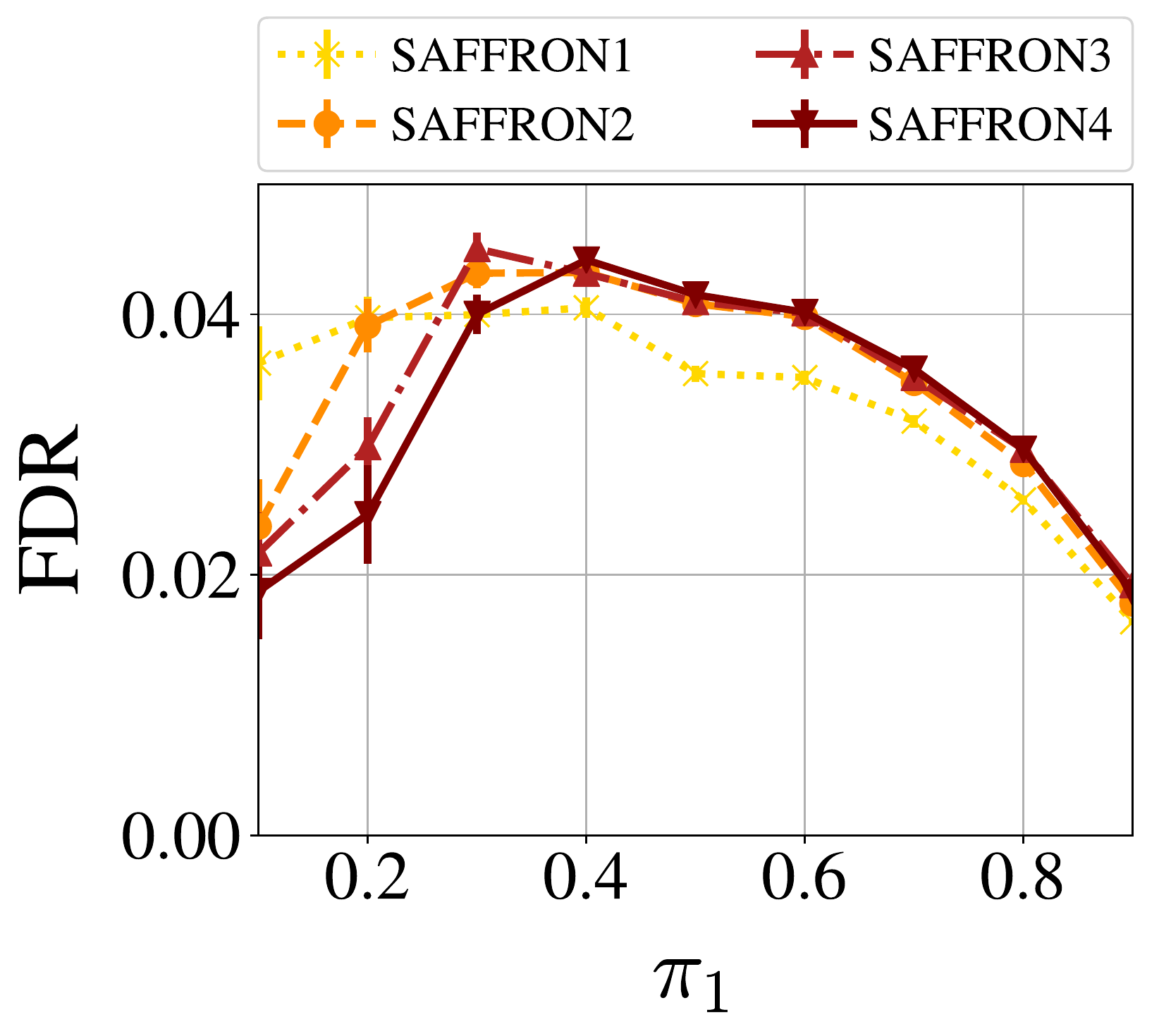}}
\caption{Statistical power and FDR versus fraction of non-null
  hypotheses $\pi_1$ for SAFFRON (at target level $\alpha = 0.05$)
  using four different sequences $\{\gamma_j\}$ of increasing
  aggressiveness. The observations under the alternative are $N(\mu_i,1)$
  with $\mu_i\sim N(2,1)$, and are converted
  into one-sided $p$-values as $P_i=\Phi(-Z_i)$. (See also \figref{comparisonmean2}.)}
\label{fig:SAFFRONmean2}
\end{figure}

\begin{figure}[H]
  \centerline{\includegraphics[width=0.35\textwidth]{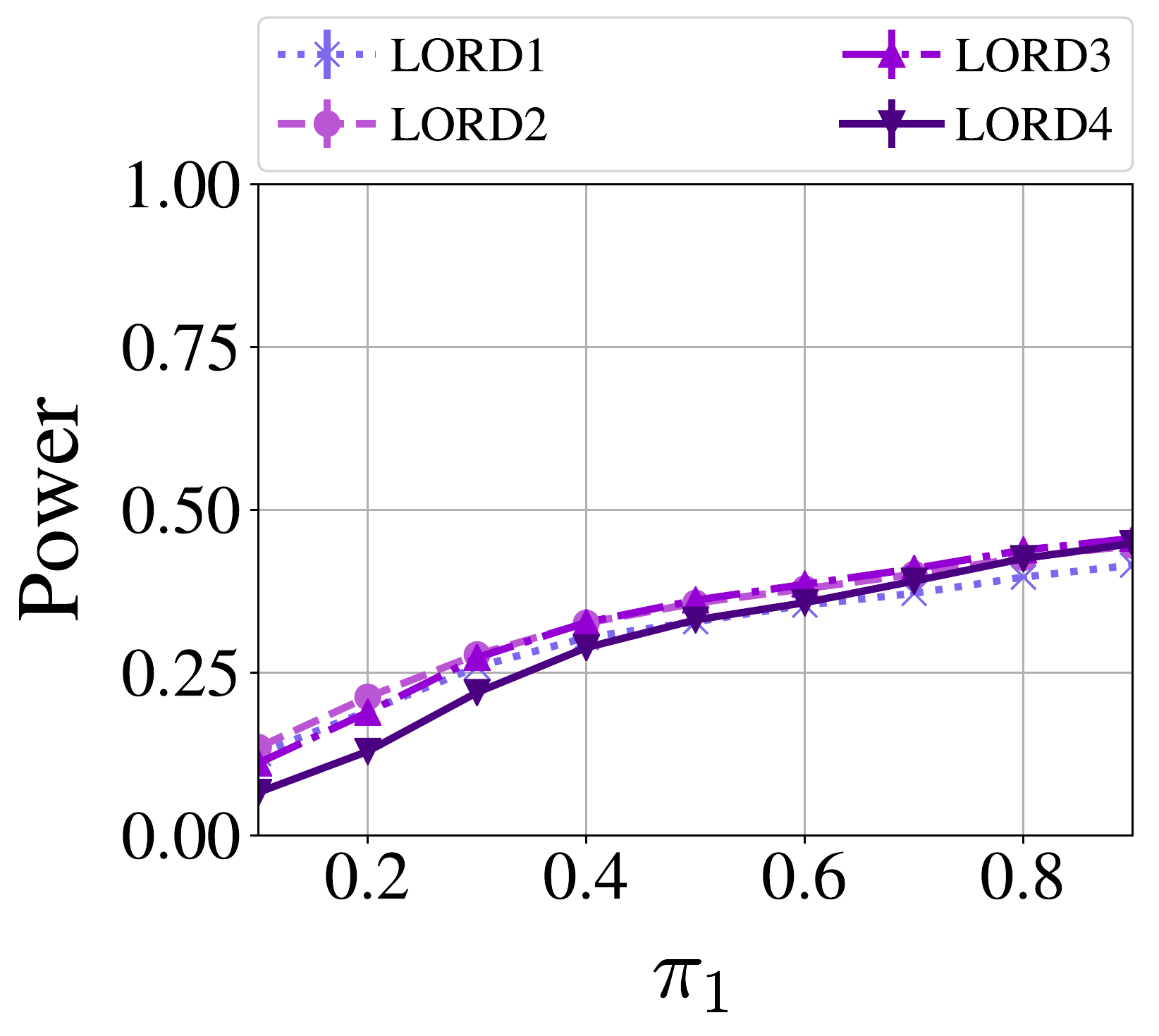}
\includegraphics[width=0.35\textwidth]{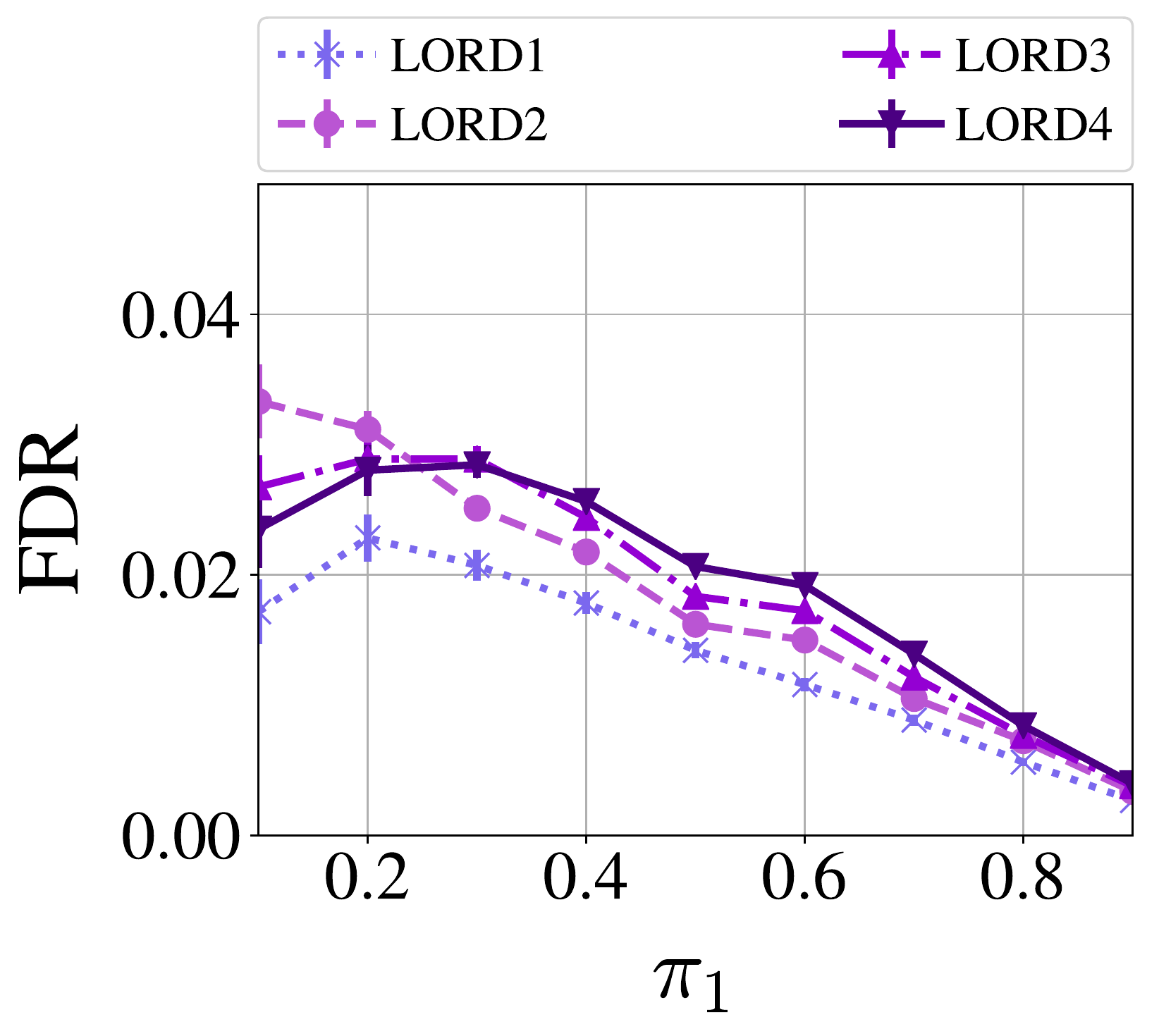}}
  
\caption{Statistical power and FDR versus fraction of non-null
  hypotheses $\pi_1$ for LORD (at target level $\alpha = 0.05$) using
  four different sequences $\{\gamma_j\}$ of increasing
  aggressiveness. The LORD1 method uses the sequence proposed in
  the paper~\cite{javanmard2016online}. The observations under the alternative are $N(\mu_i,1)$ with
  $\mu_i\sim N(2,1)$, and are converted into
  one-sided $p$-values as $P_i=\Phi(-Z_i)$. (See also \figref{comparisonmean2}.)}
\label{fig:LORDmean2}
\end{figure}

\begin{figure}[H]
\centerline{\includegraphics[width=0.35\textwidth]{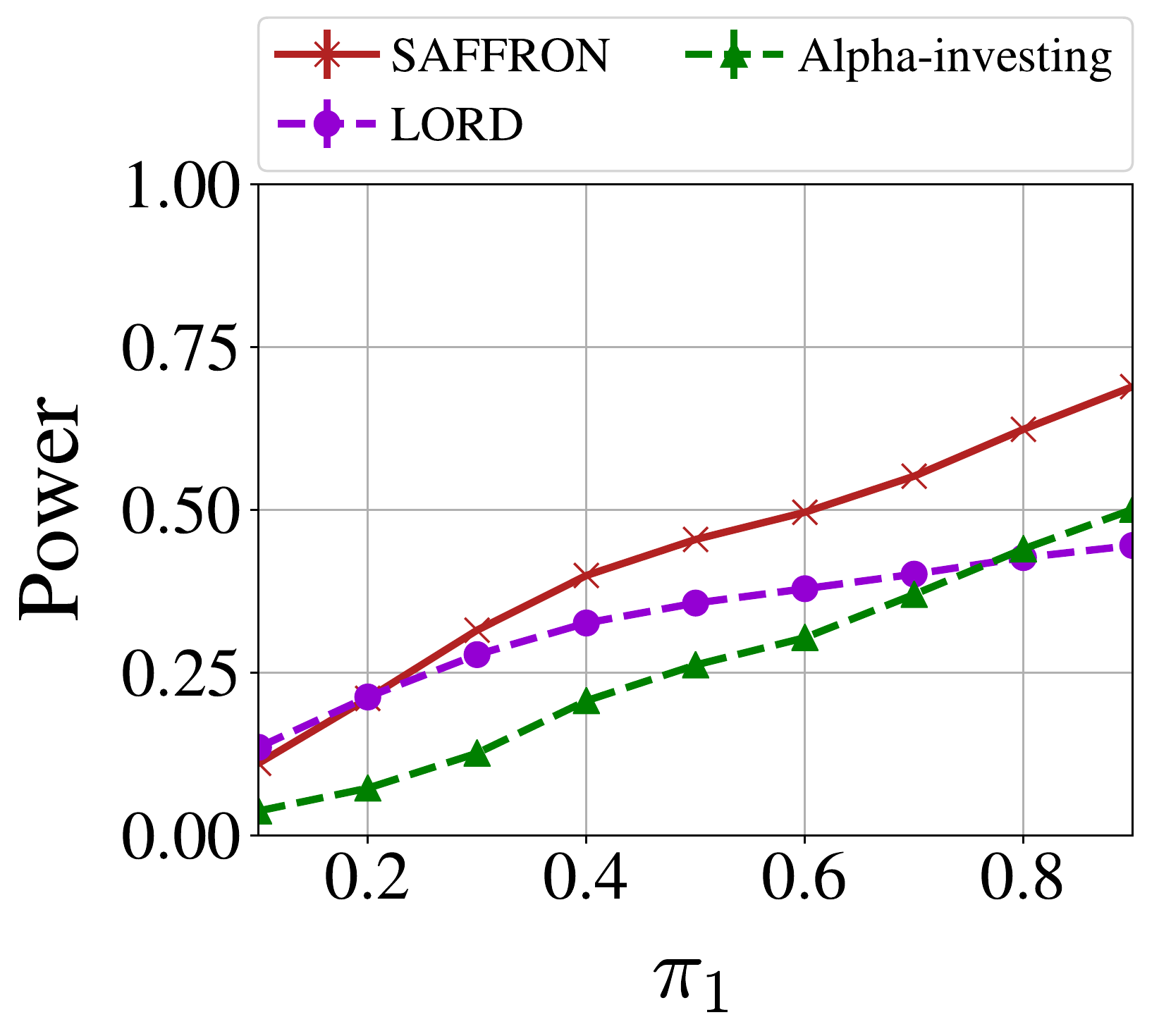}
\includegraphics[width=0.35\textwidth]{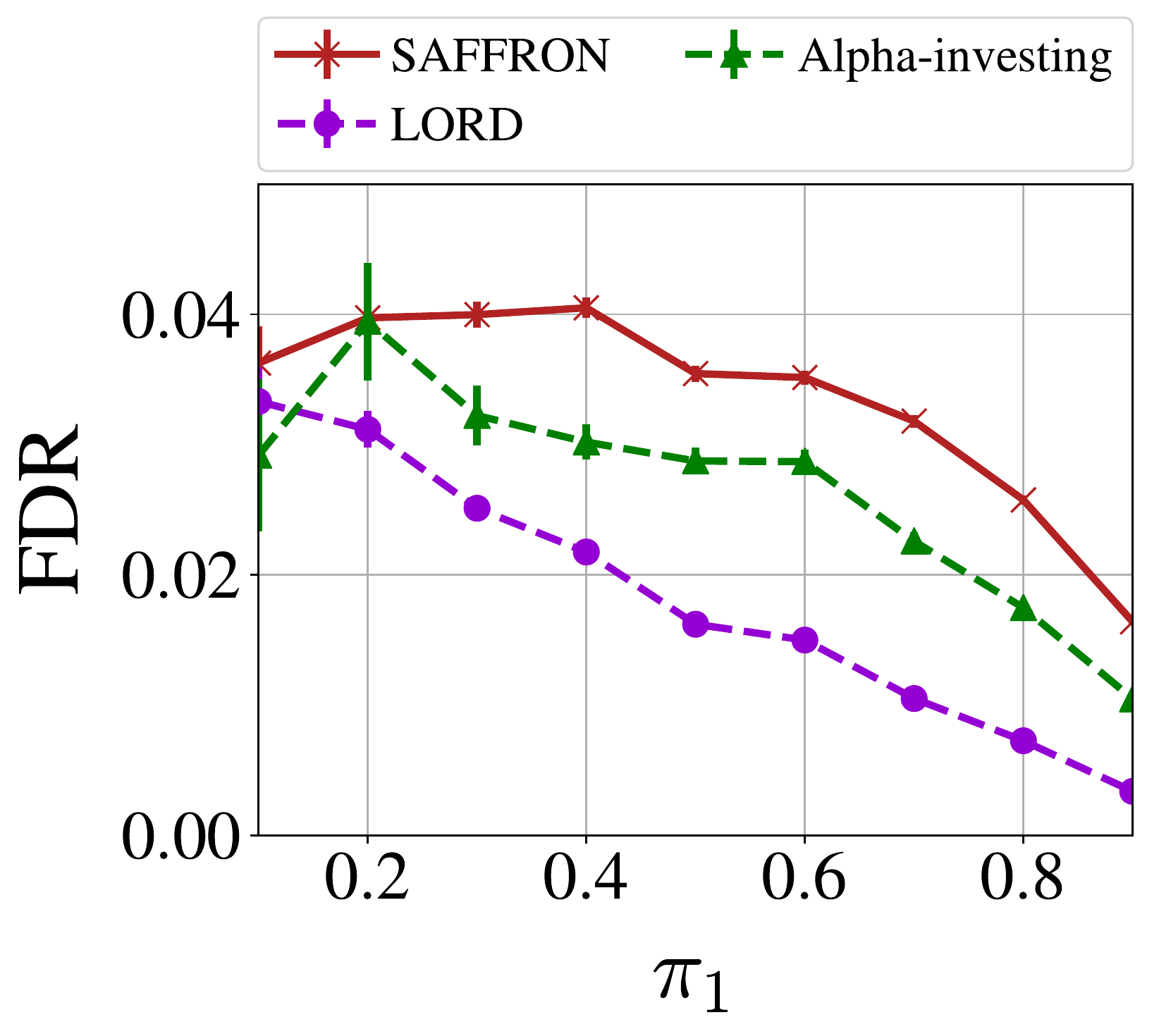}}
\caption{Statistical power and FDR versus fraction of non-null
  hypotheses $\pi_1$ for SAFFRON, LORD and alpha-investing (at target
  level $\alpha = 0.05$), the first two using the sequence
  $\{\gamma_j\}$ which achieves the highest power for each of them
  (chosen over six sequences of varying aggressiveness). The
  observations under the alternative are $N(\mu_i,1)$ with $\mu_i\sim
  N(2,1)$, and are converted into one-sided
  $p$-values as $P_i = \Phi(-Z_i)$.}
\label{fig:comparisonmean2}
\end{figure}

\begin{figure}[H]
\centerline{\includegraphics[width=0.35\textwidth]{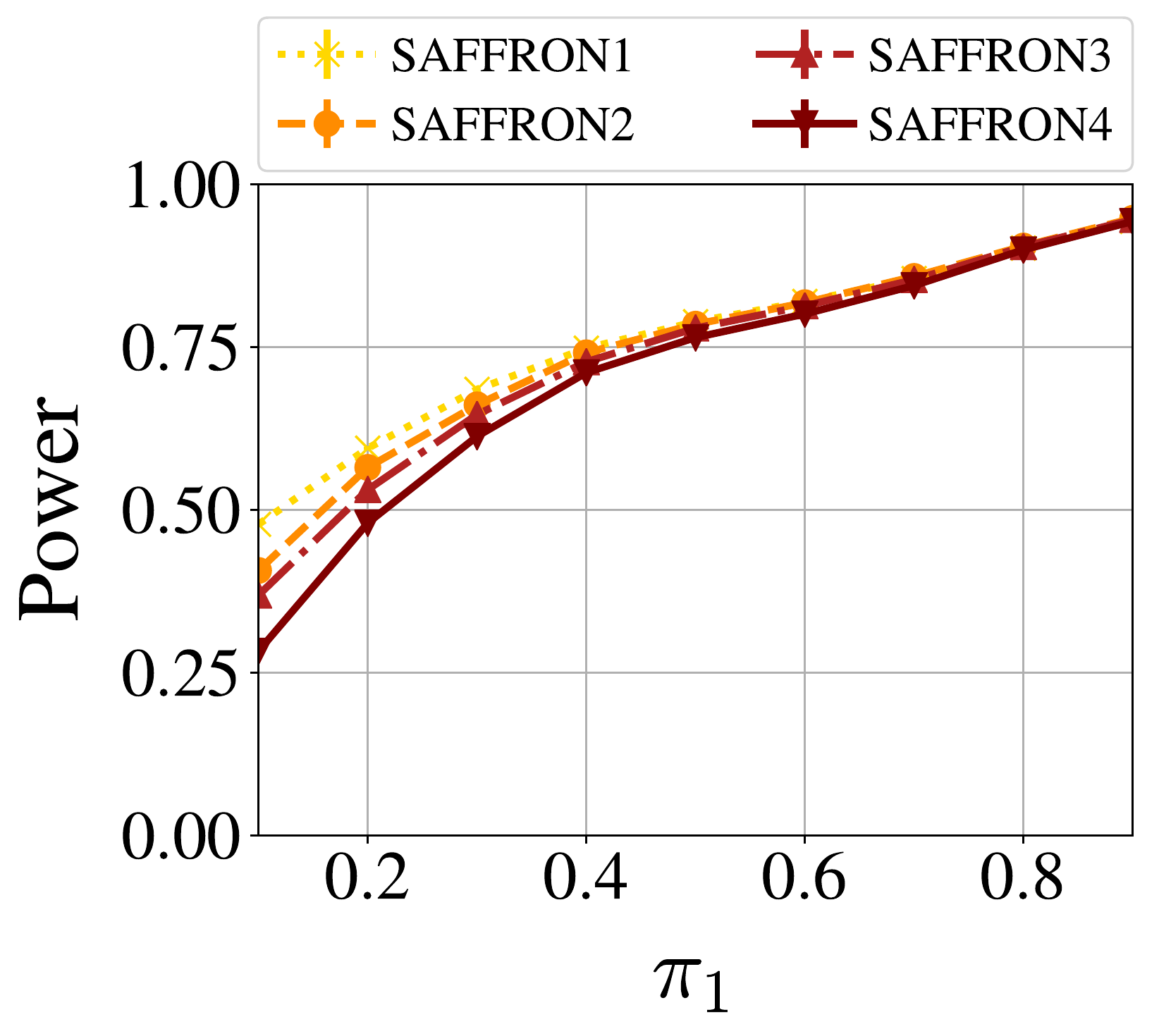}
\includegraphics[width=0.35\textwidth]{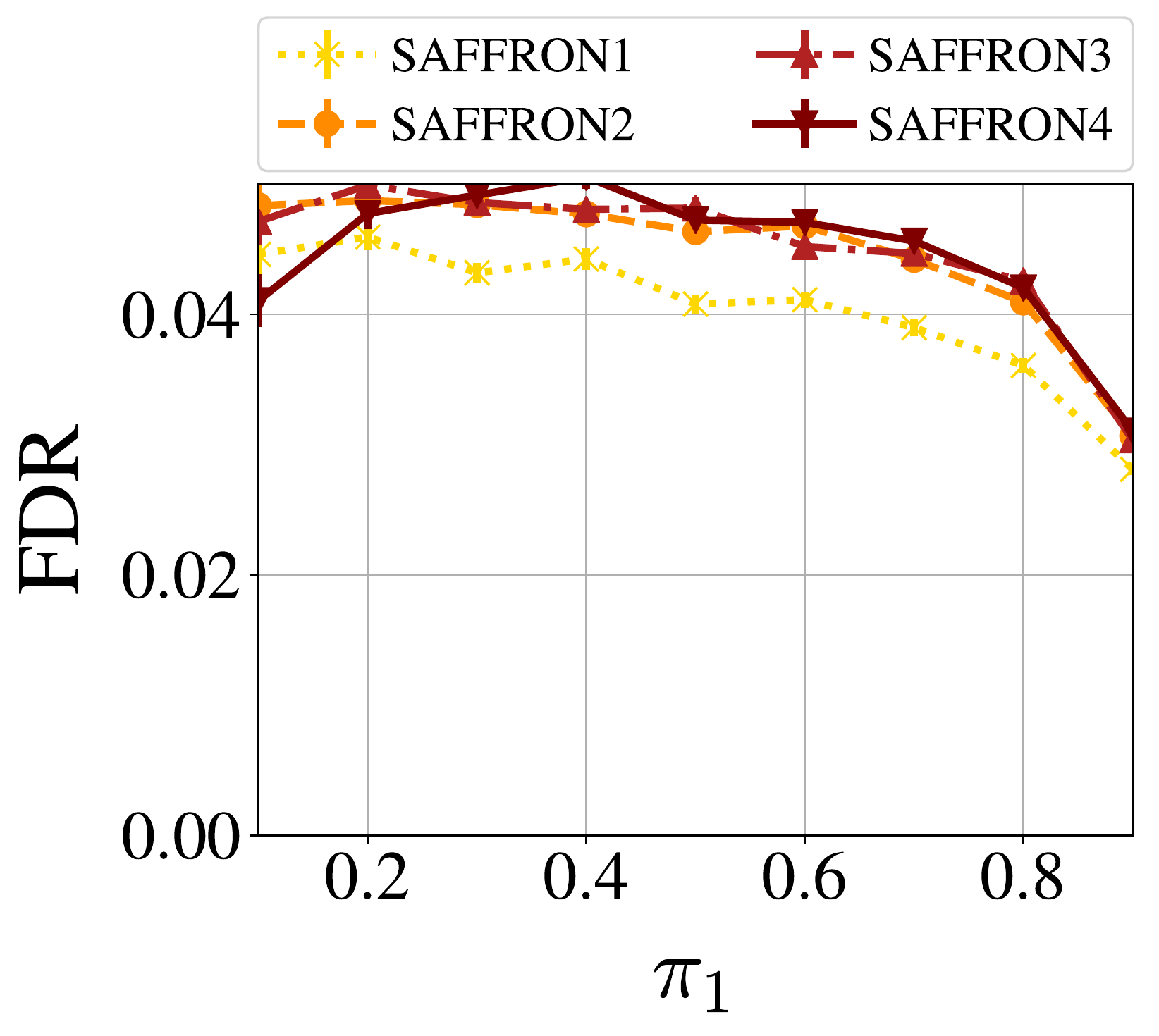}}
\caption{Statistical power and FDR versus fraction of non-null
  hypotheses $\pi_1$ for SAFFRON (at target level $\alpha = 0.05$)
  using four different sequences $\{\gamma_j\}$ of increasing
  aggressiveness. The observations under the alternative are $N(\mu_i,1)$
  with $\mu_i\sim N(3,1)$, and are converted
  into one-sided $p$-values as $P_i=\Phi(-Z_i)$. (See also \figref{mean3}.)}
\label{fig:SAFFRONmean3}
\end{figure}

\begin{figure}[H]
\centerline{\includegraphics[width=0.35\textwidth]{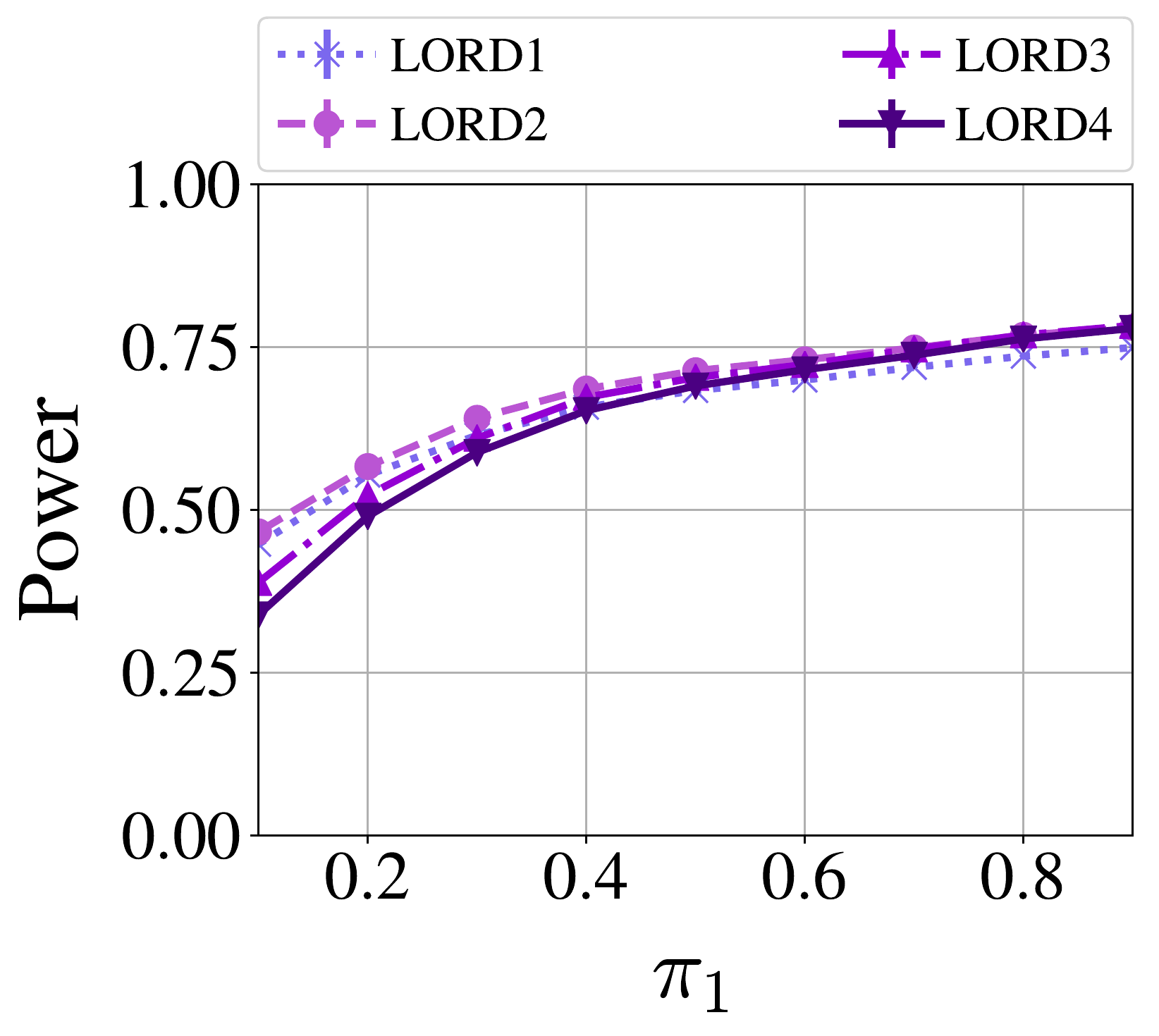}
\includegraphics[width=0.35\textwidth]{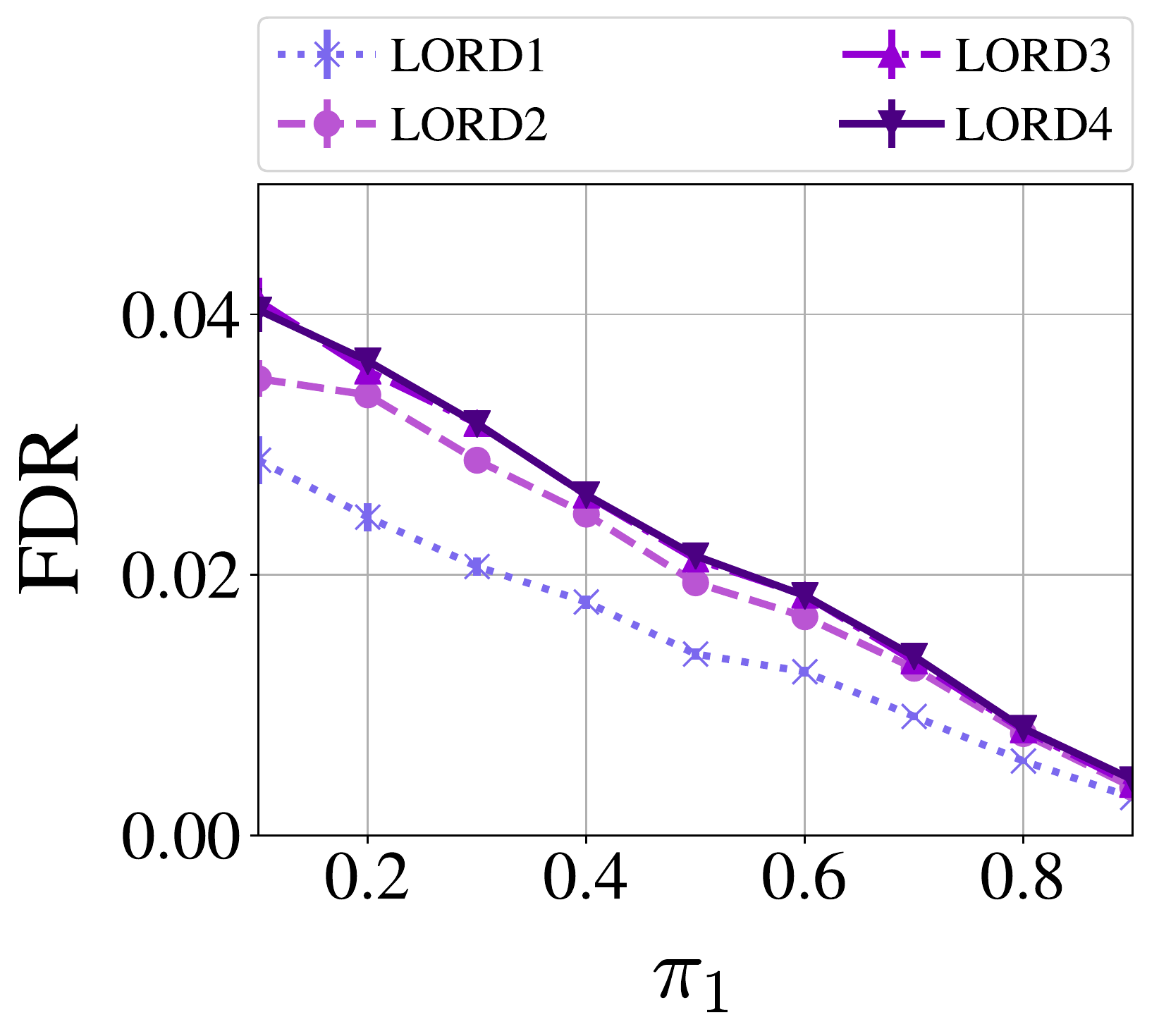}}
\caption{Statistical power and FDR versus fraction of non-null
  hypotheses $\pi_1$ for LORD (at target level $\alpha = 0.05$) using
  four different sequences $\{\gamma_j\}$ of increasing
  aggressiveness. The LORD1 method uses the sequence proposed in
  the paper~\cite{javanmard2016online}. The observations under the alternative
  are $N(\mu_i,1)$ with $\mu_i\sim N(3,1)$, and
  are converted into one-sided $p$-values as $P_i=\Phi(-Z_i)$. (See also \figref{mean3}.)}
\label{fig:LORDmean3}
\end{figure}

\subsection{Testing with beta alternatives}

In this setting we generate the $p$-value sequence according to the
following model:
\begin{align*}
  P_i \sim \begin{cases} \text{Unif}[0,1], & \mbox{with probability $1-\pi_1$} \\ \text{Beta}(m,n), &
    \mbox{with probability $\pi_1$,}
  \end{cases}
\end{align*}
where $i\in[T]$ and $T=1000$, as before. Again we compare the
performance of SAFFRON, alpha-investing and LORD in terms of the
achieved power with the FDR controlled under a chosen level. For LORD,
the asymptotically optimal sequence $\{\gamma_j\}$ was derived in the paper~\cite{javanmard2016online} and is of the form
$\gamma_j\propto(\frac{1}{j}\log j)^{1/m}$ for $m<1$ and $n\geq1$. As
in the Gaussian case, for SAFFRON and additionally for LORD we
consider the sequence $\gamma_j\propto j^{-s}$ with varying $s$,
which, unlike the previously mentioned sequence, does not depend on
the parameters of the distribution. For the particular distribution of
the observed $p$-values we choose $m=0.5$ and $n=5$. The following
plots compare the achieved power and FDR of SAFFRON, LORD and
alpha-investing, the first two with several different sequences
$\{\gamma_j\}$ obtained by varying the parameter $s$. In particular,
\figref{SAFFRON_beta} and \figref{LORD_beta} show the changes in
performance of SAFFRON and LORD respectively with increasing $s$;
i.e., increasing aggressiveness of the sequence
$\{\gamma_j\}$. \figref{comparisonbeta} compares the performance of
SAFFRON, LORD and alpha-investing, where the first two use the highest
performing sequence chosen among six considered sequences, as in the
setting with Gaussian tests. Although the simulation results show SAFFRON performing similarly to LORD and alpha-investing for small fractions of non-null hypotheses, it significantly outperforms its competitors in terms of power and using up available wealth with a higher number of $p$-values coming from the alternative.

\begin{figure}[H]
\centerline{\includegraphics[width=0.35\textwidth]{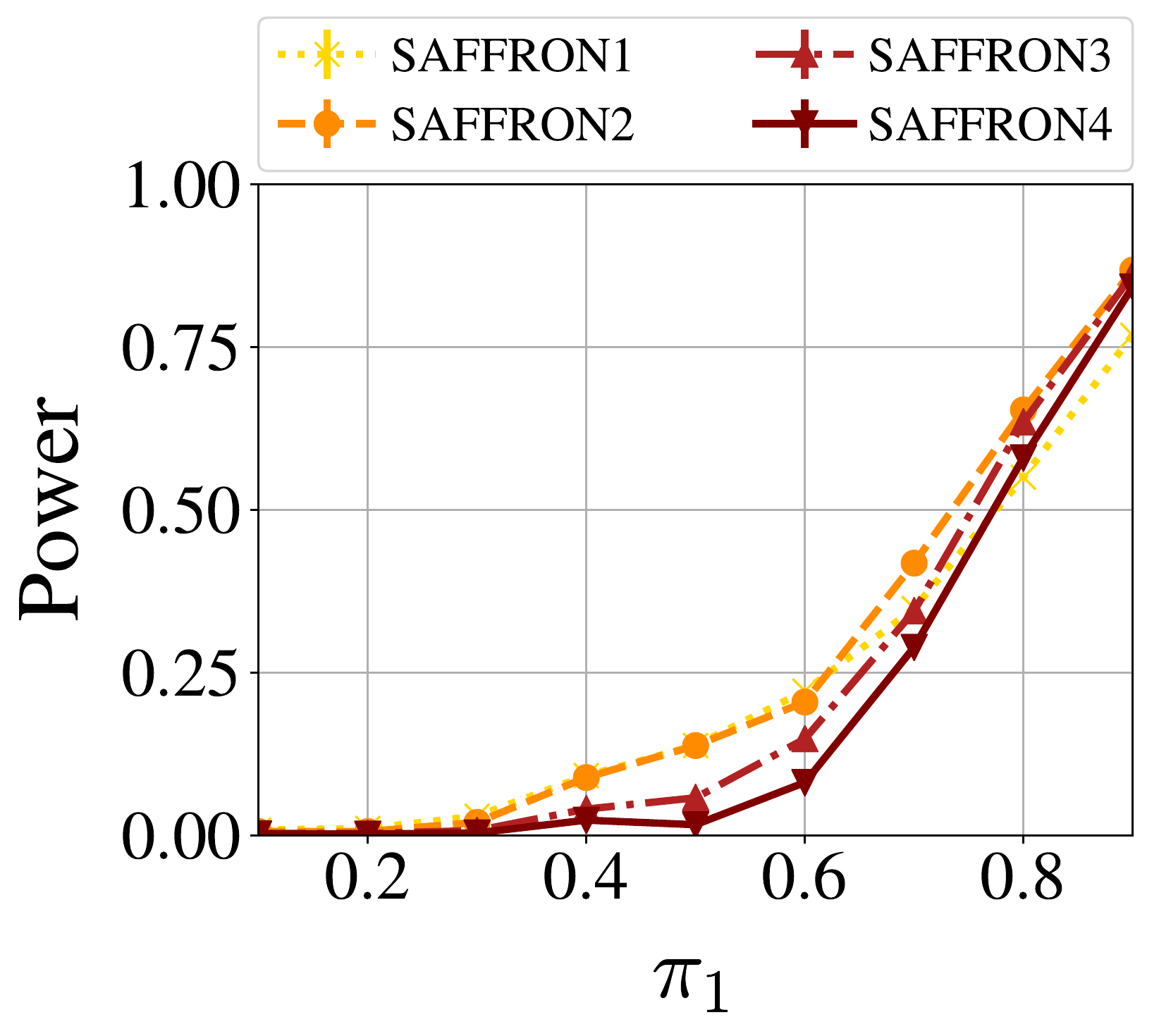}
\includegraphics[width=0.35\textwidth]{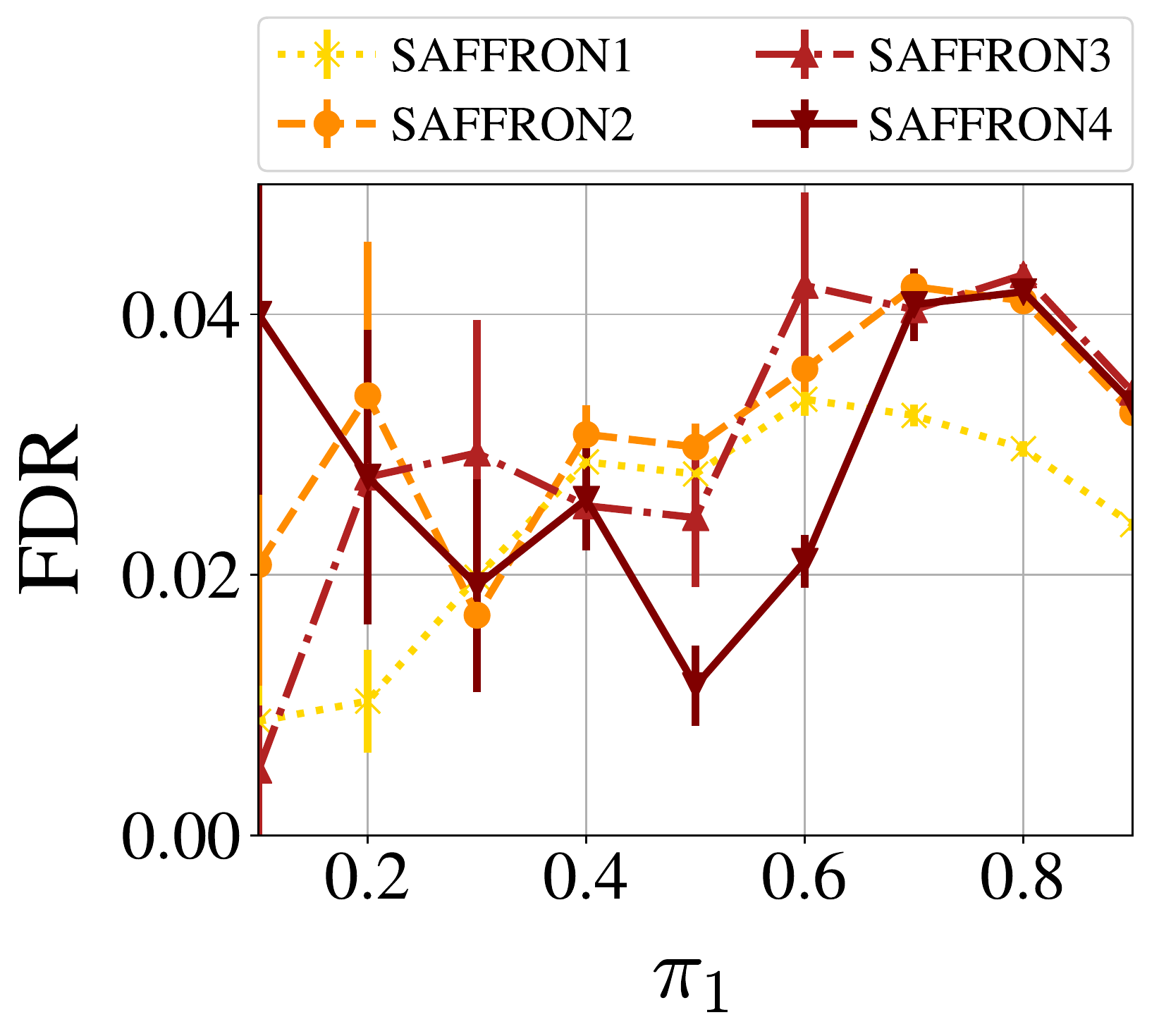}}
\caption{Statistical power and FDR versus fraction of non-nulls $\pi_1$ for SAFFRON (at target level $\alpha = 0.05$) using four different sequences $\{\gamma_j\}$ of increasing aggressiveness. Non-null $p$-values are distributed as $\text{Beta}(0.5,5)$. (See also \figref{comparisonbeta}.)}
\label{fig:SAFFRON_beta}
\end{figure}

\begin{figure}[H]
\centerline{\includegraphics[width=0.35\textwidth]{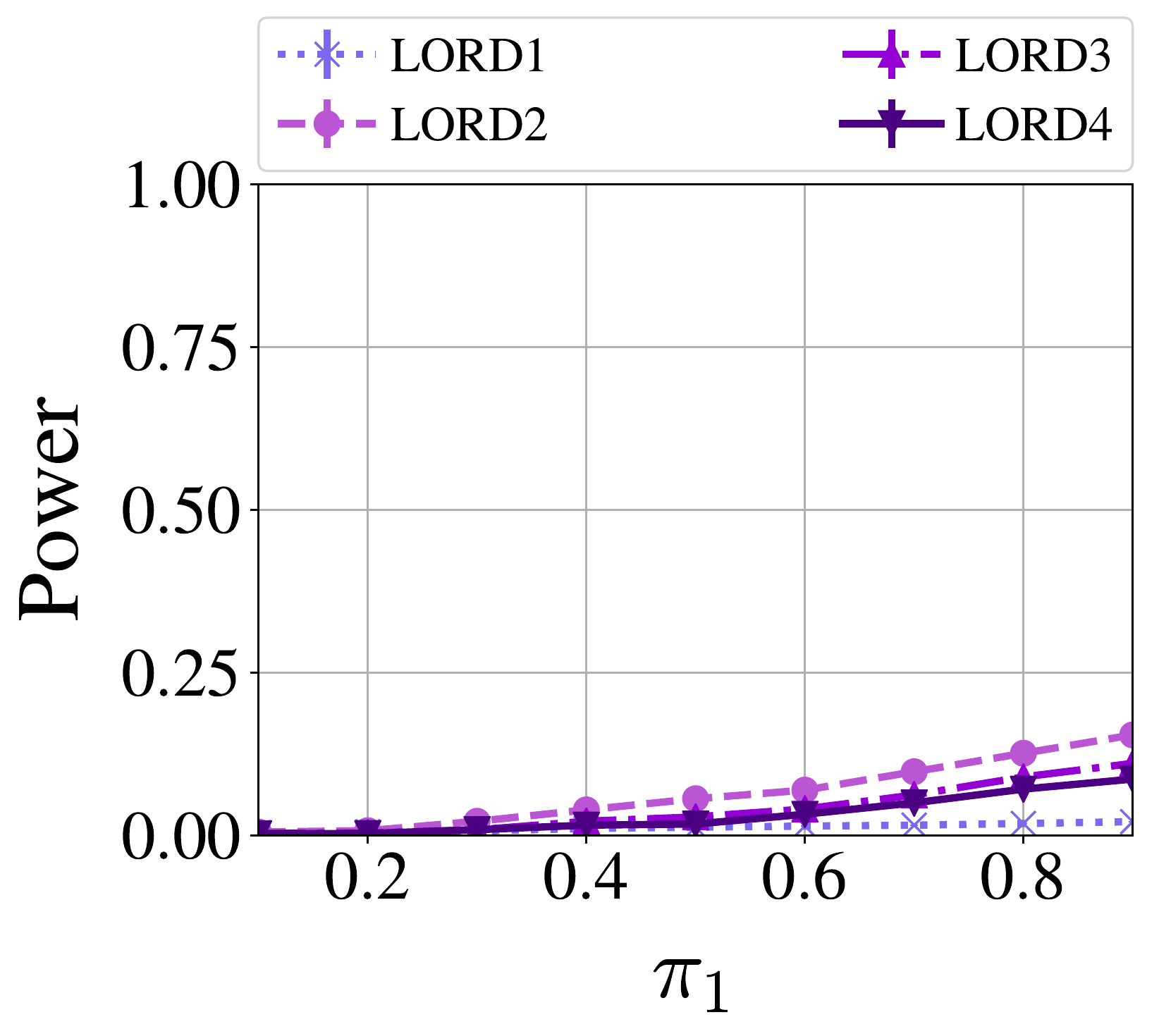}
\includegraphics[width=0.35\textwidth]{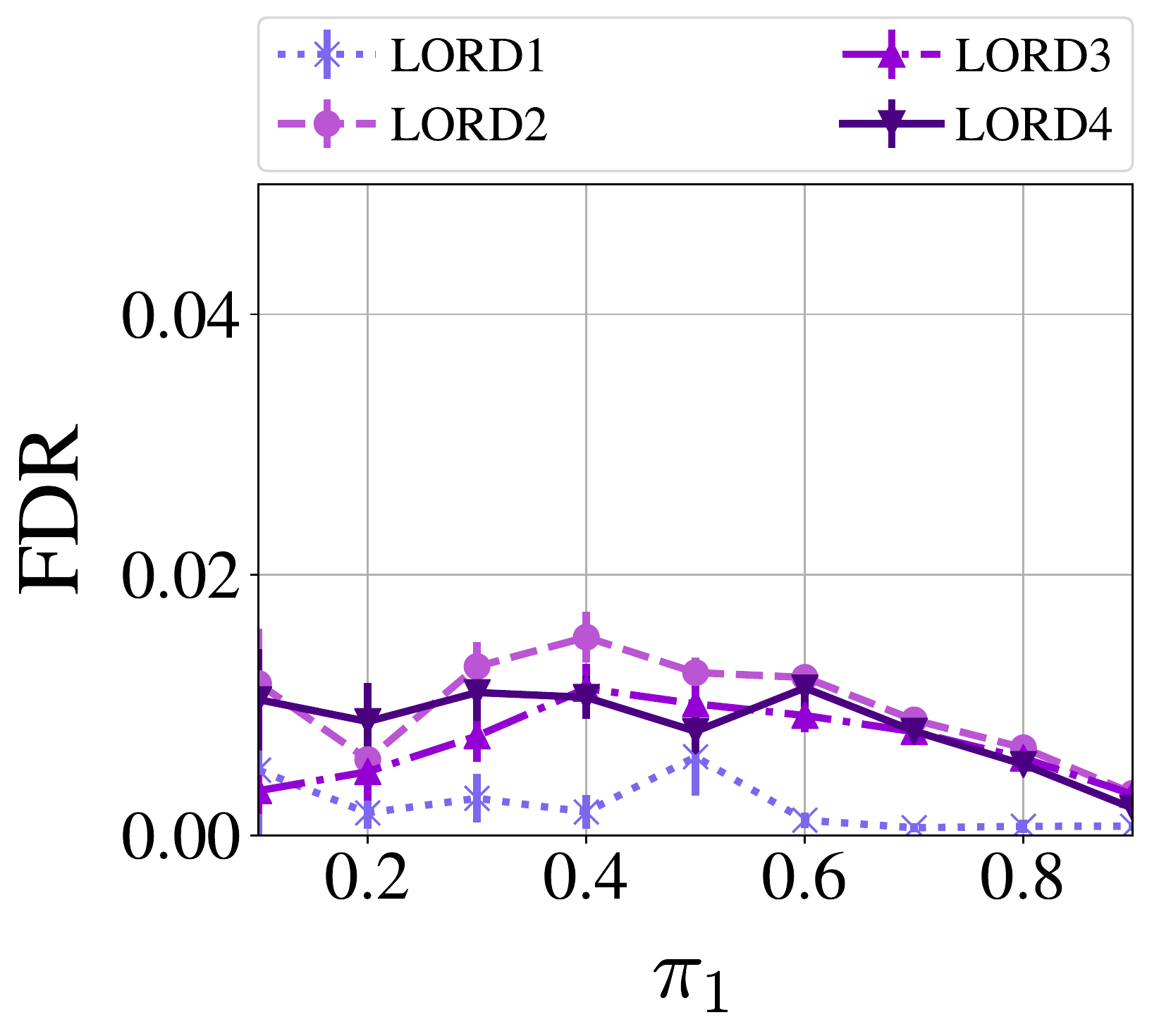}}
\caption{Statistical power and FDR versus fraction of non-null
  hypotheses $\pi_1$ for LORD (at target level $\alpha = 0.05$) using
  four different sequences $\{\gamma_j\}$ of increasing
  aggressiveness. The LORD1 method uses the sequence proposed in
  the paper~\cite{javanmard2016online}. Under the alternative the $p$-values are distributed as $\text{Beta}(0.5,5)$. (See also \figref{comparisonbeta}.)}
\label{fig:LORD_beta}
\end{figure}

\begin{figure}[H]
\centerline{\includegraphics[width=0.35\textwidth]{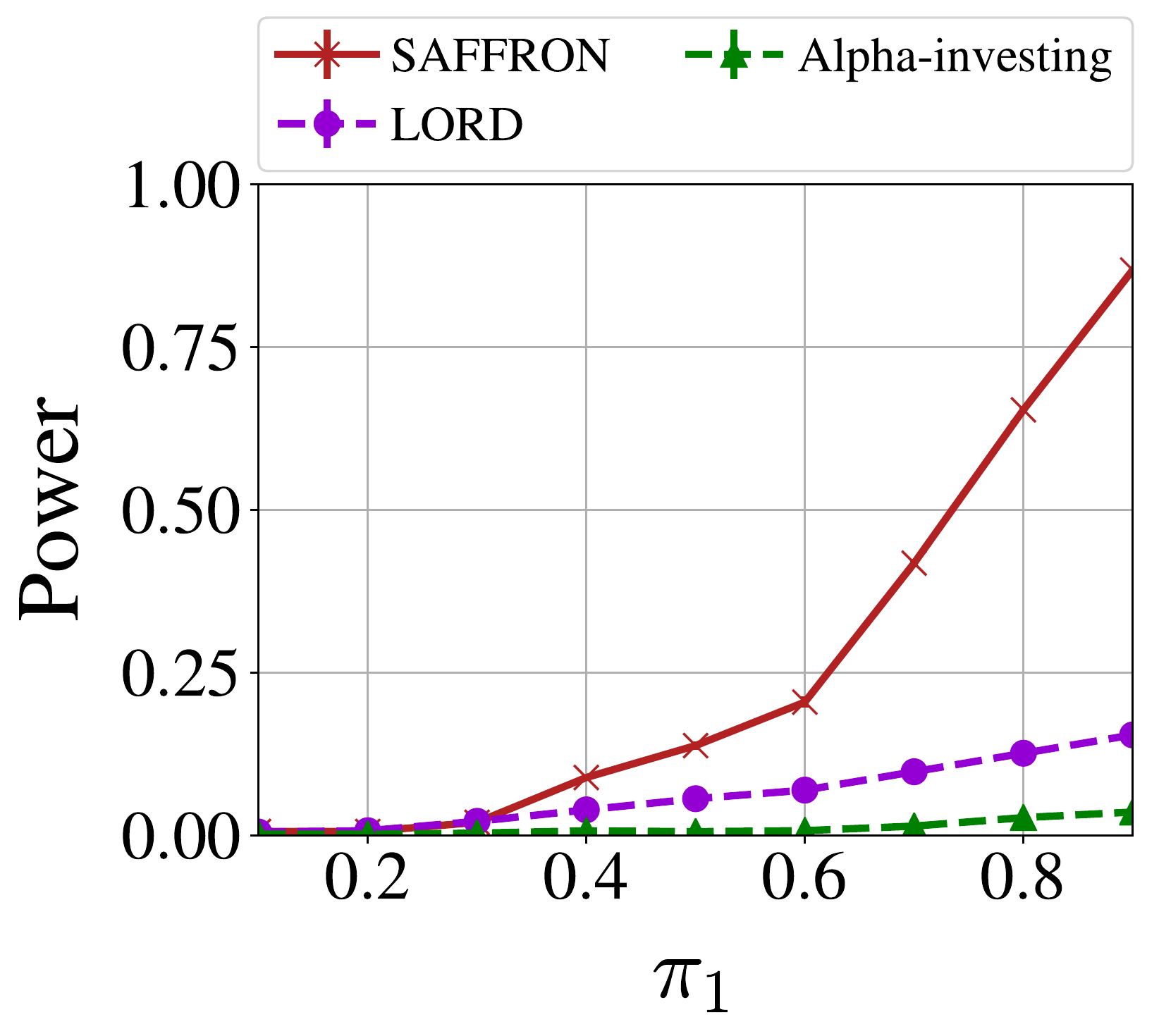}
\includegraphics[width=0.35\textwidth]{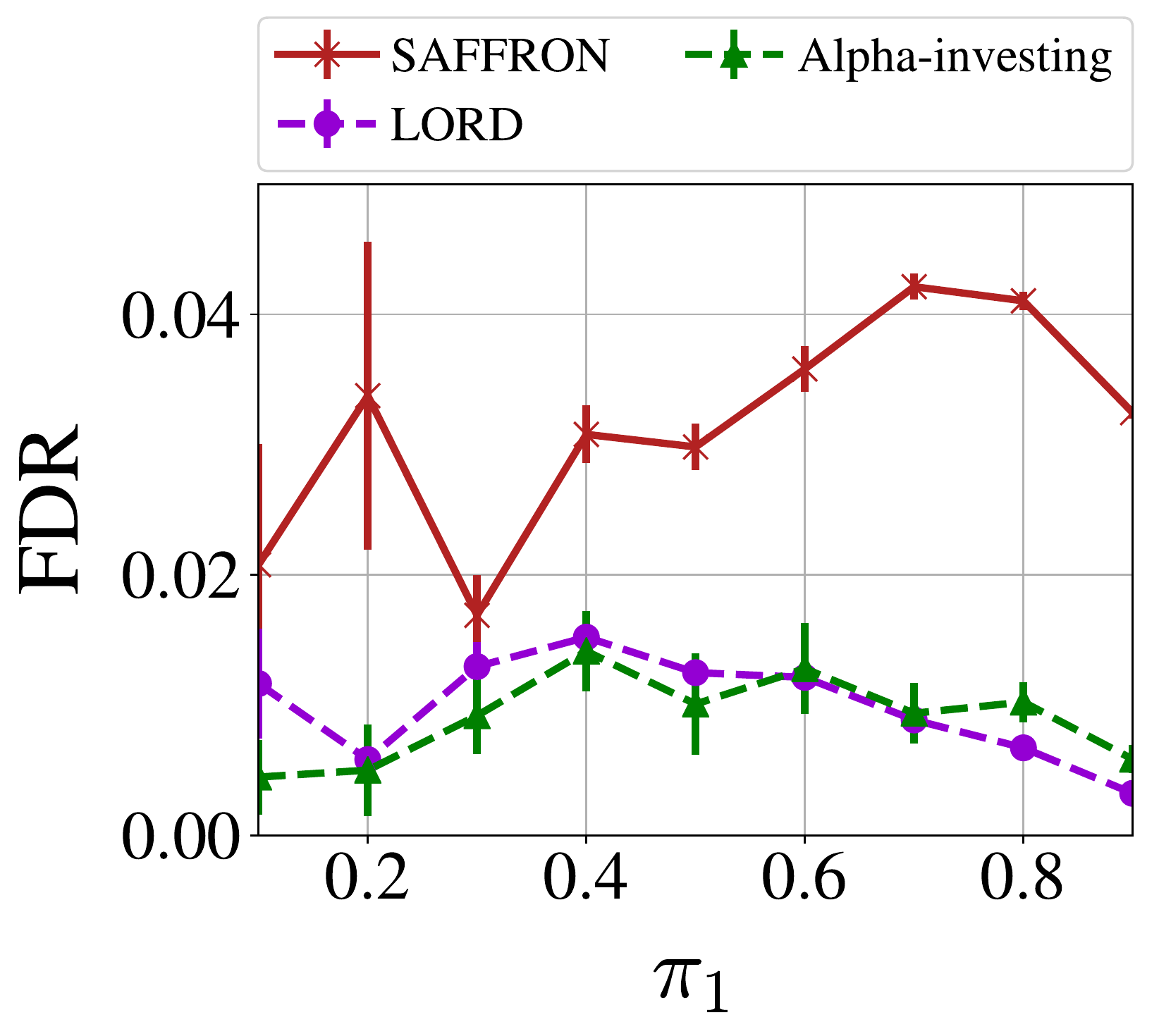}}
\caption{Statistical power and FDR versus fraction of non-null
  hypotheses $\pi_1$ for SAFFRON, LORD and alpha-investing (at target
  level $\alpha = 0.05$), using the sequence
  $\{\gamma_j\}$ which achieves the highest power for each of them
  (chosen over six sequences of varying aggressiveness). Under the
  alternative the $p$-values are distributed as $\text{Beta}(0.5,5)$.}
\label{fig:comparisonbeta}
\end{figure}

\subsection{SAFFRON with constant $\lambda$ vs SAFFRON alpha-investing}

Here we provide a comparison of SAFFRON for constant $\lambda$, set to the default value $1/2$, and the SAFFRON version of alpha-investing, obtained by setting $\lambda_j = \alpha_j$. We adopt the two settings described earlier in this section, namely testing with Gaussian observations and beta alternatives. As earlier, we vary the mean of Gaussian observations in a similar way.

In \figref{comparisongauss2} and \figref{comparisongauss3} we provide comparison of power and FDR for the two variants of SAFFRON, where $p$-values are computed from Gaussian observations; we take the non-null distribution to be $F_1 =
N(2,1)$ and $F_1 = N(3,1)$, respectively. From left to right, we increase the aggressiveness of the discount sequence $\{\gamma_t\}$. Our simulations indicate that there is no clear winner; which variant of SAFFRON performs better depends on the choice of $\{\gamma_t\}$, strength of the non-null signals, as well as the proportion of non-nulls in the sequence.

\figref{comparisonbeta} gives the same comparison, however for beta alternatives, as described in the previous subsection. The results indicate that $\lambda = 1/2$ seems to be a preferred choice in this case - across all choices of the sequence $\{\gamma_t\}$, as well as all non-null proportions, it uniformly outperforms the alpha-investing version.

\begin{figure}[H]
\centering
\begin{subfigure}[b]{\linewidth}
\includegraphics[width=0.3\textwidth]{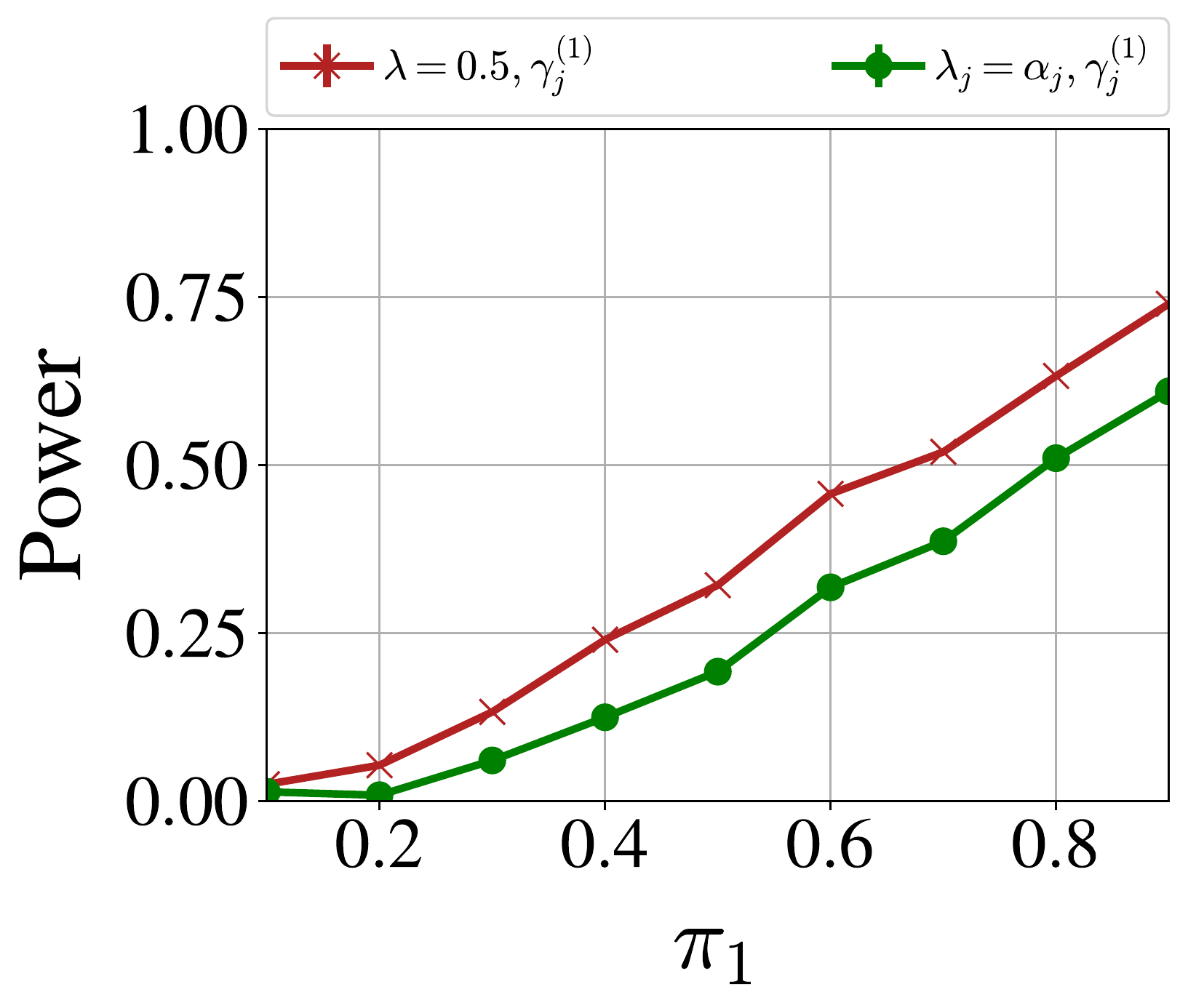}
\includegraphics[width=0.3\textwidth]{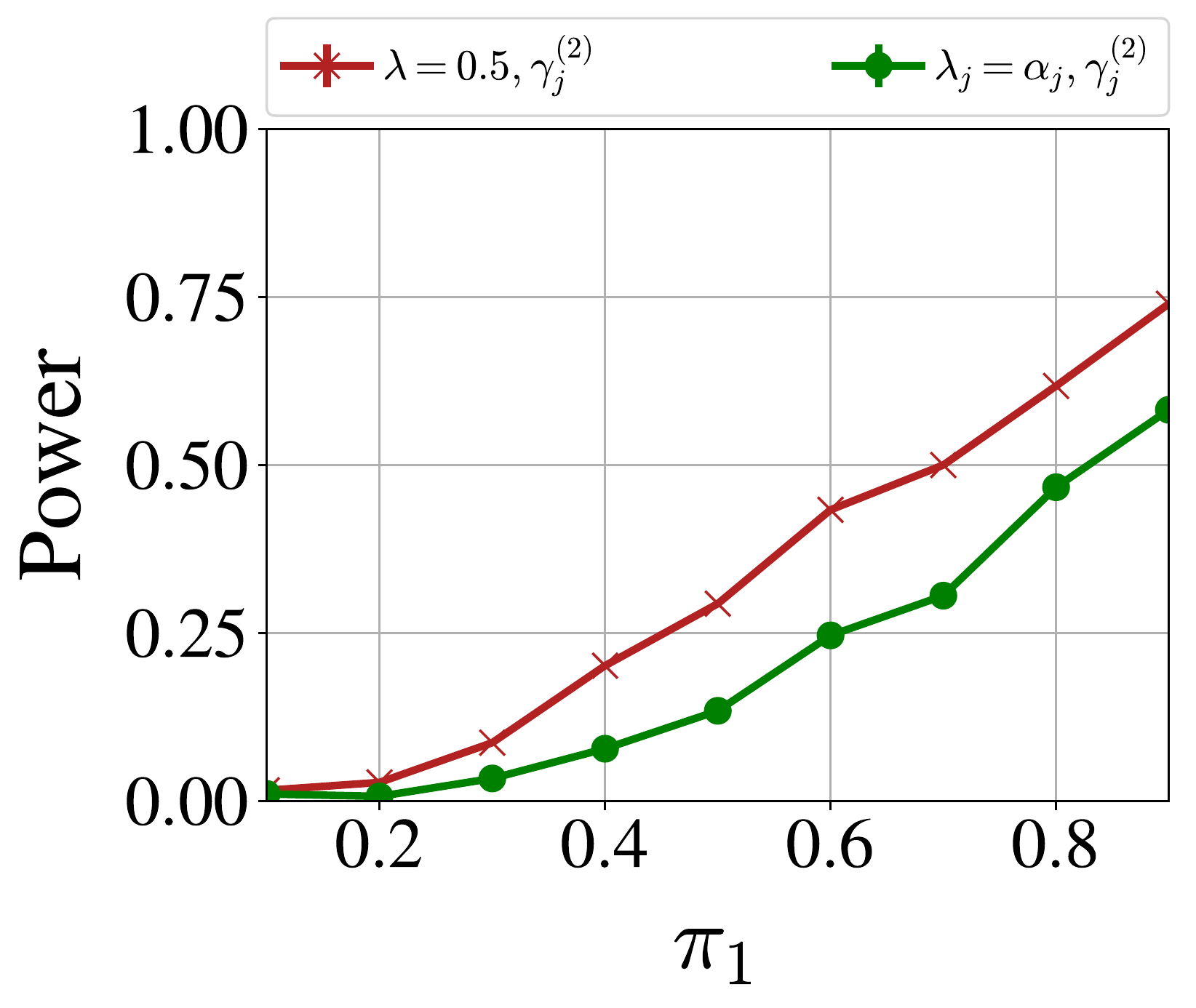}
\includegraphics[width=0.3\textwidth]{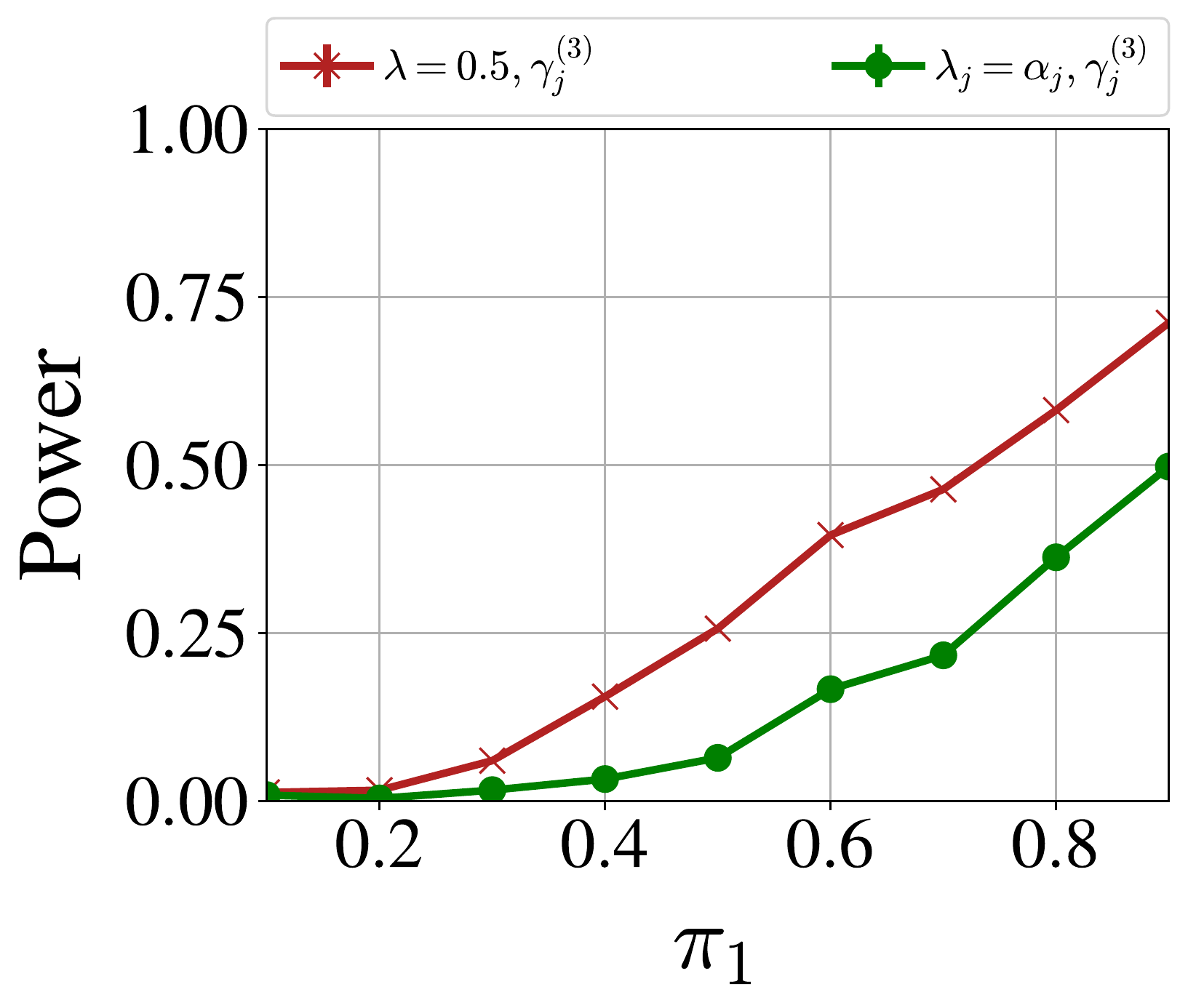}
\end{subfigure}
\begin{subfigure}[b]{\linewidth}
\includegraphics[width=0.3\textwidth]{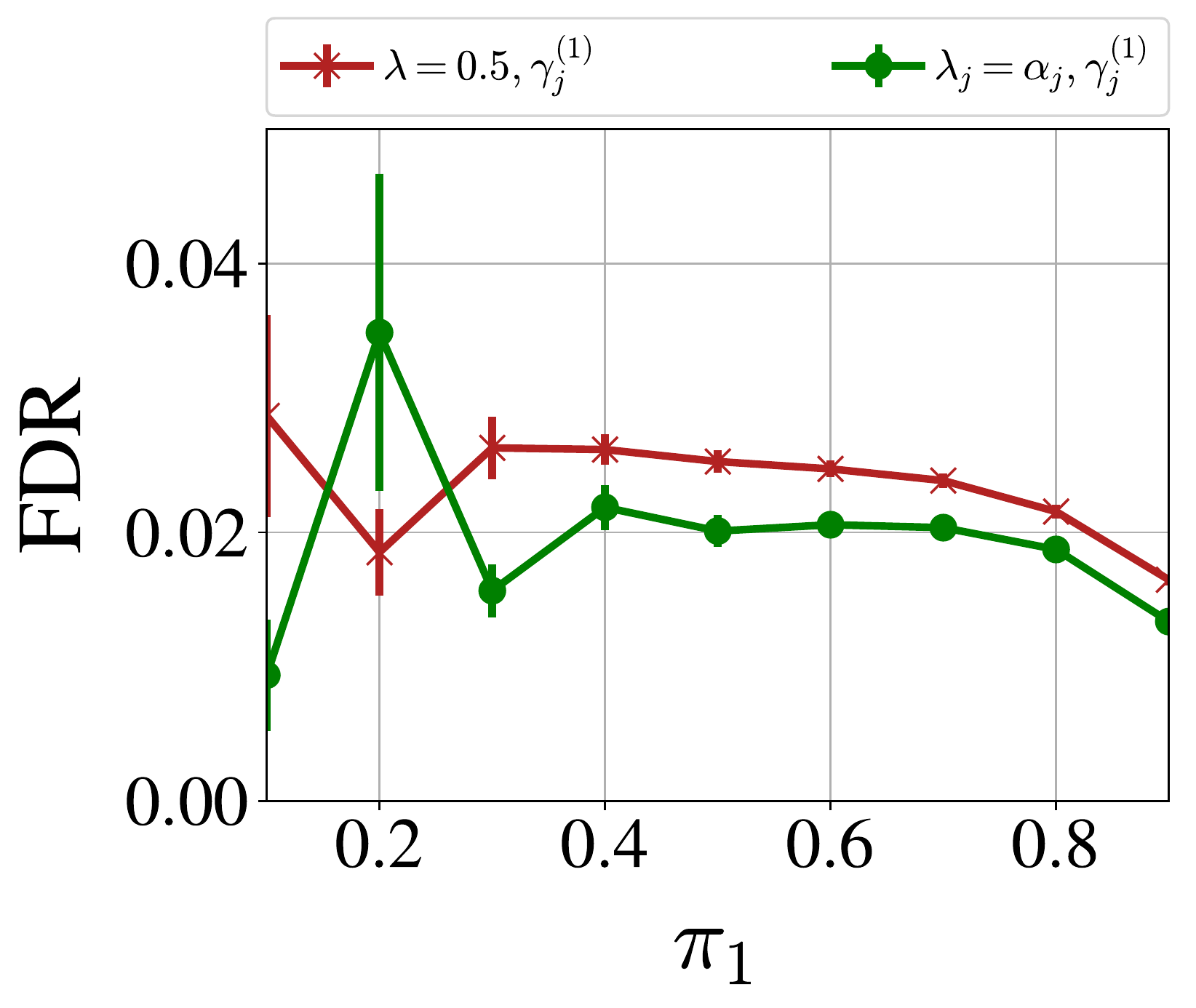}
\includegraphics[width=0.3\textwidth]{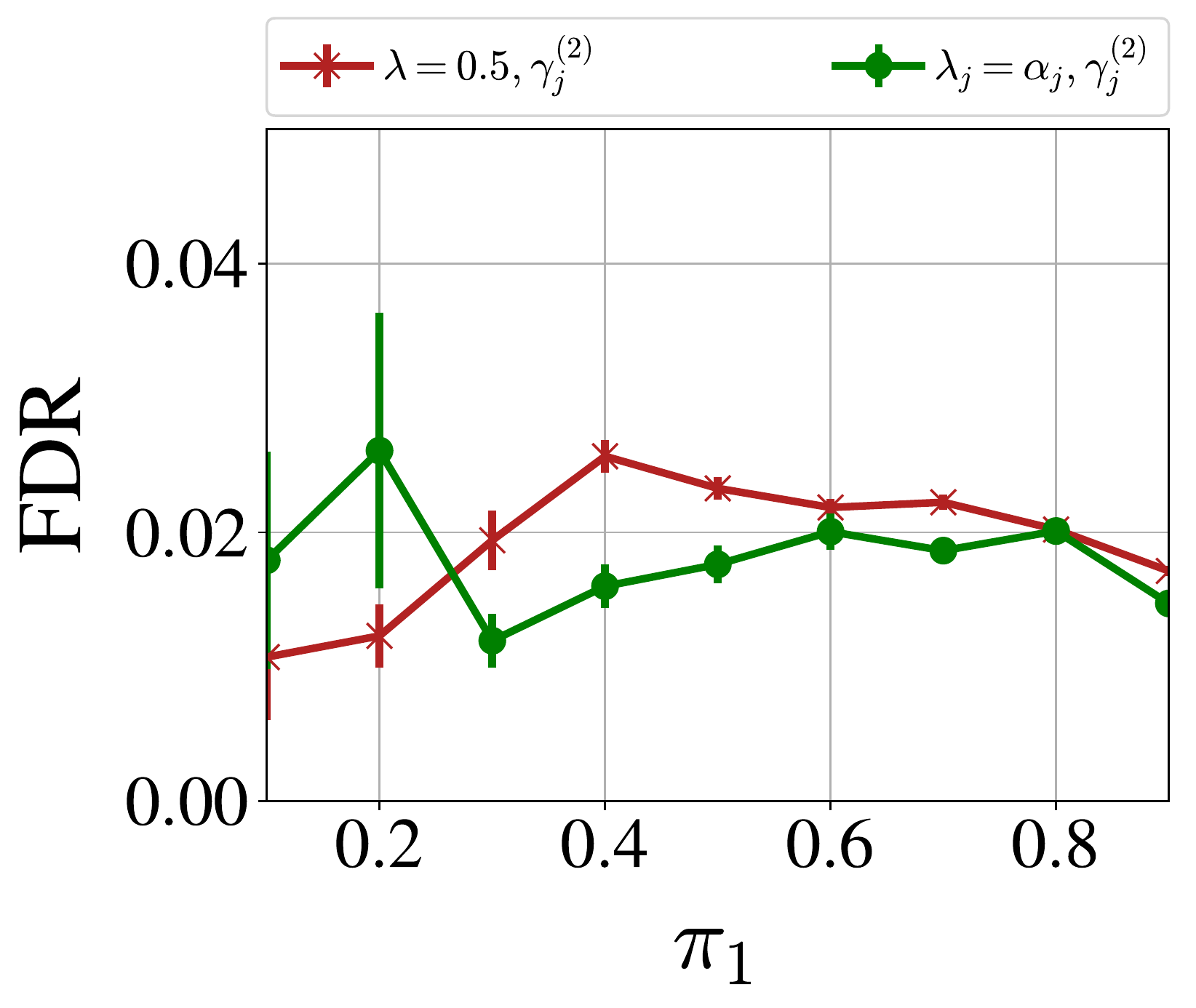}
\includegraphics[width=0.3\textwidth]{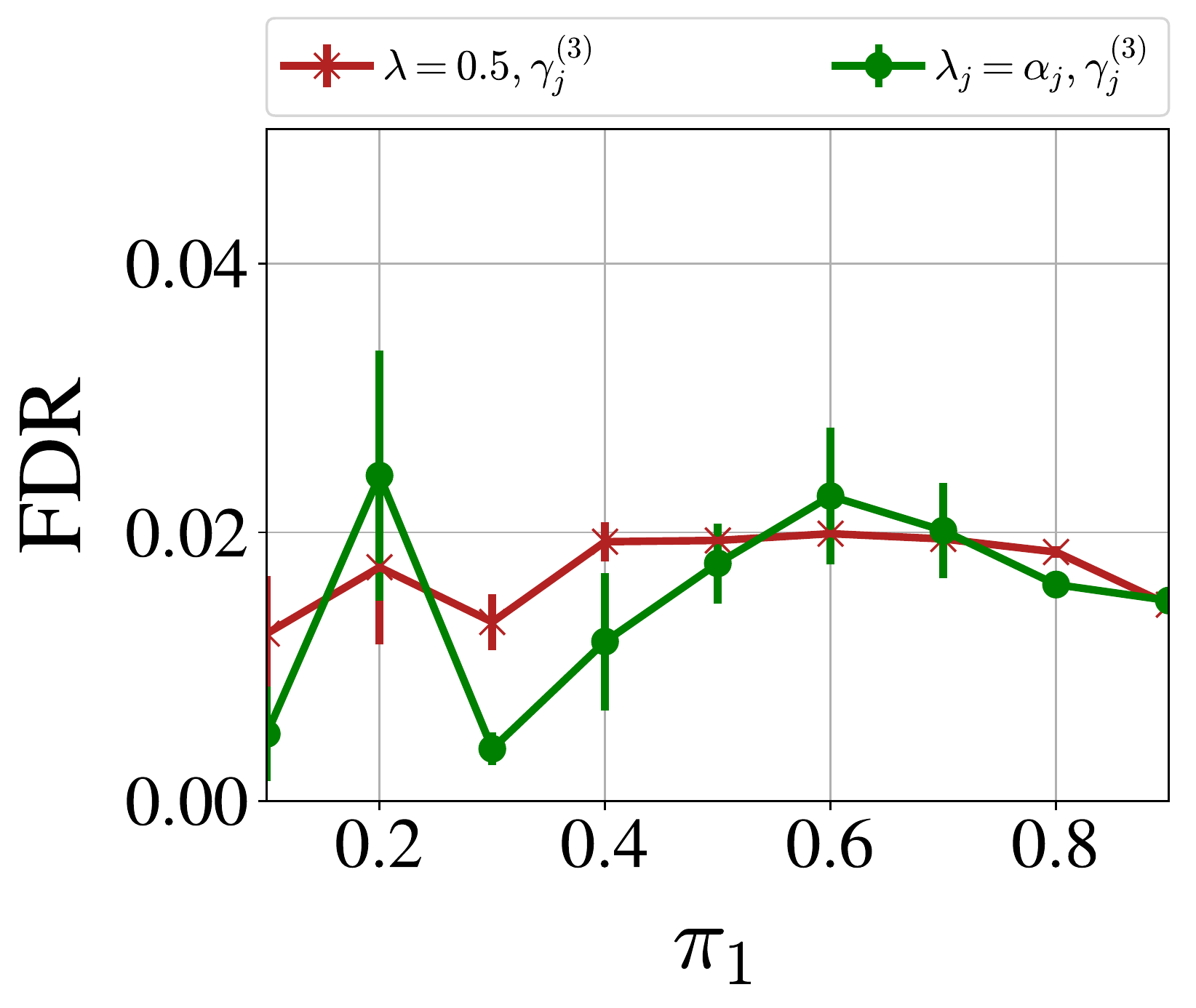}
\end{subfigure}
\caption{Statistical power and FDR versus fraction of non-null
  hypotheses $\pi_1$ for SAFFRON with $\lambda = 1/2$ and SAFFRON with $\lambda_j = \alpha_j$ (at target level $\alpha = 0.05$)
  using three different sequences $\{\gamma_j\}$ of increasing
  aggressiveness. The observations under the alternative are $N(\mu_i,1)$
  with $\mu_i\sim N(2,1)$, and are converted
  into one-sided $p$-values as $P_i=\Phi(-Z_i)$.}
\label{fig:comparisongauss2}
\end{figure}

\begin{figure}[H]
\centering
\begin{subfigure}[b]{\linewidth}
\includegraphics[width=0.3\textwidth]{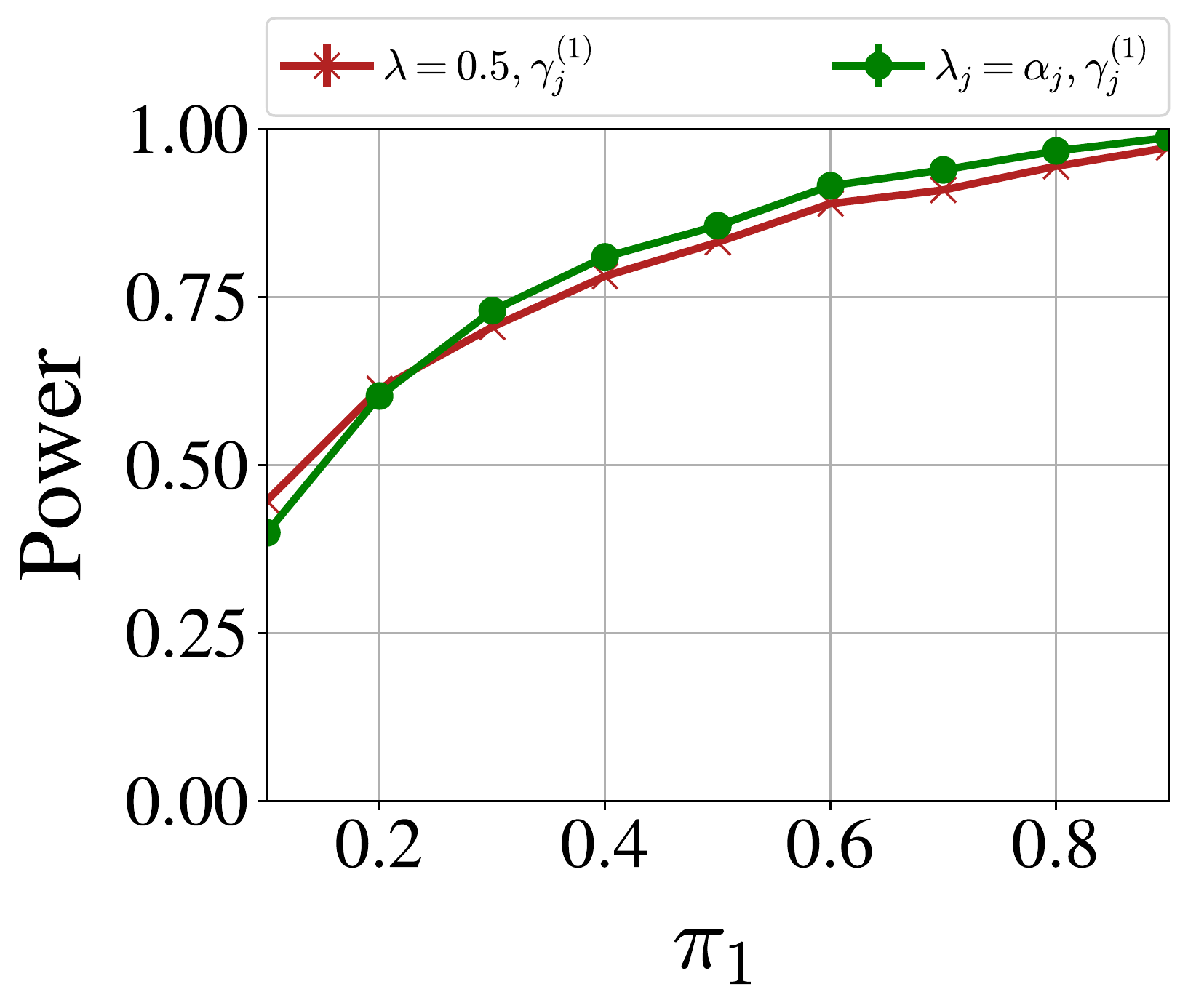}
\includegraphics[width=0.3\textwidth]{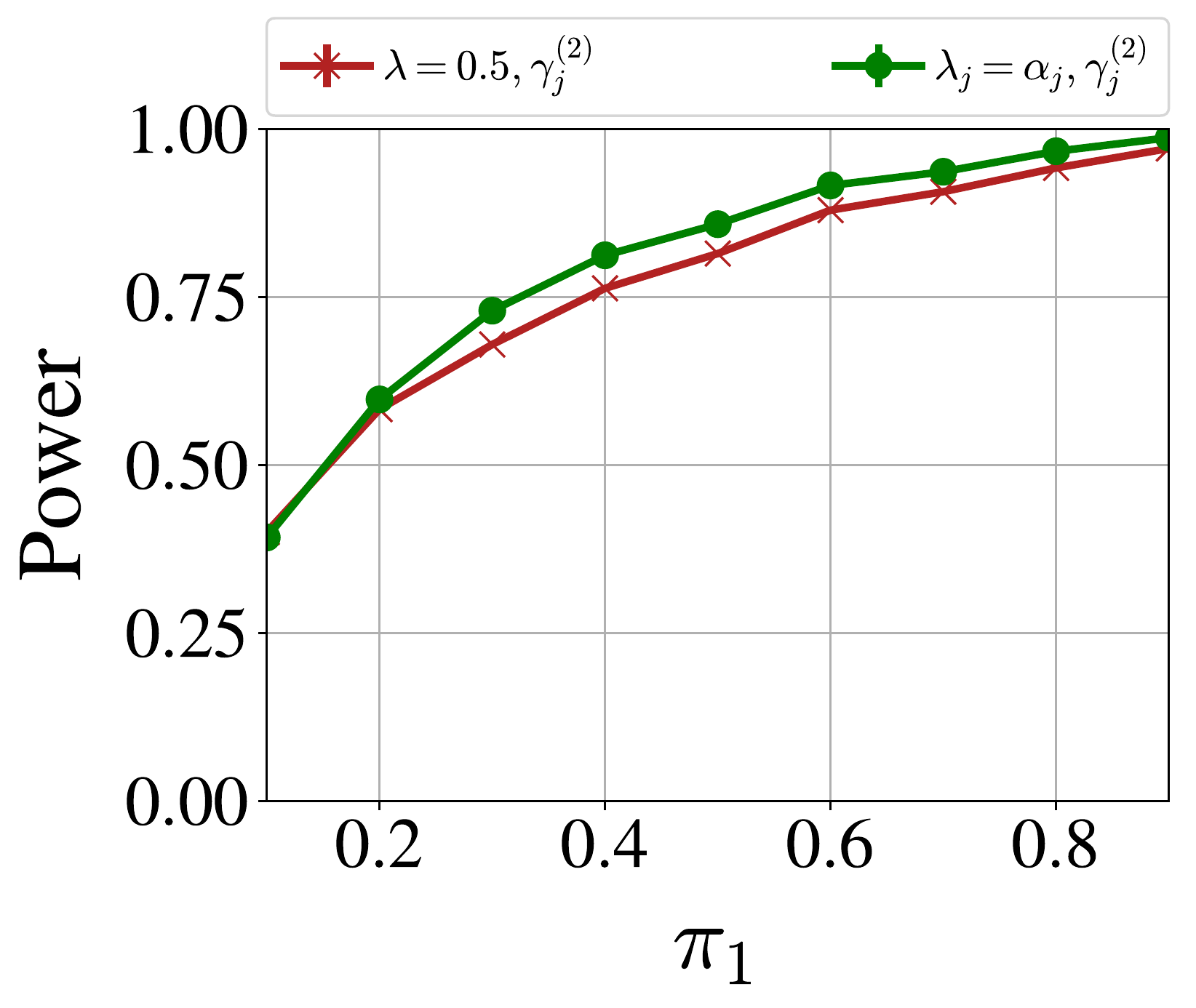}
\includegraphics[width=0.3\textwidth]{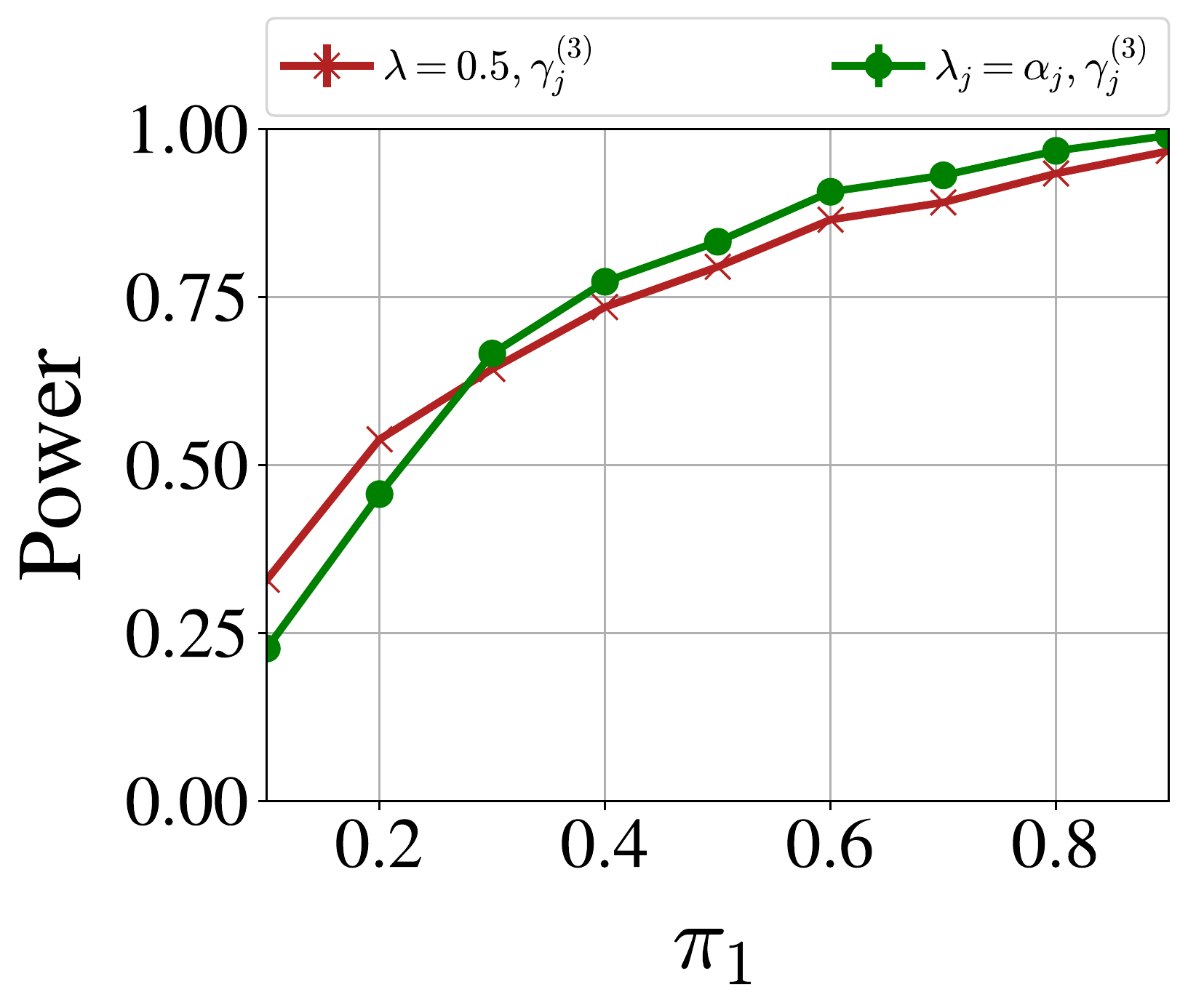}
\end{subfigure}
\begin{subfigure}[b]{\linewidth}
\includegraphics[width=0.3\textwidth]{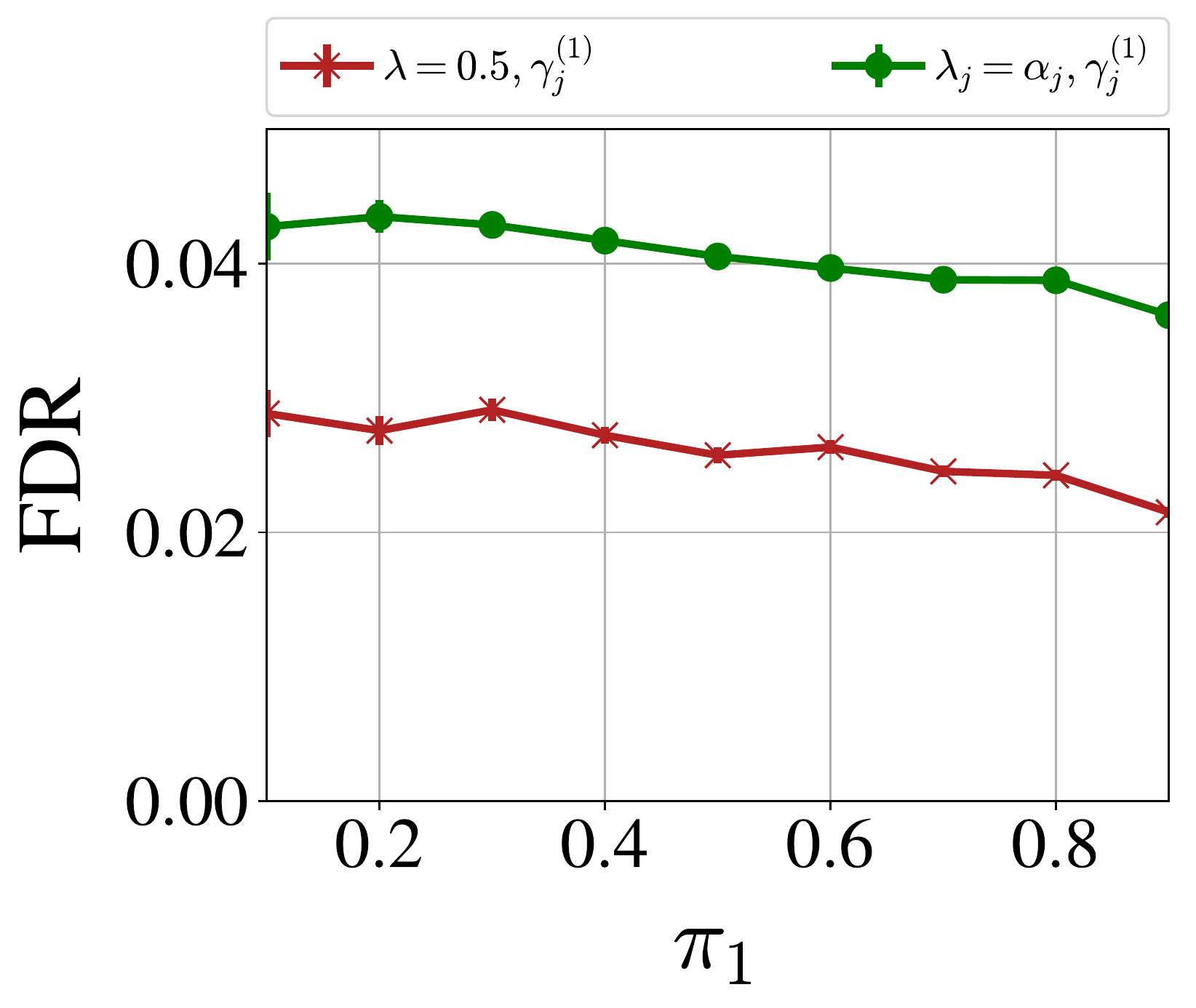}
\includegraphics[width=0.3\textwidth]{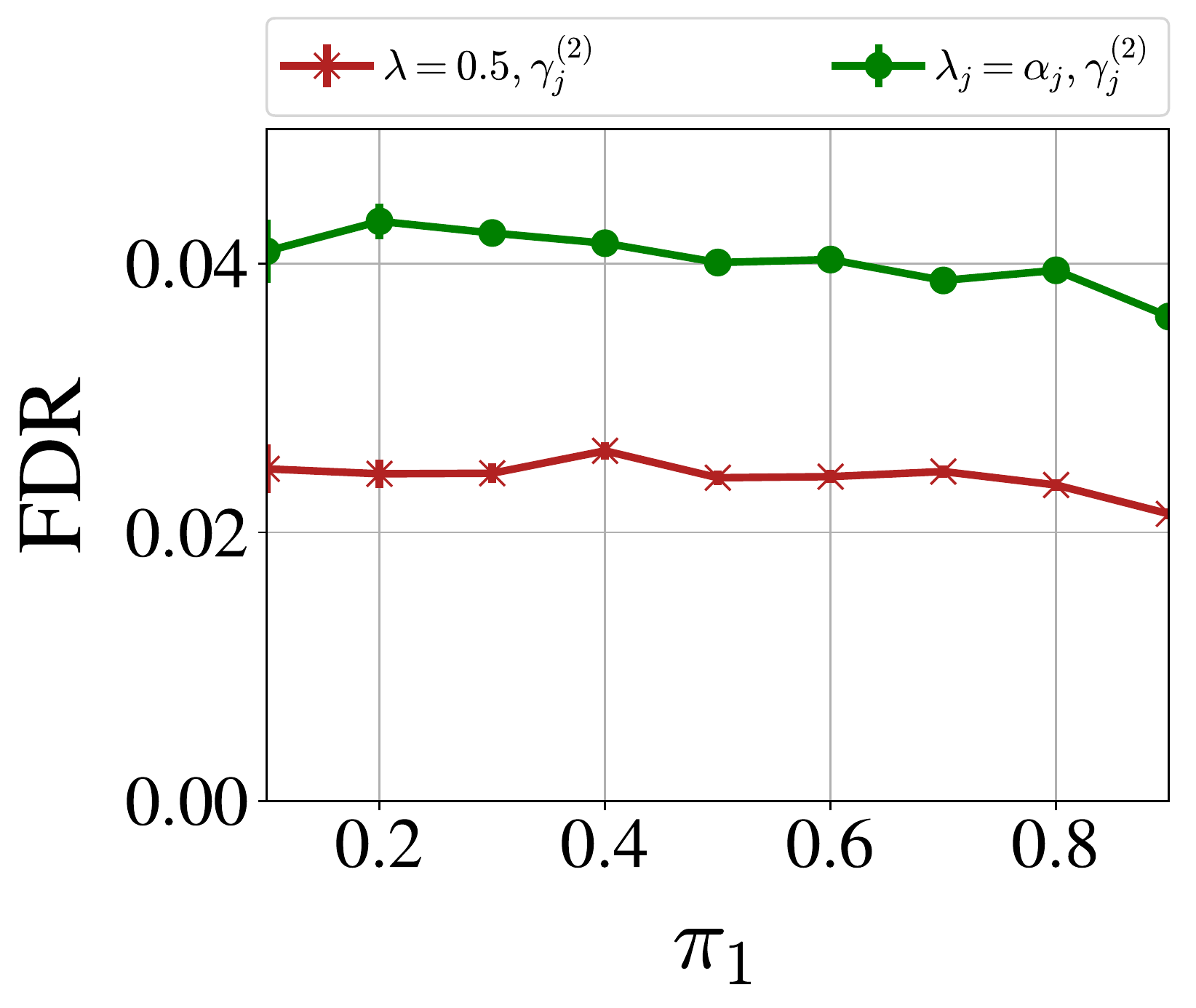}
\includegraphics[width=0.3\textwidth]{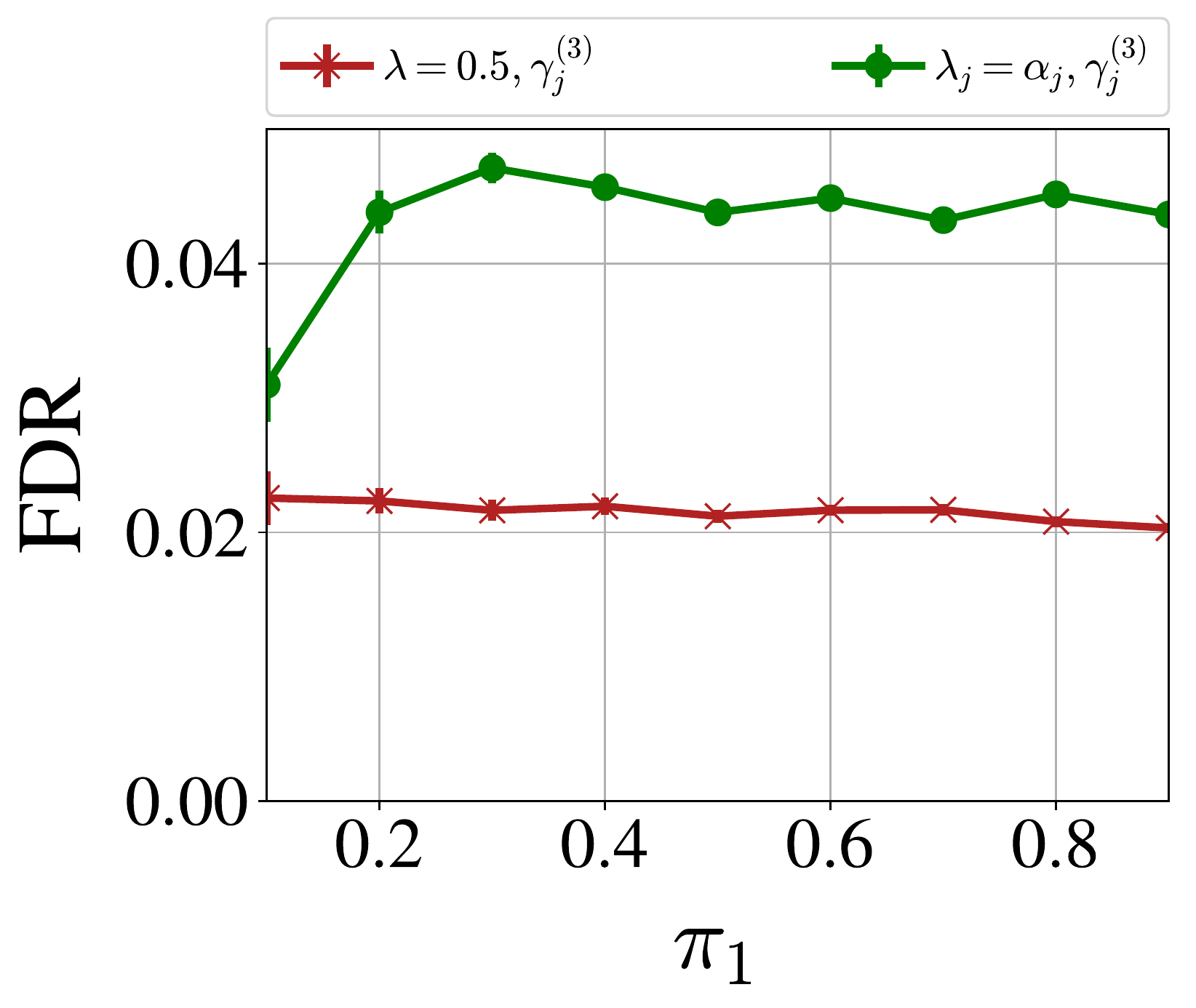}
\end{subfigure}
\caption{Statistical power and FDR versus fraction of non-null
  hypotheses $\pi_1$ for SAFFRON with $\lambda = 1/2$ and SAFFRON with $\lambda_j = \alpha_j$ (at target level $\alpha = 0.05$)
  using three different sequences $\{\gamma_j\}$ of increasing
  aggressiveness. The observations under the alternative are $N(\mu_i,1)$
  with $\mu_i\sim N(3,1)$, and are converted
  into one-sided $p$-values as $P_i=\Phi(-Z_i)$.}
\label{fig:comparisongauss3}
\end{figure}

\begin{figure}[H]
\centering
\begin{subfigure}[b]{\linewidth}
\includegraphics[width=0.3\textwidth]{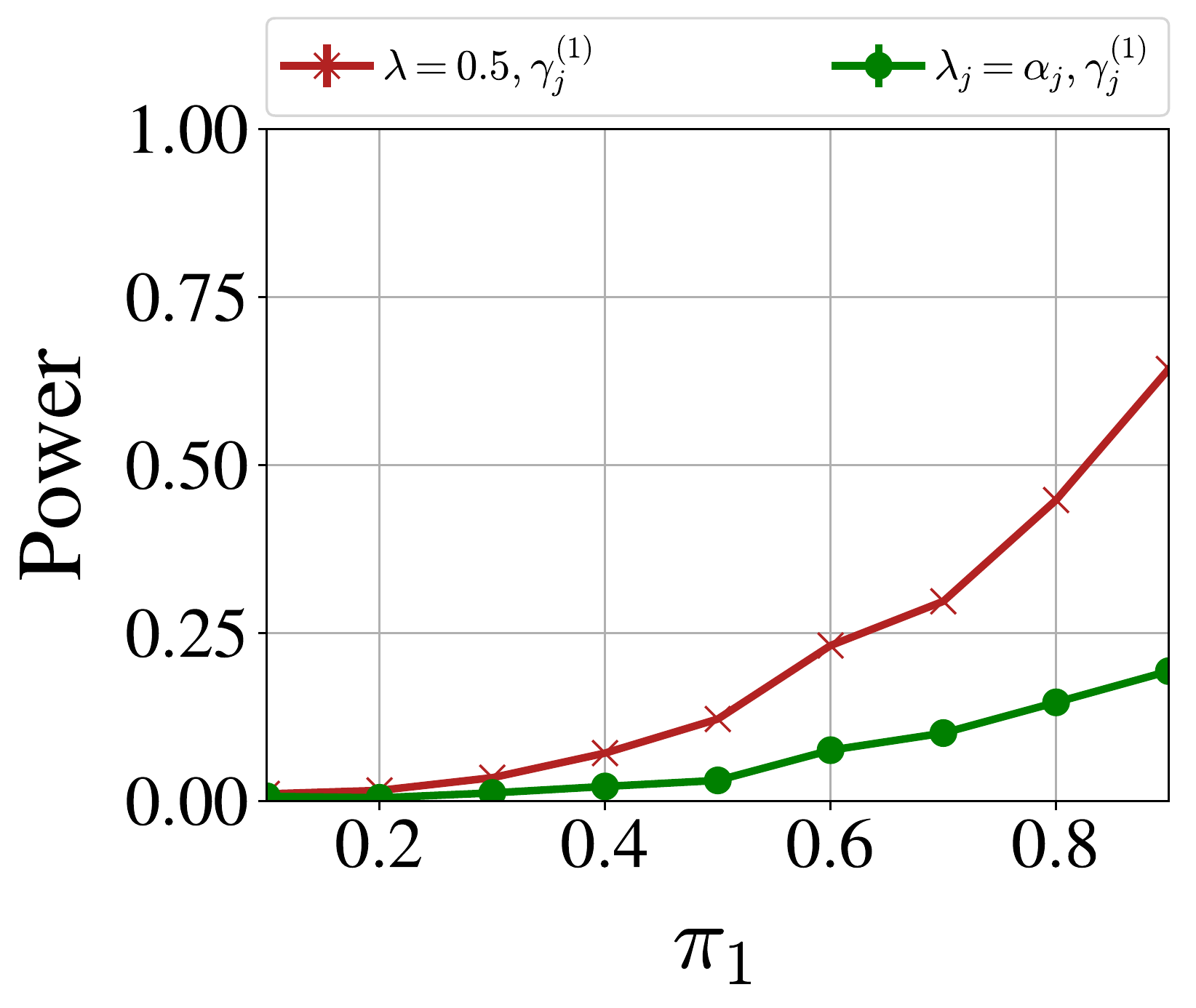}
\includegraphics[width=0.3\textwidth]{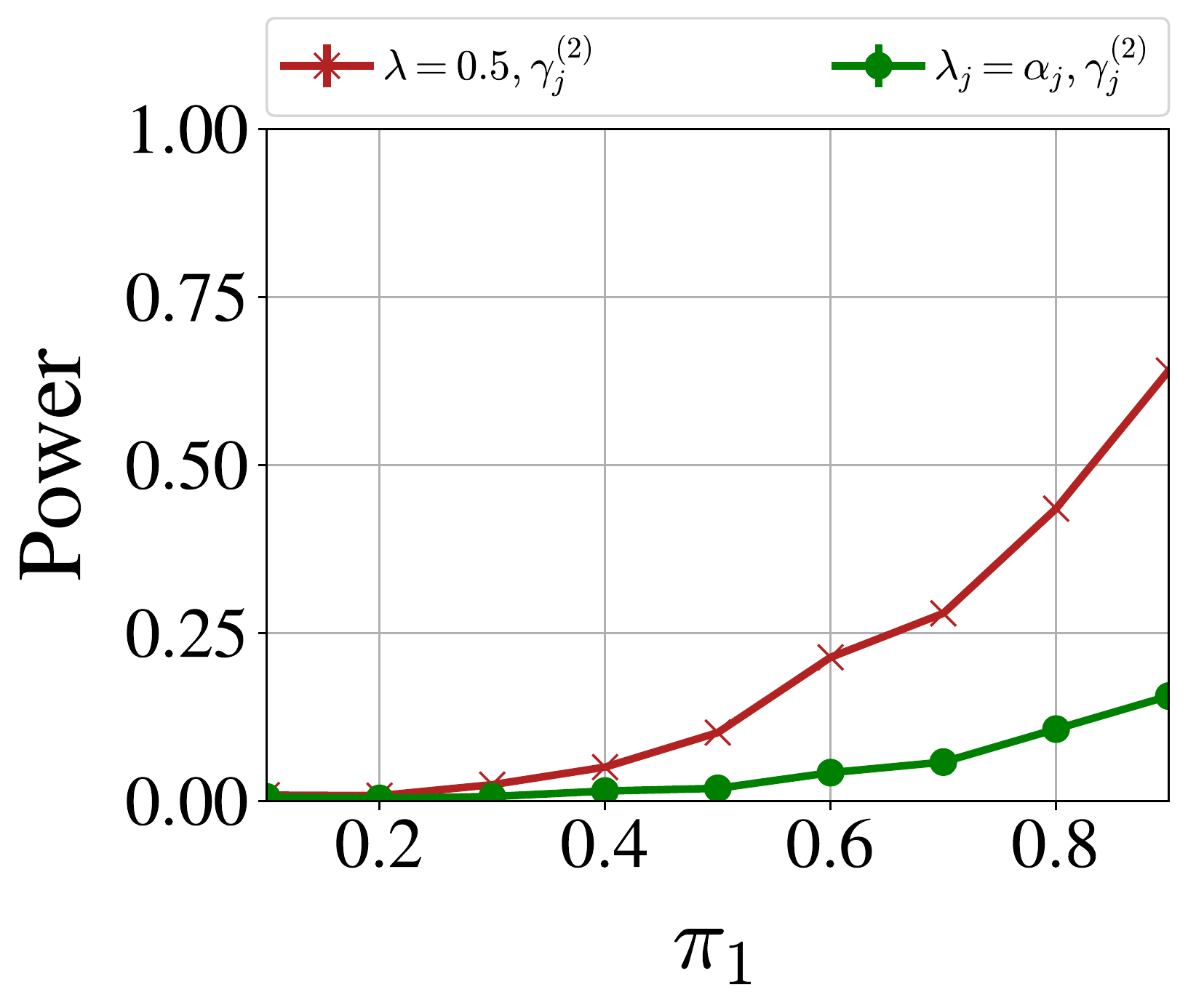}
\includegraphics[width=0.3\textwidth]{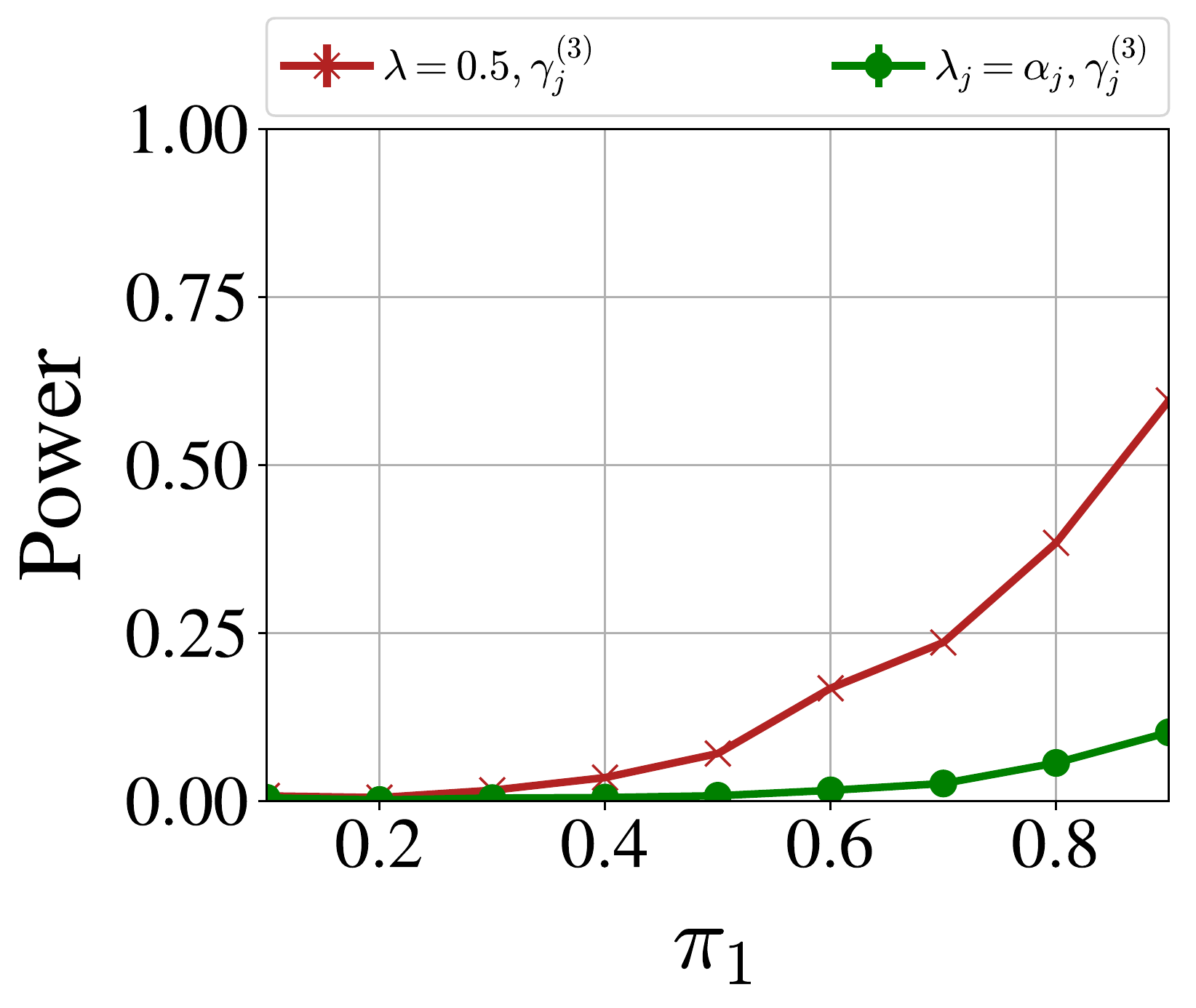}
\end{subfigure}
\begin{subfigure}[b]{\linewidth}
\includegraphics[width=0.3\textwidth]{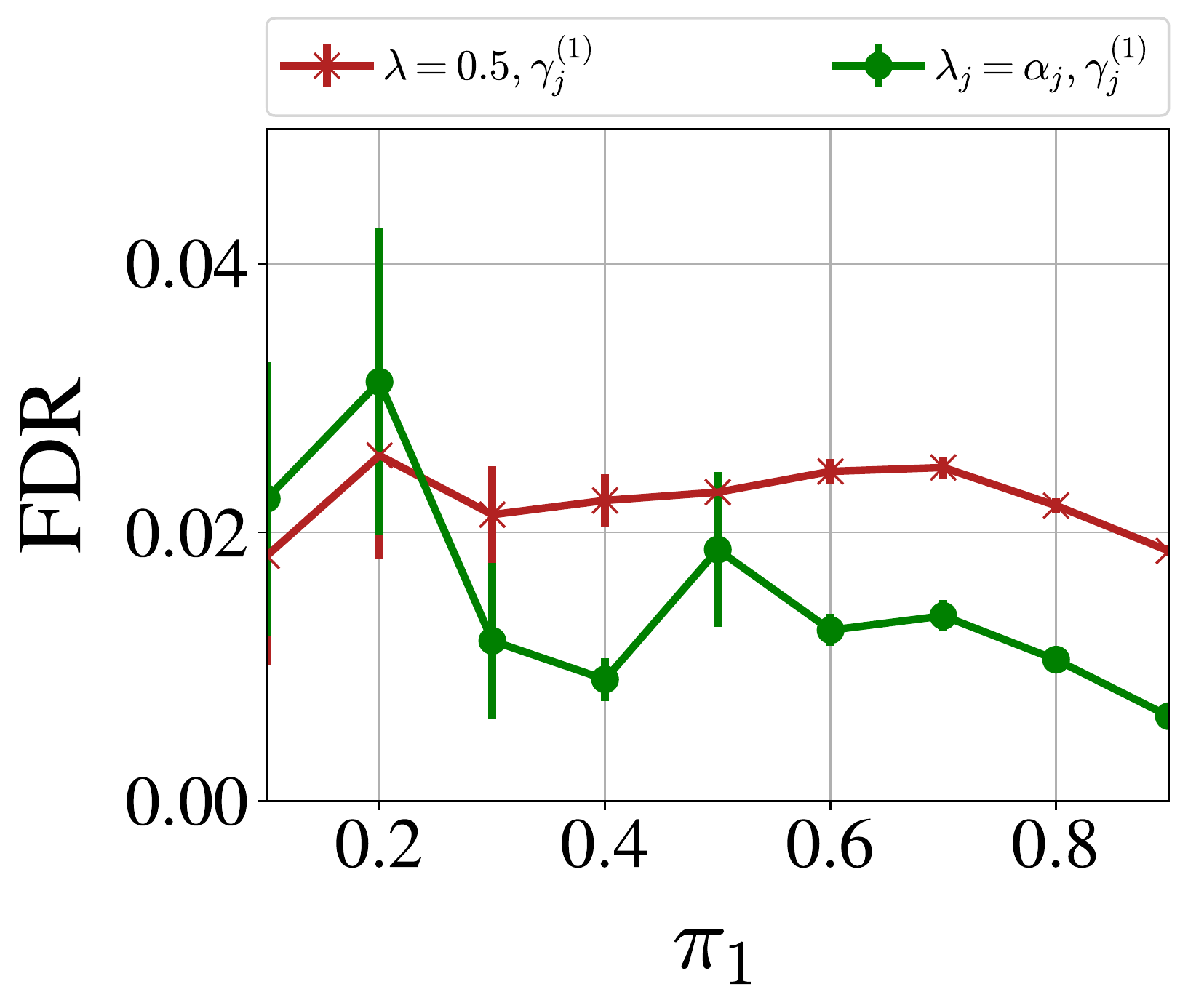}
\includegraphics[width=0.3\textwidth]{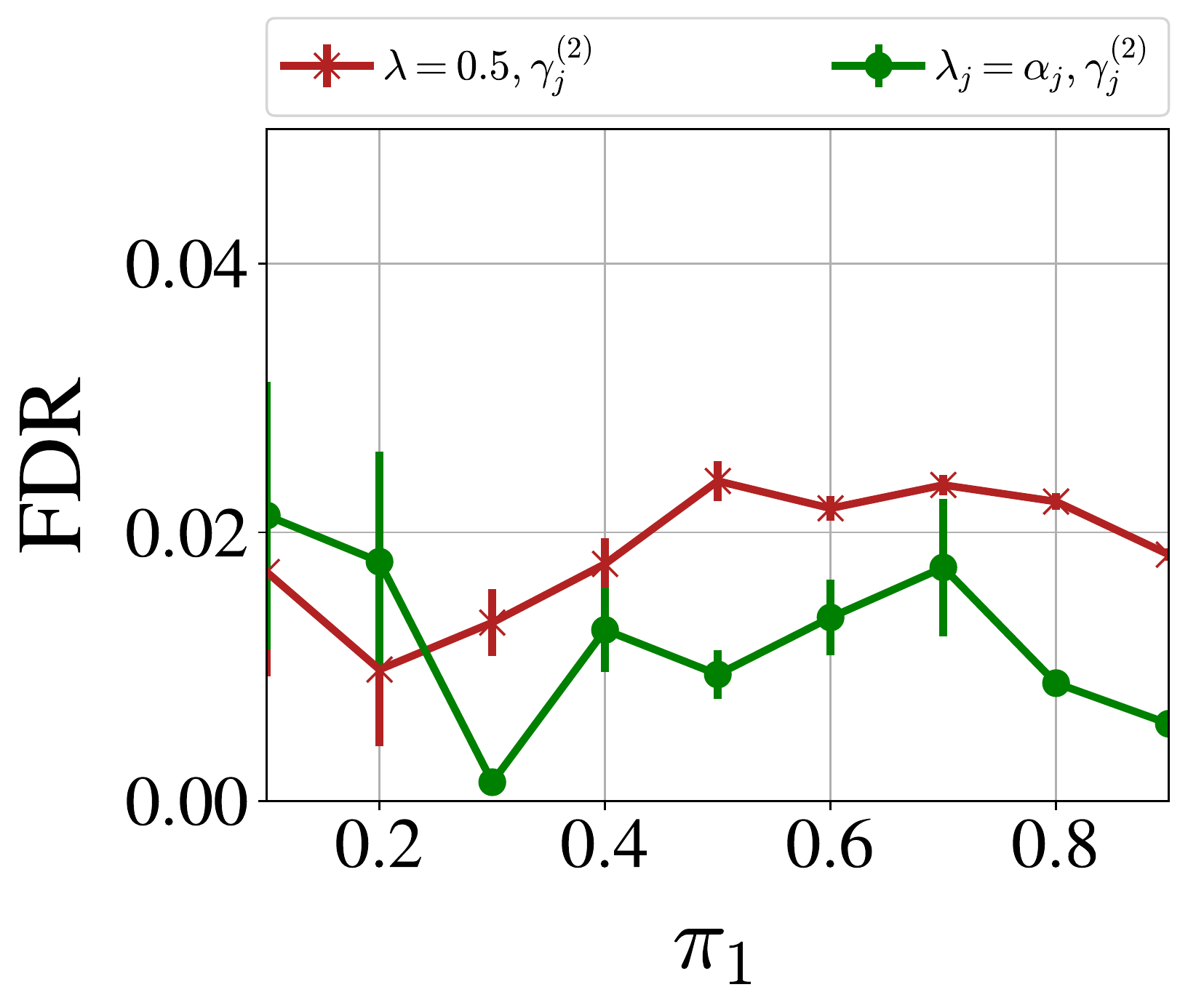}
\includegraphics[width=0.3\textwidth]{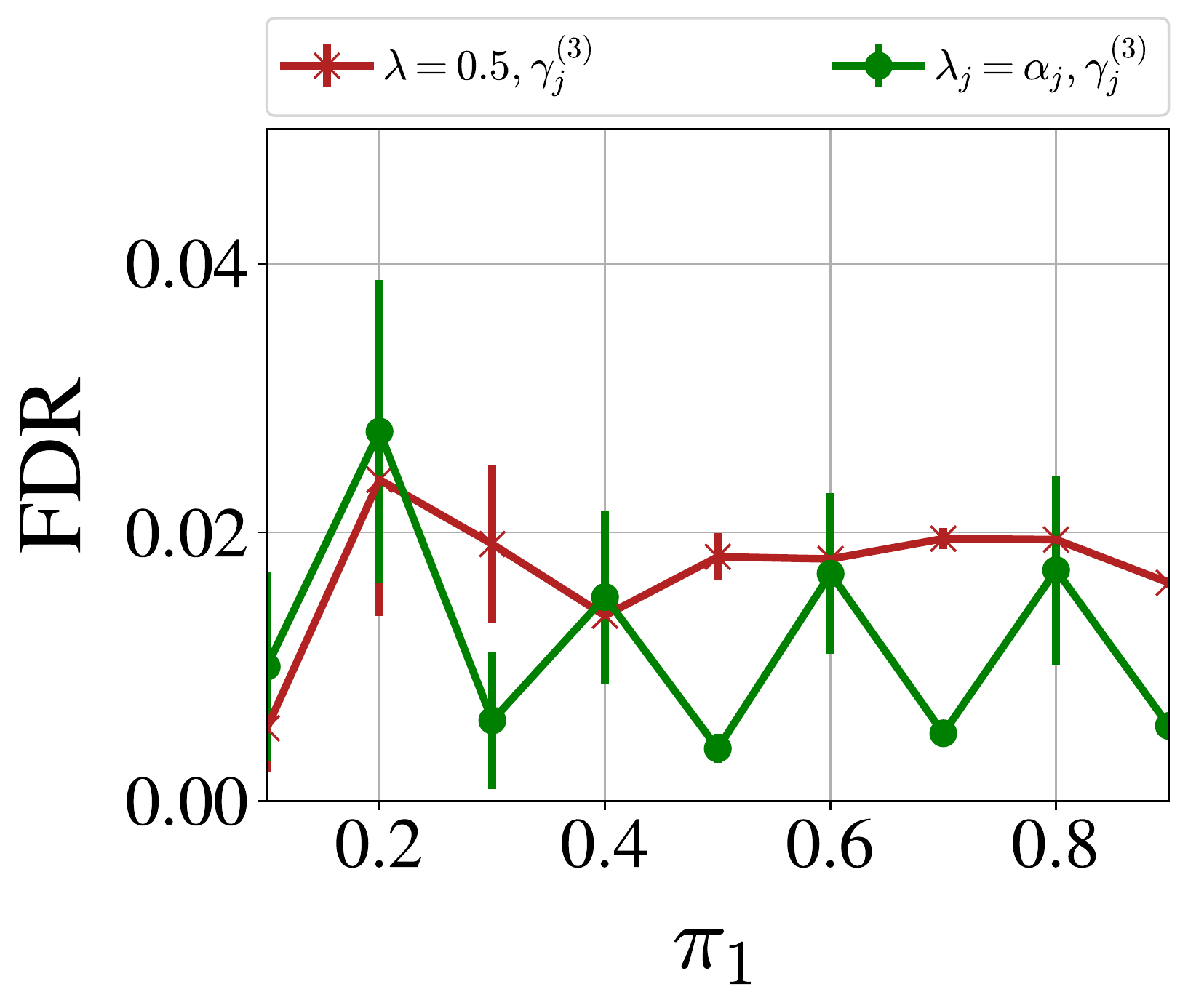}
\end{subfigure}
\caption{Statistical power and FDR versus fraction of non-null
  hypotheses $\pi_1$ for SAFFRON with $\lambda = 1/2$ and SAFFRON with $\lambda_j = \alpha_j$ (at target level $\alpha = 0.05$)
  using three different sequences $\{\gamma_j\}$ of increasing
  aggressiveness. The $p$-values under the alternative are distributed as $\text{Beta}(0.5,5)$.}
\label{fig:comparisonbeta}
\end{figure}


\section{Proofs}
\label{sec:proofs}

Here we provide the proofs of \propref{fdp-oracle} and \thmref{fdp-saff} using the reverse super-uniformity lemma proved in \secref{lemma}, as well as the proof of SAFFRON's monotonicity.

\subsection{Proof of \propref{fdp-oracle}}
\label{sec:proof-prop}

Statement (a) is proved by noting that for any time $t \in \N$, we have:
\begin{align*}
\EE{|\nulls \cap \cR(t)|} &= \sum_{j \leq t, j \in \nulls}
\EE{\One{P_j \leq \alpha_j}} \; \leq \; \sum_{j \leq t, j \in \nulls}
\EE{\alpha_j},
\end{align*}
where the inequality follows after taking iterated expectations by
conditioning on $\F^{j-1}$, and then applying the conditional
super-uniformity property~\eqref{eqn:superuniformity-cond1}. If we have $\fdp^*(t) \defn \frac{1}{|\cR(t)|} \sum\limits_{j\leq t, j \in \nulls} \alpha_j \leq \alpha$, as assumed in statement (b), then it follows that:
\begin{align*}
\sum_{j \leq t, j \in \nulls} \EE{\alpha_j}&=\EE{\sum_{j \leq t, j \in
    \nulls} \alpha_j}\\ &\leq \alpha \EE{|\cR(t)|},
\end{align*}
using linearity of expectation and the assumption on
$\fdp^*(t)$. Using part (a) and rearranging yields the inequality
$\mfdr(t) \defn \frac{\EE{|\nulls \cap \cR(t)|}}{\EE{|\cR(t)|}} \leq
\alpha$, which concludes the proof of part (b).

If, in addition, the null $p$-values are independent of each other and
of the non-nulls and the sequence $\{\alpha_t\}$ is monotone, we can
use the following argument to prove claims (c) and (d).  These claims
establish that the procedure controls the FDR at any time $t \in
\N$. Still assuming the inequality $\fdp^*(t) \leq \alpha$, we have:
\begin{align*}
\fdr(t) & = \EE{\frac{|\nulls \cap \cR(t)|}{|\cR(t)|}} \\
& = \sum_{j \leq t, j \in \nulls} \EE{\dotfrac{\One{P_j \leq
      \alpha_j}}{|\cR(t)|}} \\
& \leq \sum_{j \leq t, j \in \nulls} \EE{\dotfrac{\alpha_j}{|\cR(t)|}}
\\
& = \EE{\fdp^*(t)} \\
& \leq \alpha,
\end{align*}
where the first inequality follows after taking iterated expectations
by conditioning on $\F^{j-1}$, and then applying the super-uniformity
lemma \cite{RYWJ17}, the following equality uses linearity of
expectation, and the final inequality follows by the assumption on
$\fdp^*(t)$. This concludes the proof of both statements (c) and (d).

\subsection{A reverse super-uniformity lemma}
\label{sec:lemma}

The proof of Theorem \ref{thm:fdp-saff} critically depends on the following technical, yet interpretable, lemma.
To provide some context, the reader is encouraged to recall the definition~\eqnref{superuniformity-cond2} of conditional
super-uniformity, as well as its equivalent rephrased
form~\eqnref{superuniformity-cond3}. Lemma~\ref{lem:power} guarantees that for independent
$p$-values, statement~\eqnref{superuniformity-cond3} holds true more generally.

\begin{lemma}
\label{lem:power}
Assume that the $p$-values $P_1,P_2,\dots$ are independent and let $g: \{0,1\}^T \to \R$ be any coordinate-wise non-decreasing function. Further, assume that $\alpha_t = f_t(R_{1:t-1}, C_{1:t-1})$ and $\lambda_t = g_t(R_{1:t-1}, C_{1:t-1})$, for some coordinate-wise non-decreasing functions $f_t$ and $g_t$. Then, for any index $t \leq T$ such that $H_t \in \nulls$, we have:
\begin{align*}
\EEst{ \frac{\alpha_t \One{P_t >
      \lambda_t}}{(1-\lambda_t)g(R_{1:T})}}{\F^{t-1}} &\geq~
\EEst{\frac{\alpha_t}{g(R_{1:T})}}{\F^{t-1}} \\ &\geq~
\EEst{ \frac{ \One{P_t \leq
      \alpha_t}}{g(R_{1:T})}}{\F^{t-1}}.
\end{align*}
\end{lemma}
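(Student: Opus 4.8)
The plan is to condition on $\F^{t-1}$ and reduce everything to the randomness in $P_t$ and the future $p$-values. Once $\F^{t-1}$ is fixed, the thresholds $\alpha_t$ and $\lambda_t$ are constants satisfying $\alpha_t \le \lambda_t$, so the pair $(R_t,C_t)$ can take only the three totally ordered values $(0,0),(0,1),(1,1)$, corresponding to the events $\{P_t > \lambda_t\}$, $\{\alpha_t < P_t \le \lambda_t\}$, and $\{P_t \le \alpha_t\}$. First I would introduce the downstream quantity
\[
G(r,c) \defn \EEst{\frac{1}{g(R_{1:T})}}{\F^{t-1},\, R_t=r,\, C_t=c},
\]
the conditional expectation being taken over $P_{t+1},\dots,P_T$ (I take $g>0$, as in the application $g(R_{1:T})=|\cR(T)|\vee 1$). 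Because the $p$-values are independent, conditioning on the \emph{region} of $P_t$ — equivalently on $(R_t,C_t)$ — does not alter the law of the later $p$-values, so $G(r,c)$ is well defined and does not depend on the exact location of $P_t$ within its region. This is the one place where independence is indispensable.

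The heart of the proof is the monotonicity
\[
G(0,0) \ \ge\ G(0,1) \ \ge\ G(1,1),
\]
which I would establish by coupling. Fix a single realization of $P_{t+1},\dots,P_T$ and run the algorithm from two seeds $(r,c)\le(r',c')$ of $(R_t,C_t)$ that agree on $\F^{t-1}$. A short induction on $s=t+1,\dots,T$ shows that the larger seed yields pointwise larger thresholds: since $\alpha_s=f_s(R_{1:s-1},C_{1:s-1})$ and $\lambda_s=g_s(R_{1:s-1},C_{1:s-1})$ are coordinatewise non-decreasing, larger past indicators force $\alpha_s$ and $\lambda_s$ up, and against the \emph{shared} $P_s$ this forces $R_s$ and $C_s$ up as well. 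Hence the entire vector $R_{1:T}$ is coordinatewise monotone in the seed; applying the coordinatewise non-decreasing $g$ and taking reciprocals reverses the inequality, and averaging over the coupled future gives that $G$ is non-increasing in $(r,c)$. I expect this coupling/induction step to be the main obstacle, since it is exactly where predictability and monotonicity of $\{\alpha_t\},\{\lambda_t\}$ must be combined with independence.

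With the three-event decomposition and this monotonicity in hand, the rest is a brief computation. Writing $p_{00},p_{01},p_{11}$ for the conditional probabilities of the three events, the three expectations in the lemma become
\[
\frac{\alpha_t}{1-\lambda_t}\,p_{00}\,G(0,0), \qquad \alpha_t\big(p_{00}G(0,0)+p_{01}G(0,1)+p_{11}G(1,1)\big), \qquad p_{11}\,G(1,1).
\]
Since $p_{00}+p_{01}+p_{11}=1$, the middle term is $\alpha_t$ times a convex combination, so it lies between $\alpha_t G(1,1)$ and $\alpha_t G(0,0)$. Finally I would invoke super-uniformity of the null $P_t$ conditionally on $\F^{t-1}$ (valid under independence): from $\PPst{P_t\le\lambda_t}{\F^{t-1}}\le\lambda_t$ we get $p_{00}\ge 1-\lambda_t$, hence $\tfrac{p_{00}}{1-\lambda_t}\ge 1$ and the first term is at least $\alpha_t G(0,0)$, dominating the middle; from $\PPst{P_t\le\alpha_t}{\F^{t-1}}\le\alpha_t$ we get $p_{11}\le\alpha_t$, so the middle, being at least $\alpha_t G(1,1)$, dominates the last term $p_{11}G(1,1)$. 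Chaining these two comparisons gives both inequalities of the lemma.
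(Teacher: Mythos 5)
Your proof is correct, and although it looks different on the surface, it shares its engine with the paper's proof; the genuine difference is that yours is self-contained for both inequalities. The paper proves only the first inequality, via a ``hallucination'' device: replace $P_t$ by $1$ to obtain a modified rejection vector $\widetilde{R}^{t\to 1}_{1:T}$, note that on $\{P_t>\lambda_t\}$ the real and hallucinated vectors coincide, apply conditional super-uniformity to the indicator, and close with a monotonicity induction showing $R_i \geq \widetilde{R}_i$ for all $i$, hence $g(R_{1:T}) \geq g(\widetilde{R}^{t\to 1}_{1:T})$ pointwise. Your three-event decomposition re-parametrizes exactly this: your $G(0,0)$ is the conditional expectation of $1/g(\cdot)$ in the paper's hallucinated world (the seed $(R_t,C_t)=(0,0)$), your coupling induction is the paper's monotonicity step verbatim, and your two bounds $p_{00}/(1-\lambda_t)\geq 1$ and $p_{00}G(0,0)+p_{01}G(0,1)+p_{11}G(1,1)\leq G(0,0)$ are precisely the paper's two inequality steps in integrated form. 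What your route buys is that the same machinery, run in mirror image around the seed $(1,1)$ (convex combination $\geq G(1,1)$, together with $p_{11}\leq\alpha_t$), also yields the second inequality, which the paper does not prove but instead cites from earlier super-uniformity lemmas; indeed, your seed-$(1,1)$ argument is exactly the ``hallucinate $P_t\to 0$'' device of that prior work, so your write-up unifies both directions in one computation. Two minor points of care: first, define $G(r,c)$ as the average over $P_{t+1},\dots,P_T$ of $1/g$ along the trajectory generated from seed $(r,c)$ --- which is what your coupling step actually uses --- rather than as a conditional expectation given the event $\{(R_t,C_t)=(r,c)\}$, since that event can have probability zero (e.g.\ $p_{01}=0$ when $\lambda_t=\alpha_t$, the alpha-investing case); second, your explicit positivity assumption $g>0$ is genuinely needed (so that multiplying the $G$'s by probabilities preserves the inequalities), and while the lemma as stated allows any non-decreasing $g:\{0,1\}^T\to\R$, the paper's proof implicitly requires positivity as well, so this is consistent with the intended application $g(R_{1:T})=|\cR(T)|\vee 1$.
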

\begin{proof}
The second inequality is a consequence of super-uniformity lemmas from
past work~\cite{RYWJ17,javanmard2016online}, so we only prove the
first inequality. At a high level, the proof strategy is inverted, and we will hallucinate a vector with one element being set to 1, instead of being set to 0 in the aforementioned works.

Letting $P_{1:T} = (P_1,\ldots,P_T)$ be the
original vector of $p$-values, we define a ``hallucinated'' vector of
$p$-values $\widetilde{P}^{t\to 1}_{1:T} \defn (\widetilde
P_1,\ldots,\widetilde P_T)$ that equals $P_{1:T}$, except that the
$t$-th component is set to one:
\begin{align*}
  \widetilde P_i = \begin{cases} 1 & \mbox{if $i = t$} \\
P_i & \mbox{if $i \neq t$.}
  \end{cases}
\end{align*}
Define hallucinated test levels $\{\widetilde \alpha_i\}$ and candidacy thresholds $\{\widetilde \lambda_i\}$ resulting from $(\widetilde
P_1,\ldots,\widetilde P_T)$, as well as hallucinated candidate and rejection indicators as $\widetilde
C_i = \One{\widetilde P_i \leq \widetilde \lambda_i}$ and $\widetilde R_i =
\One{\widetilde P_i \leq \widetilde \alpha_i}$ respectively.  Let $R_{1:T} = (R_1,\ldots,R_T)$ and
$\widetilde{R}^{t\to 1}_{1:T} = (\widetilde R_1,\ldots, \widetilde
R_T)$ denote the vector of rejections using $P_{1:T}$
and $\widetilde{P}^{t\to 1}_{1:T}$, respectively.  Similarly, let $C_{1:T} =
(C_1,\ldots,C_T)$ and $\widetilde{C}^{t\to 1}_{1:T} = (\widetilde
C_1,\ldots, \widetilde C_T)$ denote the vector of candidates using
$P_{1:T}$ and $\widetilde{P}^{t\to 1}_{1:T}$, respectively.

By construction, we have the following properties:
\begin{enumerate}
\item $\widetilde{R}_i = R_i$ and $\widetilde{C}_i = C_i$ for all $i <
  t$, hence $\alpha_i = \widetilde \alpha_i$ for all $i \leq t$.
 \item $\widetilde{R}_t = \widetilde C_t = 0$, and hence
   $\widetilde{R}_i \leq R_i$ for all $i \geq t$, due to monotonicity
   of the levels $\alpha_i$.
\end{enumerate}
Hence, on the event $\{P_t > \lambda_t\}$, we have $R_t = \widetilde
R_t = 0$ and $C_t = \widetilde C_t =0$, and hence also $R_{1:T} =
\widetilde{R}^{t\to 1}_{1:T}$. This allows us to conclude that:
\begin{align*}
\frac{\alpha_t \One{P_t > \lambda_t}}{(1-\lambda_t)g
  (R_{1:T})} = \frac{\alpha_t \One{P_t >
    \lambda_t}}{(1-\lambda_t)g (\widetilde{R}^{t\to 1}_{1:T})}.
\end{align*}

Since $\widetilde{R}^{t\to 1}_{1:T}$ is independent of $P_t$, we may
take conditional expectations to obtain:
\begin{align*}
\EEst{\frac{\alpha_t \One{P_t >
      \lambda_t}}{(1-\lambda_t)g (R_{1:T})} }{\F^{t-1}} &=
\EEst{\frac{\alpha_t \One{P_t
      >\lambda_t}}{(1-\lambda_t) g (\widetilde{R}^{t\to 1}_{1:T})} }{
  \F^{t-1}} \\
& \geq \EEst{ \frac{ \alpha_t}{g
    (\widetilde{R}^{t\to 1}_{1:T})} }{ \F^{t-1}}
\end{align*}
which follows by taking an expectation only with
respect to $P_t$ by invoking the conditional super-uniformity
property~\eqref{eqn:superuniformity-cond2}, as well as independence of $P_t$ and $\widetilde{R}^{t\to 1}_{1:T}$.

Finally, notice that $R_i \geq \widetilde R_i$ for all $i$. This follows by monotonicity of $\lambda_i$ and $\alpha_i$. In particular, $\alpha_{t+1}\geq \widetilde \alpha_{t+1}$, and $\lambda_{t+1}\geq \widetilde \lambda_{t+1}$ due to $(R_{1:t},C_{1:t})\geq (\widetilde R_{1:t},\widetilde C_{1:t})$, which in turn implies $R_{t+1}\geq \widetilde R_{t+1}, C_{t+1}\geq \widetilde C_{t+1}$, and so on. Recursively we deduce $R_i \geq \widetilde R_i$ for all $i$. Together with the assumption that $g$ is non-decreasing, this implies:
\begin{align*}
    \EEst{ \frac{ \alpha_t}{g
    (\widetilde{R}^{t\to 1}_{1:T})} }{ \F^{t-1}} \geq \EEst{ \frac{ \alpha_t}{g
    (R_{1:T})} }{ \F^{t-1}},
\end{align*}
which concludes the proof of the lemma.
\end{proof}


\subsection{Proof of \thmref{fdp-saff}}
\label{sec:proof-thm}

First note that, for any time $t \in \N$, we have:
\begin{align*}
\EE{|\nulls \cap \cR(t)|} \; = \; \sum_{j \leq t, j \in \nulls}
\EE{\One{P_j \leq \alpha_j}} & \; \stackrel{(i)}{\leq} \sum_{j \leq t, j
  \in \nulls} \EE{\alpha_j} \\
& \; \stackrel{(ii)}{\leq} \EE{\sum_{j \leq t, j \in \nulls} \alpha_j
  \frac{\One{P_j > \lambda_j}}{1-\lambda_j}},
\end{align*}
where inequality (i) first uses the law of iterated expectations by
conditioning on $\F^{j-1}$, and then both (i) and (ii) apply the
conditional super-uniformity
property~\eqref{eqn:superuniformity-cond2}, which concludes the proof
of part (a). To prove part (b), we drop the condition $j\in \nulls$ from the last expectation, and use the assumption that
\mbox{$\fdphat_\lambda(t) \defn \frac{\sum_{j \leq t} \alpha_j
    \frac{\One{P_j > \lambda_j}}{1-\lambda_j}}{|\cR(t)|} \leq \alpha$} to obtain:
\begin{align*}
  \EE{\sum_{j \leq t, j \in \nulls} \alpha_j \frac{\One{P_j >
        \lambda_j}}{1-\lambda_j}}\leq \alpha \EE{|\cR(t)|}.
\end{align*}
Combining this inequality with the result of part (a), and rearranging
the terms, we reach the conclusion that $\mfdr(t)\leq \alpha$, as
desired.

Under the independence and monotonicity assumptions of parts $(c,d)$, we have:
\begin{align*}
\fdr(t) &= \EE{\frac{|\nulls \cap \cR(t)|}{|\cR(t)|}} \\ &= \sum_{j
  \leq t, j \in \nulls} \EE{\dotfrac{\One{P_j \leq
      \alpha_j}}{|\cR(t)|}} \\ &\stackrel{(iii)}{\leq} \sum_{j \leq t,
  j \in \nulls} \EE{\dotfrac{\alpha_j}{|\cR(t)|}} \\
& \stackrel{(iv)}{\leq} \sum_{j \leq t, j \in \nulls} \EE{\dotfrac{
    \alpha_j\One{P_j >\lambda_j}}{(1-\lambda_j)|\cR(t)|}}, \\
\end{align*}
where inequality (iii) first uses iterated
expectations by conditioning on $\F^{j-1}$, and then both (iii) and (iv) apply \lemref{power}. Assuming that the inequality $\fdphat_\lambda(t) \leq \alpha$ holds, it follows that:
\begin{align*}
\sum_{j \leq t, j \in \nulls} \EE{\dotfrac{
    \alpha_j\One{P_j >\lambda_j}}{(1-\lambda_j)|\cR(t)|}}&\stackrel{(v)}{\leq} \EE{\dotfrac{\sum_{j \leq t} \alpha_j\One{P_j
      >\lambda_j}}{(1-\lambda_j)|\cR(t)|}} \\
& \stackrel{(vi)}{=} \EE{\fdphat_\lambda(t)} \\
& \stackrel{(vii)}{\leq} \alpha,
\end{align*}
where inequality (v) follows by linearity of expectation and
summing over a larger set of indices; equality (vi) simply uses the
definition of $\fdphat_\lambda(t)$, and inequality (vii) follows by
the assumption, hence proving parts (c,d).

\subsection{Proof of \lemref{monotone}}

We prove that the update rule (6-7) guarantees that $\alpha_t$ and $\lambda_t$ are monotone, given that $\lambda_t = h_t(\alpha_t)$, for some non-decreasing function $h_t$ such that $h_t(x)\geq x$. Recall that monotonicity means that $\alpha_t$ and $\lambda_t$ can be written as non-decreasing functions of the vector of rejections and candidates, i.e. $\alpha_t = f_t(R_{1:t-1},C_{1:t-1}), \lambda_t = g_t(R_{1:t-1},C_{1:t-1})$, for some coordinate-wise non-decreasing functions $f_t$ and $g_t$. 

Notice that, if we prove monotonicity of $\alpha_t$, monotonicity of $\lambda_t$ is immediate. This follows because $g_t = h_t \circ f_t$ must be non-decreasing as a composition of two non-decreasing functions.

Consider some $(R_{1:t-1},C_{1:t-1})\in\{0,1\}^{2(t-1)}$ and $(\widetilde{R}_{1:t-1},\widetilde{C}_{1:t-1})\in\{0,1\}^{2(t-1)}$, such that
$$(R_{1:t-1},C_{1:t-1})\geq (\widetilde{R}_{1:t-1},\widetilde{C}_{1:t-1})$$
coordinate-wise. Denote $q_t(x) = \frac{x}{1-h_t(x)}$, for $x\in(0,1)$. Notice that $q_t$ is strictly increasing, because, for any $x_1,x_2\in(0,1)$ such that $x_2>x_1$:
$$\frac{x_2}{1-g_t(x_2)} - \frac{x_1}{1-g_t(x_1)} = \frac{x_2(1-g_t(x_1)) - x_1(1-g_t(x_2))}{(1-g_t(x_1))(1-g_t(x_2))}>0,$$
which follows because $g_t(x)\in(0,1)$ and $g_t(x_1)\leq g_t(x_2)$, so $x_2(1-g_t(x_1)) - x_1(1-g_t(x_2))>0$. Therefore, we can conclude that $q_t$ is invertible.

Let $\alpha_t$ denote the test level computed from $(R_{1:t-1},C_{1:t-1})$, and let $\widetilde \alpha_t$ denote the test level computed from $(\widetilde R_{1:t-1},\widetilde C_{1:t-1})$. In a similar fashion we distinguish between other SAFFRON parameters resulting from the two sequences of rejection and candidate indicators. Recall the proposed SAFFRON update rule for the test level:
\begin{align*}
\alpha_t = q_t^{-1}\left(W_0\gamma_{t - C_{0+}} + (\alpha -
W_0) \gamma_{t-\tau_1 - C_{1+}} + \sum_{j \geq 2} \alpha
\gamma_{t-\tau_j - C_{j+}}\right),
\end{align*}
and denote $s_t(R_{1:t-1},C_{1:t-1}) = W_0\gamma_{t - C_{0+}} + (\alpha -
W_0) \gamma_{t-\tau_1 - C_{1+}} + \sum_{j \geq 2} \alpha
\gamma_{t-\tau_j - C_{j+}}$.

First, we claim $s_t(R_{1:t-1},C_{1:t-1})\geq s_t(\widetilde R_{1:t-1},\widetilde C_{1:t-1})$; we prove this via a term-by-term comparison. Because $\widetilde C_j = 1$ implies $C_j = 1$, we have $C_{0+} \geq \widetilde C_{0+}$, hence $W_0\gamma_{t-C_{0+}}\geq W_0\gamma_{t-\widetilde C_{0+}}$ due to $\{\gamma_t\}$ being non-increasing. Next, we claim that for every term in the sum $(\alpha -
W_0) \gamma_{t-\widetilde \tau_1 - \widetilde C_{1+}} + \sum_{j \geq 2} \alpha
\gamma_{t-\widetilde \tau_j - \widetilde C_{j+}}$, there is a unique corresponding term in $(\alpha -
W_0) \gamma_{t-\tau_1 - C_{1+}} + \sum_{j \geq 2} \alpha
\gamma_{t-\tau_j - C_{j+}}$ that is at least as big. For example, take $(\alpha-W_0)\gamma_{t-\widetilde \tau_1 - \widetilde C_{1+}}$. Then, $\widetilde R_{\widetilde \tau_1} = 1$ implies $R_{\widetilde \tau_1} = 1$, which means there is a corresponding term in $s_t(R_{1:t-1},C_{1:t-1})$ that is either $(\alpha-W_0)\gamma_{t-\widetilde \tau_1 - \sum_{j=\widetilde \tau_1+1}^{t-1} C_j}$ or $\alpha\gamma_{t-\widetilde \tau_1 - \sum_{j=\widetilde \tau_1+1}^{t-1} C_j}$, both of which dominate $(\alpha-W_0)\gamma_{t-\widetilde \tau_1 - \widetilde C_{1+}}$, because $\widetilde C_j = 1$ implies $C_j = 1$. By the same analysis, for every term $\alpha
\gamma_{t-\widetilde \tau_j - \widetilde C_{j+}}$, there is a corresponding dominating term in $s_t(R_{1:t-1},C_{1:t-1})$. Note that we are implicitly using the fact that $\tau_1 \leq \widetilde \tau_1$, hence for every contribution that has an $\alpha$ multiplier in $s_t(\widetilde R_{1:t-1},\widetilde C_{1:t-1})$, there is a corresponding contribution also with an $\alpha$ multiplier in $s_t(R_{1:t-1}, C_{1:t-1})$, as opposed to $\alpha-W_0$.

Second, because the inverse of an increasing function is also increasing, we have
$$q_t^{-1}(s_t(R_{1:t-1},C_{1:t-1}))\geq q_t^{-1}(s_t(\widetilde R_{1:t-1},\widetilde C_{1:t-1})),$$
or equivalently $\alpha_t \geq \widetilde \alpha_t$. As a conclusion, the test levels $\alpha_t$ are coordinate-wise non-decreasing in $(R_{1:t-1},C_{1:t-1})$, and as a consequence $\lambda_t$ is likewise coordinate-wise non-decreasing in $(R_{1:t-1},C_{1:t-1})$.


\section{Conclusion}
\label{sec:disc}

This paper introduces SAFFRON, a new algorithmic framework for online mFDR and FDR control. We show empirically that SAFFRON is more powerful than existing algorithms. The derivation and proof of SAFFRON is based on a novel reverse super-uniformity lemma that allows us  to estimate the fraction of alpha-wealth that an algorithm spends on testing null hypotheses. One may interpret SAFFRON as an adaptive version of LORD, just as Storey-BH is an adaptive version of the Benjamini-Hochberg algorithm. Also, a monotone version of the alpha-investing algorithm that is often more stable and powerful than the original, is recovered as a special case of the SAFFRON framework (and hence it controls FDR, not just mFDR like the original). Lastly, the derivation of SAFFRON is rather different from that of earlier generalized alpha-investing (GAI) algorithms, and as such provides a template for the derivation of new algorithms.

\bibliography{FDR} \bibliographystyle{plainnat}

\end{document}